\newtheorem{thm}{Theorem}
\newtheorem{prop}{Proposition}
\newtheorem{conj}{Conjecture}
\newtheorem*{theorem*}{Theorem}
\theoremstyle{remark}
\newtheorem{remark}{Remark}
\newcolumntype{C}[1]{>{\Centering}m{#1}}
\newcounter{lastnote}
\title{Using the xr package on Overleaf}
\newcommand*{\addFileDependency}[1]{% argument=file name and extension
\typeout{(#1)}% latexmk will find this if $recorder=0
% however, in that case, it will ignore #1 if it is a .aux or 
% .pdf file etc and it exists! If it doesn't exist, it will appear 
% in the list of dependents regardless)
%
% Write the following if you want it to appear in \listfiles 
% --- although not really necessary and latexmk doesn't use this
%
\@addtofilelist{#1}
%
% latexmk will find this message if #1 doesn't exist (yet)
\IfFileExists{#1}{}{\typeout{No file #1.}}
}\makeatother
\newcommand*{\myexternaldocument}[1]{%
\externaldocument{#1}%
\addFileDependency{#1.tex}%
\addFileDependency{#1.aux}%
}
\begin{document}

%\maketitle
\vspace{10pt}
\setlength{\marginparwidth}{0pt}
\setlength{\marginparsep}{0pt}
\marginparwidth = 10pt

\begin{center}

{\Large \bf Epidemic thresholds and disease dynamics in metapopulations: the role of network geometry and human mobility.} 

\vspace{.25cm}
 
{\large Haridas K. Das}\footnote[1]{Department of Mathematics, Oklahoma State University, Stillwater, OK 74078, e-mail: haridas.das$@$okstate.edu},\footnote[2]{Department of Mathematics, Dhaka University, Dhaka 1000, Bangladesh, e-mail: hkdas$\_$math$@$du.ac.bd}
{\large and Lucas M. Stolerman}\footnote[3]{Department of Mathematics, Oklahoma State University, Stillwater, OK 74078, e-mail:lucas.martins$\_$stolerman$@$okstate.edu}
\vspace{.03cm}
\end{center}

\parskip=8pt

\begin{abstract} 
We calculate epidemic thresholds and investigate the dynamics of a disease in a networked metapopulation model. To study the specific role of mobility levels and network geometry, we utilize the SIR-Network model and consider a range of geometric structures. For \emph{star-shaped} networks where all nodes only connect to a center, we obtain the same epidemic threshold formula as previously found for fully connected networks in the case where all nodes have the same infection rate except one. Next, we analyze \emph{cycle-shaped} networks that yield different epidemic thresholds than star-shaped ones. We then analyze more general classes of networks by combining the star, cycle, and other structures, obtaining classes of networks with the same epidemic threshold formulas. We present some conjectures on even more flexible networks and complete our analysis by presenting simulations to explore the epidemic dynamics for the different geometries.

\end{abstract}

%\tableofcontents

\vspace{.25cm}

\section{Introduction}

% paragraph 1: overview of epidemics as a major global public health issue and the relevant role of human mobility

Epidemic transmission has become a significant public health concern in the twenty-first century, with the recent COVID-19 pandemic servings as a noteworthy example \cite{who_coronavirus_website}. It is widely acknowledged that globalization and environmental factors have led to numerous infectious disease outbreaks \cite{jones2008global,morse2001factors,Quinn_supriya}. A recent study from Piret and Boivin \cite{Piret_Boivin} provided a comprehensive overview of pandemics throughout history, noting that many of them occurred as a result of increased global activities. Human mobility, in particular, is recognized to play a significant role in shaping epidemic dynamics of respiratory \cite{charu2017human,merler2010role}, vector-borne \cite{adams2009man,wesolowski2015impact} and zoonotic diseases  \cite{kramer2016spatial}. The complex interplay between disease transmission and contact patterns at multiple levels, from day-to-day local interactions to international travels, has been a topic of intense study in the past decades \cite{lloyd2005superspreading,bloom2019infectious,tatem2006global}.

\vspace{0.10cm}

% paragraph 2: compartmental network models: historical perspective
Historically, mathematical models have been used to explore the impacts of human mobility in the dynamics of infectious diseases \cite{viboud2006synchrony,ferguson2005strategies,colizza2006modeling}. The special class of \emph{metapopulation} (or patch) models can include different subpopulation sizes, movement rates, and location-specific parameters \cite{Sattenspiel1995,keeling2005,arino2003,fulford2002metapopulation,calvetti2020metapopulation}. In those models, the disease is assumed to spread in each subpopulation, which can be divided into disease-related states (susceptible, exposed, infected, among others). Sattenspiel and Dietz \cite{Sattenspiel1995} introduced a metapopulation demographic model with two distinct mobility patterns and studied generalizations to describe a measles epidemic on the Caribbean island of Dominica.  Keeling et al. \cite{keeling2005} investigated the role of individual identity and its impact on epidemic dynamics. Other important contributions in the field were made by Arino \& Van Den Driessche \cite{arino2003} and Fulford et al. \cite{fulford2002metapopulation}. More recently, the SIR-network model was introduced in the context of dengue fever epidemics \cite{lucas_dengue_model}. The simple structure of the SIR-Network model equations has been used in a range of generalized systems, including models for  COVID-19 epidemics with hospitalization and lockdown strategies \cite{della2020network}, vaccination behavior \cite{chang2019effects,chang2020impact}, and spatially structured social dynamics \cite{harding2020population}.

Epidemic thresholds for infectious diseases are typically calculated with the basic reproduction number $R_0$, the averaged cases generated by one infected individual in a susceptible population \cite{dietz1993estimation,delamater2019complexity}. In the past, several studies have attempted to estimate $R_0$ for metapopulation models through the next-generation matrix approach  \cite{delamater2019complexity,diekmann2010construction,diekmann2000mathematical}. Fulford, Roberts, and Heesterbeek  provided a methodology to compute $R_0$ for  Tuberculosis in Possums \cite{fulford2002metapopulation}. Arino and Van Den Driessche \cite{arino2003} proposed a general multi-city SIS model and computed bounds for $R_0$ with explicit formulas in particular cases. Recently, more elaborated estimates of $R_0$ were proposed for vector-borne diseases \cite{iggidr2016dynamics,tocto2021effect}.  Despite the efforts to characterize epidemic thresholds for disease-specific models with traditional methods, little attention has been paid to the impact of the metapopulation network geometry on these thresholds. For example, urban environments are intricately connected through heterogeneous movement patterns, which can be described as complex networks of transportation flow. These networks are far from equally connected: they have big hubs, poorly connected nodes, and other nuanced connections that can significantly impact the spread of infectious diseases. Understanding the relationship between network geometry and epidemic dynamics is thus critical for effective disease control and prevention efforts.

In this work, we calculated epidemic thresholds in a metapopulation model to identify how the structure of connections between nodes and the intensity of human movement may control or promote outbreaks. To this end, we built upon previous work and established epidemic thresholds on various network geometries using the SIR-network model \cite{lucas_dengue_model}. The model describes the dynamics of a disease in a city, where people travel between neighborhoods. Each neighborhood is represented by a node of a network and contains a population divided into susceptible, infected, and recovered classes.  To extend previous results on fully connected networks, we assume that all nodes have the same infection rate except one, which we call a \emph{heterogeneous} node.  Our main finding is the discovery of a class of networks with the same epidemic thresholds as the fully connected network, in which the movement of people is critical for epidemic control. Such class is led by \emph{star-shaped} networks with the heterogeneous node in the center. In contrast, \emph{cycle} networks exhibit different epidemic thresholds and thus are not part of the class.  We also performed numerical simulations to complement our theoretical analysis and  found significant differences between the disease dynamics depending on the network structure. In particular, we observed how poorly connected networks promoted delayed and damped epidemic peaks on nodes that are distant from the heterogeneous nodes, while fully connected and star-shapes exhibited more synchronized outbreaks.

The paper is organized as follows. First, we briefly review the SIR-network model assumptions (Section 2) and the previously established epidemic thresholds for fully connected networks \cite{lucas_dengue_model} (Section 3). In section 4, we introduce the new theoretical estimates for \emph{star-shaped} networks, \emph{cycle} networks, and the extended combinations of star-shaped and cycle  (\emph{cycle-support} and \emph{star-cycle} networks). We further establish more thresholds for other networks such as \emph{star-triangle} and \emph{star-background} networks. We complete our analysis with two conjectures about thresholds for the most flexible network structures.  In Section 5, we perform numerical simulations to complement our theoretical results and investigate the epidemic dynamics depending on the number of nodes (network size) and geometry. Finally, in Section 6 we discuss our theoretical and numerical findings and their practical importance.

\section{The SIR-network model} \label{The SIR-network model}

In the SIR-network model \cite{lucas_dengue_model}, it is assumed that a city is divided into $M$ neighborhoods, which are the nodes of a network represented by the set $V= \{1, 2, \hdots, M \}$. The weighted edges of the network are the fractions $\phi_{ij} \in [0,1]$  of residents moving daily between nodes $i$ and $j$.  The geometry of the network model can thus be set by the \textit{flux matrix} $ \phi_{M \times M} = [\phi_{ij}]$, where $(i,j) \in V \times V$, or explicitly 
\begin{equation}\label{flux.matrix}
    \phi_{M \times M}= \begin{pmatrix}
   \phi_{11}  & \phi_{12} & \phi_{13} & \hdots & \phi_{1M}\\
   \phi_{21}  & \phi_{22} & \phi_{23} &  \hdots & \phi_{2n}\\
   \vdots  & \vdots & \vdots & \ddots & \vdots\\
      \phi_{M1}  & \phi_{M2} & \phi_{M3} & \hdots & \phi_{MM}
           \end{pmatrix}.
    \end{equation}
The flux matrix is stochastic by assumption since the population residents at each node $i \in V$ is conserved; hence the fractions $\phi_{ij}$ (also called fluxes) satisfy the following criterion:

\begin{equation*}
    \sum_{j=1}^{M} \phi_{ij}=1. 
\end{equation*}

The populations of susceptible, infected, and removed individuals at node $i \in V$ and time $t$ are given by $S_i(t), I_i(t),$ and $R_i(t)$ respectively. Each node $i$ has a total population $N_i$ and its transmission rate $\beta_i$, which can be related to local transmission probabilities and/or contact patterns driven by environmental factors, and human behavior, among others. The model equations are given by

\begin{equation} \label{SIRnetwork}
\begin{aligned}
    & \Dot{S_i}(t)=- \sum_{j=1}^{M} \sum_{k=1}^{M} \beta_j \phi_{ij}  S_{i}  \frac{\phi_{kj}I_k}{N_{j}^{p}}\\  
  & \Dot{I_i}(t)= \sum_{j=1}^{M} \sum_{k=1}^{M} \beta_j \phi_{ij}  S_{i}  \frac{\phi_{kj}I_k}{N_{j}^{p}}-\gamma I_{i}\\
&    \Dot{R_i}(t)= \gamma I_{i},
\end{aligned}
\end{equation}
where $N_j^p= \sum_{k=1}^M\phi_{kj}N_k$ is the \emph{present} population at node $j$ and $\gamma>0$ is the recovery rate for the disease.  The basic reproduction number $R_0$ is the largest eigenvalue of the next generation matrix (NGM) \cite{diekmann2010construction, diekmann2000mathematical}, which in this case can be calculated as 

\begin{equation}\label{NGMFormulya}
    \kappa= \frac{1}{\gamma} \begin{pmatrix}
   \sum_{j=1}^M \beta_j\phi_{1j}\phi_{1j} \frac{N_1}{N_j^p}   &    \sum_{j=1}^M \beta_j\phi_{1j}\phi_{2j} \frac{N_1}{N_j^p} &  ...      &  \sum_{j=1}^M \beta_j\phi_{1j}\phi_{Mj} \frac{N_1}{N_j^p} \\
   \sum_{j=1}^M \beta_j\phi_{2j}\phi_{1j} \frac{N_2}{N_j^p}   & \sum_{j=1}^M \beta_j\phi_{2j}\phi_{2j} \frac{N_2}{N_j^p} &  ...      & \sum_{j=1}^M \beta_j\phi_{2j}\phi_{Mj} \frac{N_2}{N_j^p} \\
    \vdots & \vdots & \ddots & \vdots\\ 
    \sum_{j=1}^M \beta_j\phi_{Mj}\phi_{1j} \frac{N_M}{N_j^p}   &      \sum_{j=1}^M \beta_j\phi_{Mj}\phi_{2j} \frac{N_M}{N_j^p} &  ...      &  \sum_{j=1}^M \beta_j\phi_{Mj}\phi_{Mj} \frac{N_M}{N_j^p} \\
        \end{pmatrix}.
    \end{equation}

In the special case where the flux matrix is symmetric ($\phi_{ij}=\phi_{ji}$ for all $i$ and $j$ in $V$) and the populations at each node are equal ($N_i=N>0$ for all $i \in V$), we obtain the simplified present population

$$ N_j^p= \sum_{k=1}^M\phi_{kj}N_k=\left(\sum_{k=1}^M\phi_{kj}\right)N=N$$

 and the NGM becomes 
\begin{equation} \label{NGMFormulyaReduction}
    \kappa=    \frac{1}{\gamma} \begin{pmatrix}
   \sum_{j=1}^M \beta_j\phi_{1j}\phi_{1j}   &    \sum_{j=1}^M \beta_j\phi_{1j}\phi_{2j} &  ...      &  \sum_{j=1}^M \beta_j\phi_{1j}\phi_{Mj}  \\
   \sum_{j=1}^M \beta_j\phi_{2j}\phi_{1j}   & \sum_{j=1}^M \beta_j\phi_{2j}\phi_{2j} &  ...      & \sum_{j=1}^M \beta_j\phi_{2j}\phi_{Mj} \\
    \vdots & \vdots & \ddots & \vdots\\ 
    \sum_{j=1}^M \beta_j\phi_{Mj}\phi_{1j} &      \sum_{j=1}^M \beta_j\phi_{Mj}\phi_{2j} &  ...      &  \sum_{j=1}^M \beta_j\phi_{Mj}\phi_{Mj}  \\
        \end{pmatrix}.
    \end{equation}

\begin{remark}
Finding $R_0$ from the NGM approach is equivalent to performing a local stability analysis of the disease-free equilibrium (DFE) $(S^*_i, I^*_i, R^*_i) = (N_i,0,0)$ for $i \in V$. Substituting the perturbation form $S_{i}(t)=S_i^*+\Delta S_{i}(t)$ and $I_{i}(t)=I_i^*+\Delta I_{i}(t)$, where $i=1, \hdots, M$ into (Eq. \ref{SIRnetwork}) and linearizing, we are able to obtain the following relation between the Jacobian matrix $J$, and the next generation matrix $\kappa$  
\begin{equation*}
    J=\gamma (\kappa -I). 
\end{equation*}

To analyze the Jacobian matrix $J$ stability at the DFE, we find locally asymptotic stable (LAS) DFE conditions when all eigenvalues have negative real part or unstable DFE if at least one eigenvalue has a positive real part \cite{Strogatz_book,perko2013differential}. 
\end{remark}

\section{Epidemic thresholds: previous results} \label{previous_results}
This section briefly presents two theorems from Stolerman et al. \cite{lucas_dengue_model}. The first theorem established the outbreak criteria for \emph{homogeneous networks}, where all infection rates are equal in each node. 

\begin{theorem*} [Homogeneous networks in \cite{lucas_dengue_model}] \label{theorem1}
Let $\left(V, \phi_{M \times M} \right)$ be a network associated with the SIR-network system (Eq. \ref{SIRnetwork}) in the case where all nodes have the same transmission rate, $\beta_j=\beta_0  ~~\forall j \in V$. Then basic reproduction number of the homogeneous network models is  $R_0^{\text{hom}}=\frac{\beta_0}{\gamma}$,  and an epidemic outbreak occurs if and only if $\beta_0 > \gamma$. 
\end{theorem*}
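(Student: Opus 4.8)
The plan is to read off the spectral radius of the next-generation matrix $\kappa$ in (Eq.~\ref{NGMFormulya}) directly, by guessing an explicit \emph{positive} eigenvector. Substituting $\beta_j=\beta_0$ for all $j\in V$, the $(i,\ell)$ entry becomes $\kappa_{i\ell}=\frac{\beta_0}{\gamma}\sum_{j=1}^{M}\phi_{ij}\phi_{\ell j}\frac{N_i}{N_j^{p}}$, so $\kappa$ is a nonnegative matrix. The natural candidate eigenvector is $v=(N_1,\dots,N_M)^{\top}$, the vector of resident populations. Using the definition $N_j^{p}=\sum_{k}\phi_{kj}N_k$ one has $\sum_{\ell}\phi_{\ell j}v_\ell=N_j^{p}$, and then the stochasticity $\sum_{j}\phi_{ij}=1$ collapses the $i$-th row sum: $(\kappa v)_i=\frac{\beta_0}{\gamma}N_i\sum_{j}\frac{\phi_{ij}}{N_j^{p}}\,N_j^{p}=\frac{\beta_0}{\gamma}N_i=\frac{\beta_0}{\gamma}v_i$. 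Hence $\beta_0/\gamma$ is an eigenvalue of $\kappa$ with a strictly positive eigenvector.

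Next I would invoke the appropriate Perron--Frobenius statement: a nonnegative square matrix that admits a strictly positive eigenvector has the corresponding eigenvalue equal to its spectral radius (this version requires no irreducibility hypothesis). Applying it to $\kappa$ and $v$ yields $R_0=\rho(\kappa)=\beta_0/\gamma$. As an aside that is convenient but not strictly necessary, one can also write $\kappa=\frac{\beta_0}{\gamma}D\,P\,E^{-1}P^{\top}$ with $P=[\phi_{ij}]$, $D=\mathrm{diag}(N_i)$, $E=\mathrm{diag}(N_j^{p})$, so that $D^{-1/2}\kappa D^{1/2}$ is a Gram matrix; this shows every eigenvalue of $\kappa$ is real and nonnegative, which tidies up the stability discussion below.

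For the outbreak criterion I would combine $R_0=\beta_0/\gamma$ with the standard next-generation threshold theorem, or equivalently with the Jacobian identity $J=\gamma(\kappa-I)$ from the Remark: the eigenvalues of $J$ are exactly $\gamma(\mu-1)$ as $\mu$ runs over the spectrum of $\kappa$, and since $\kappa\ge 0$ each such $\mu$ satisfies $|\mu|\le\rho(\kappa)=\beta_0/\gamma$ while the value $\beta_0/\gamma$ is itself attained. Therefore the DFE is unstable (an epidemic outbreak occurs) precisely when $\beta_0/\gamma>1$, i.e.\ $\beta_0>\gamma$, and it is locally asymptotically stable when $\beta_0<\gamma$; the borderline $\beta_0=\gamma$ corresponds to $R_0=1$ and no outbreak. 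This establishes the ``if and only if''.

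I expect the only genuinely delicate point — the ``hard part'' — to be citing the correct form of Perron--Frobenius: the flux matrix $\phi_{M\times M}$ need not be irreducible (the mobility network may be disconnected or contain sink-like nodes), so one must use the version that deduces $\rho(A)=\lambda$ from the mere existence of a positive eigenvector, not the irreducible version with its uniqueness/simplicity conclusions. One should also note in passing that $N_j^{p}>0$ for every $j$, which is implicit in the model (otherwise $\kappa$ is undefined). Everything else is bookkeeping with the row-stochasticity constraint and the definition of the present population $N_j^{p}$.
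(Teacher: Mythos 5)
Your argument is correct, and it is worth noting that this paper does not actually prove the statement: it is quoted verbatim from Stolerman et al.\ \cite{lucas_dengue_model}, and the present manuscript's own threshold proofs (Proposition \ref{starprop_m3}, Theorem \ref{theoremstar}, etc.) proceed by explicit eigenvalue computations of the Jacobian or NGM under the simplifying assumptions of a symmetric flux matrix and equal populations $N_i=N$. Your route is genuinely different and more general: the observation that $v=(N_1,\dots,N_M)^{\top}$ satisfies $\kappa v=\frac{\beta_0}{\gamma}v$ follows exactly as you compute, since $\sum_{\ell}\phi_{\ell j}N_\ell=N_j^{p}$ and $\sum_j\phi_{ij}=1$, and the version of Perron--Frobenius you invoke (a nonnegative matrix with a strictly positive eigenvector has that eigenvalue as its spectral radius; e.g.\ conjugate by $\operatorname{diag}(v)$ and take the maximum row sum) indeed needs no irreducibility, so $\rho(\kappa)=\beta_0/\gamma$ for \emph{any} stochastic flux matrix and \emph{any} positive populations, not just the symmetric equal-population case treated elsewhere in the paper. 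The Gram-matrix conjugation $D^{-1/2}\kappa D^{1/2}$ is a nice bonus that makes the spectrum real and nonnegative, so the outbreak criterion via $J=\gamma(\kappa-I)$ is immediate: all eigenvalues of $J$ are negative when $\beta_0<\gamma$ and the eigenvalue $\gamma\bigl(\tfrac{\beta_0}{\gamma}-1\bigr)>0$ is attained when $\beta_0>\gamma$. The only caveats are cosmetic: the threshold case $\beta_0=\gamma$ is settled by convention ($R_0=1$, no outbreak) rather than by the linearization, and you should state explicitly, as you do, that $N_j^{p}>0$ is assumed so that $\kappa$ is well defined. What your approach buys is a coordinate-free, computation-free proof valid in full generality; what the paper's style of explicit spectral computation buys (in the heterogeneous settings where it is actually deployed) is the finer information — critical $\zeta$ values and flux intervals — that a pure spectral-radius identification cannot deliver.
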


The second theorem is devoted to fully connected networks in the case where a single node has a different infection rate (heterogeneous node).  The particular flux matrix $\phi_{M \times M}$ in (Eq. \ref{flux.matrix}) is given by 

\begin{equation} \label{LucasfullConnectedFluxEntries}
 \phi_{ij}= \begin{cases}
                                   \phi_0 & \text{if $i \neq j$} \\
                                 1-(M-1)\phi_0  &  \text{if $i=j$}, 
   \end{cases}
\end{equation}
where $0< \phi_0 \leq \frac{1}{M-1}$, since $\phi_{ij}>0$.

\begin{theorem*}[Fully connected networks with one heterogeneous node \cite{lucas_dengue_model}]  \label{theoremfullcon}
Let $\left(V, \phi_{M \times M} \right)$ be a fully connected network, where $\phi_{ij}= \phi_0$  if $i \neq j$ and $\phi_{ii} = 1-(M-1)\phi_0$. We suppose that $\beta_i= \zeta \beta_0$ for $i \in V$ and $\beta_k= \beta_0$, for all  $k \in V$, where $k \neq i$. Under these conditions, if the background nodes are stable, i.e., $R_0^{\text{hom}}<1$, there is a minimum value of $\zeta$ given by 
\begin{equation*} 
    \zeta^{\text{crit}}= \left( \frac{M-(M-1)R_0^{\text{hom}}}{R_0^{\text{hom}}} \right)= 1+\left( \frac{1-R_0^{\text{hom}}}{R_0^{\text{hom}}} \right)M.
    \label{zeta_critic}
\end{equation*}
In the case $ \frac{1}{R_0^{\text{hom}}}<\zeta<\zeta^{\text{crit}}$, there is an interval  
\begin{equation} \label{interval}
    \mathcal{I}= \left( \max \left\{0, \frac{1}{M}-\Tilde{\phi} \right\}, \min \left\{\frac{1}{M-1}, \frac{1}{M}+\Tilde{\phi} \right\} \right),
\end{equation}

where 
\begin{equation}
    \begin{aligned}
       & \Tilde{\phi}= \frac{1}{M} \sqrt{ \frac{ M(1-R_0^{\text{hom}})-R_0^{\text{hom}}(\zeta-1) }{R_0^{\text{hom}} \left(M \zeta(1-R_0^{\text{hom}})-(\zeta-1)\right)}},
    \end{aligned}
    \label{interv_critic}
\end{equation}
such that if $\phi_0 \in \mathcal{I}$, then the DFE is stable, and otherwise it is unstable. Finally, when $\zeta <\frac{1}{R_0^{\text{hom}}}$, the DFE is always stable.   
\end{theorem*}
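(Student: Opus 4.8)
The plan is to convert the stability question into a spectral‑radius computation for the next‑generation matrix and then exploit the rigid algebraic form of the fully connected flux matrix. By the linearization identity $J=\gamma(\kappa-I)$ noted earlier, the DFE is stable iff every eigenvalue of $\kappa$ is less than $1$. Since the flux matrix of (Eq.~\ref{LucasfullConnectedFluxEntries}) is symmetric and the populations are all equal, the reduced NGM (Eq.~\ref{NGMFormulyaReduction}) applies; writing $r:=R_0^{\text{hom}}=\beta_0/\gamma$, relabeling the heterogeneous node as node $1$, and setting $D:=\operatorname{diag}(\zeta,1,\dots,1)$ and $\phi=(1-M\phi_0)I+\phi_0\,\mathbf 1\mathbf 1^{\top}$, one has the factorization $\kappa=r\,\phi D\phi$. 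In particular $\kappa$ is symmetric, so $R_0=\lambda_{\max}(\kappa)$, and everything reduces to identifying the pairs $(\phi_0,\zeta)$ with $\lambda_{\max}(\kappa)<1$.

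First I would split $\mathbb R^{M}$ into $\kappa$-invariant pieces by hand. From $\phi\mathbf 1=\mathbf 1$ and $\phi e_1=(1-M\phi_0)e_1+\phi_0\mathbf 1$ (here $e_1$ is the standard basis vector of node $1$) one checks that $\operatorname{span}\{e_1,\mathbf 1\}$ and its orthogonal complement $W_0:=\{x: x_1=0,\ \sum_i x_i=0\}$ are both $\kappa$-invariant (the second because $\kappa$ is symmetric). On $W_0$ (dimension $M-2$) one has $\phi x=(1-M\phi_0)x$ and $Dx=x$, so $\kappa$ acts there as the scalar $r(1-M\phi_0)^{2}$; since $0<\phi_0\le\frac1{M-1}$ forces $(1-M\phi_0)^{2}<1$, this whole block of eigenvalues lies below $1$ under the hypothesis $r<1$. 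The other two eigenvalues of $\kappa$ are those of the $2\times2$ matrix $A$ representing $\kappa$ on $\operatorname{span}\{e_1,\mathbf 1\}$, and the computational heart of the argument is the (routine but slightly tedious) evaluation
\[
\operatorname{tr}(A) = r\bigl(1+q\zeta+\tfrac{(1-q)(\zeta-1)}{M}\bigr),\qquad \det(A) = r^{2}q\zeta,\qquad q:=(1-M\phi_0)^{2},
\]
the key point being that the cross terms cancel in $\det(A)$.

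The endgame is to impose $\lambda_{\max}(A)<1$. Since $\kappa$ is symmetric, $A$ has real eigenvalues, so ``both $<1$'' is equivalent to the two scalar conditions $\det(I-A)=1-\operatorname{tr}(A)+\det(A)>0$ and $\operatorname{tr}(I-A)=2-\operatorname{tr}(A)>0$. Substituting $q=M^{2}\bigl(\tfrac1M-\phi_0\bigr)^{2}$ into the first inequality and clearing denominators turns it into exactly $\bigl(\tfrac1M-\phi_0\bigr)^{2}<\widetilde\phi^{2}$ with $\widetilde\phi$ the quantity in (Eq.~\ref{interv_critic}); in the process one reads off that the numerator of $\widetilde\phi^{2}$ is positive iff $\zeta<\zeta^{\text{crit}}$, that its denominator (linear in $\zeta$ and positive at both $\zeta=1$ and $\zeta=\zeta^{\text{crit}}$) stays positive on $0<\zeta\le\zeta^{\text{crit}}$, and that $\widetilde\phi\ge\tfrac1M$ iff $\zeta\le\tfrac1r$. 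Intersecting $|\tfrac1M-\phi_0|<\widetilde\phi$ with the a priori constraint $0<\phi_0\le\tfrac1{M-1}$ yields precisely the interval $\mathcal I$ of (Eq.~\ref{interval}): it equals $(0,\tfrac1{M-1}]$ when $\zeta<\tfrac1r$, a proper subinterval centered at $\tfrac1M$ when $\tfrac1r<\zeta<\zeta^{\text{crit}}$, and $\varnothing$ when $\zeta\ge\zeta^{\text{crit}}$ — exactly the claimed trichotomy.

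To close the argument without separately tracking the trace condition, I would use a connectedness/continuity argument: on the open set $\{\det(I-A)>0\}$ the two numbers $1-\lambda_i$ keep a fixed sign, that sign is ``both positive'' at the single point $\phi_0=\tfrac1M$ (where $q=0$, so the eigenvalues of $A$ are $0$ and $r\tfrac{M+\zeta-1}{M}$, the latter being $<1$ precisely when $\zeta<\zeta^{\text{crit}}$), and $\tfrac1M$ lies in the connected admissible interval; hence $\lambda_{\max}(A)<1$ on all of $\mathcal I$, whereas for $\phi_0$ outside $\mathcal I$ one has $\det(I-A)<0$, so exactly one eigenvalue exceeds $1$ and the DFE is unstable. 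I expect the main obstacle to be precisely the bookkeeping in the third step — coaxing $\det(I-A)>0$ into the closed form for $\widetilde\phi$ and, above all, controlling the signs of its numerator and denominator as $\zeta$ ranges over $(1,\zeta^{\text{crit}})$: that sign analysis is what actually produces the three regimes, and the subcase $r>1-\tfrac1M$, where positivity of the denominator is not obvious, is the easiest place to err.
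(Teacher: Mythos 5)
Your argument is correct, and it is a genuinely different route from the one this paper leans on: the paper does not prove this theorem itself (it is quoted from Stolerman et al.), and its in-house analogues (e.g.\ the star-shaped Theorem~\ref{theoremstar}) are proved by writing the NGM as a rank-two-plus-diagonal matrix, invoking Collings' expansion for the characteristic polynomial, and then checking the Jury conditions, ultimately deferring the final case analysis to the inequalities (A.2)--(A.4) of the cited paper. You instead work directly with the factorization $\kappa=r\,\phi D\phi$, split $\mathbb R^{M}$ explicitly into the $(M-2)$-dimensional block $W_0$ (eigenvalue $r(1-M\phi_0)^2<1$) and the invariant plane $\operatorname{span}\{e_1,\mathbf 1\}$, and reduce everything to the $2\times2$ restriction $A$; your formulas $\operatorname{tr}(A)=r\bigl(1+q\zeta+\tfrac{(1-q)(\zeta-1)}{M}\bigr)$ and $\det(A)=r^{2}q\zeta$ with $q=(1-M\phi_0)^2$ check out, the identity $\det(I-A)>0\iff\bigl(\tfrac1M-\phi_0\bigr)^2<\widetilde\phi^{\,2}$ reproduces exactly (Eq.~\ref{interv_critic}), and your sign bookkeeping (numerator positive iff $\zeta<\zeta^{\text{crit}}$, denominator positive on $(0,\zeta^{\text{crit}}]$ by linearity and positivity at both ends, $\widetilde\phi\ge\tfrac1M$ iff $\zeta\le 1/R_0^{\text{hom}}$) is accurate. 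Replacing the second Jury-type (trace) condition by the continuity/connectedness argument anchored at $\phi_0=\tfrac1M$ is legitimate, since within the connected set where $\det(I-A)>0$ no eigenvalue can cross $1$, and outside $\mathcal I$ the negative determinant forces exactly one eigenvalue above $1$, hence instability. What your approach buys is a self-contained, basis-level proof that never has to quote the external appendix; what the paper's approach buys is a template (low-rank-plus-diagonal NGM, Jury conditions) that transfers verbatim to the other geometries treated later. The only point worth tightening is the $\zeta\ge\zeta^{\text{crit}}$ regime: emptiness of $\mathcal I$ follows from your numerator sign, but asserting instability for \emph{every} admissible $\phi_0$ there would need one extra line in the subcase where the denominator $M\zeta(1-R_0^{\text{hom}})-(\zeta-1)$ could turn negative for large $\zeta$; since the theorem's explicit claims concern $\zeta<1/R_0^{\text{hom}}$ and $1/R_0^{\text{hom}}<\zeta<\zeta^{\text{crit}}$, this does not affect what you set out to prove.
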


When the infection rate $\beta_0$ is lower than the recovery rate $\gamma$, theorem \ref{theorem1} assures that the outbreak is controlled, which is the same outbreak criteria of the SIR model. In theorem \ref{theoremfullcon}, Stolerman et al. \cite{lucas_dengue_model} \emph{perturbed} the homogeneous networks by adding a single node with a different infection rate. In this case, the flux of people between nodes plays a vital role in the epidemic criteria; there exists a critical value of the ratio $\zeta = \frac{\beta_M}{\beta_0}$ such that there exists an interval of flux values that control the epidemics. In section 4, we determine similar epidemic thresholds for the SIR-network model in the different network geometries.

\section{New epidemic thresholds} \label{NewTheory}
This section aims to introduce new classes of networks where the flux may still play an important role in controlling epidemic outbreaks. Previously, Stolerman et al. \cite{lucas_dengue_model} considered the flux matrix (Eq. \ref{LucasfullConnectedFluxEntries}) to describe the \emph{fully connected} networks where the fractions of movement between any two nodes are given by a constant $\phi_0$. Therefore, the following question naturally arises: 
\begin{itemize}
     \item[]  \emph{What happens when the network is not fully connected?}
       \end{itemize}

To answer that question and extend the results of Theorem 2 in \cite{lucas_dengue_model}, we explore other classes of networks, still assuming that a single node has a different infection rate. More precisely, given a network with $M$ nodes, we follow the approach of \cite{lucas_dengue_model} and define the positive quantities $\beta_0$ and $\zeta$ such that 

\begin{equation} \label{BetaConditionstarshape}
     \beta_{i} =
  \begin{cases}
                                   \beta_0 & \text{if $i=1,2, ..., (M-1)$} \\
                                   \zeta \beta_0 & \text{if $i=M$} 
   \end{cases}
\end{equation}
Here $\zeta$ is a dimensionless multiplicative factor that modifies the infection rate at the heterogeneous node $M$. In this study, we are particularly interested in the case where all nodes are ``stable'' except one, i.e, $\beta_0 < \gamma$ and $ \zeta \beta_0 >  \gamma$, so we may explore the case where the single node $M$ can be unstable and thus promote an epidemic outbreak in the whole network.

Fig  \ref{fig:diff_networks} illustrates some of the networks we consider in this study. For \emph{star-shaped} networks (subsection 4.1), we consider that the heterogeneous node $M$ (the one with a different infection rate) connects to all other nodes by the same flux $\phi_0$, and all the outer nodes are not connected. These networks mimic a city where a central node receives a daily flux of people. On the other hand, \emph{cycle-shaped} networks (subsection 4.2), which can be unidirectional or bidirectional, do not have such a central node. In this case, all nodes are only connected to their immediate two neighbors, potentially representing small towns where people usually commute short distances. We also consider extensions of the cycle networks defining \emph{cycle-support} and \emph{star-cycle} networks (subsection 4.3). Cycle support networks are built by connecting the heterogeneous node $M$ in the cycle to the rest of the nodes of the networks. On the other hand, for \emph{star-cycle} networks we connect the immediate neighbors of the outer nodes in the star-shaped network.

Throughout this paper, we consider the SIR-network model with nonnegative and symmetric flux matrix $\phi_{M \times M}$ (Eq. \ref{flux.matrix}) and consider that all nodes have the same population size, i.e,  $N_i=N>0, \text{for} \ i= ~1,2,\hdots, M$. These assumptions allow us to solely investigate the impact of the network geometry on the epidemic thresholds and establish a comparison with previous results obtained in \cite{lucas_dengue_model}.

\begin{figure}[!htbp]% [!htbp] fix the image in the same page 
   \centering
     \begin{adjustwidth}{0.00in}{0in} 
   \includegraphics[width=1.0\textwidth]{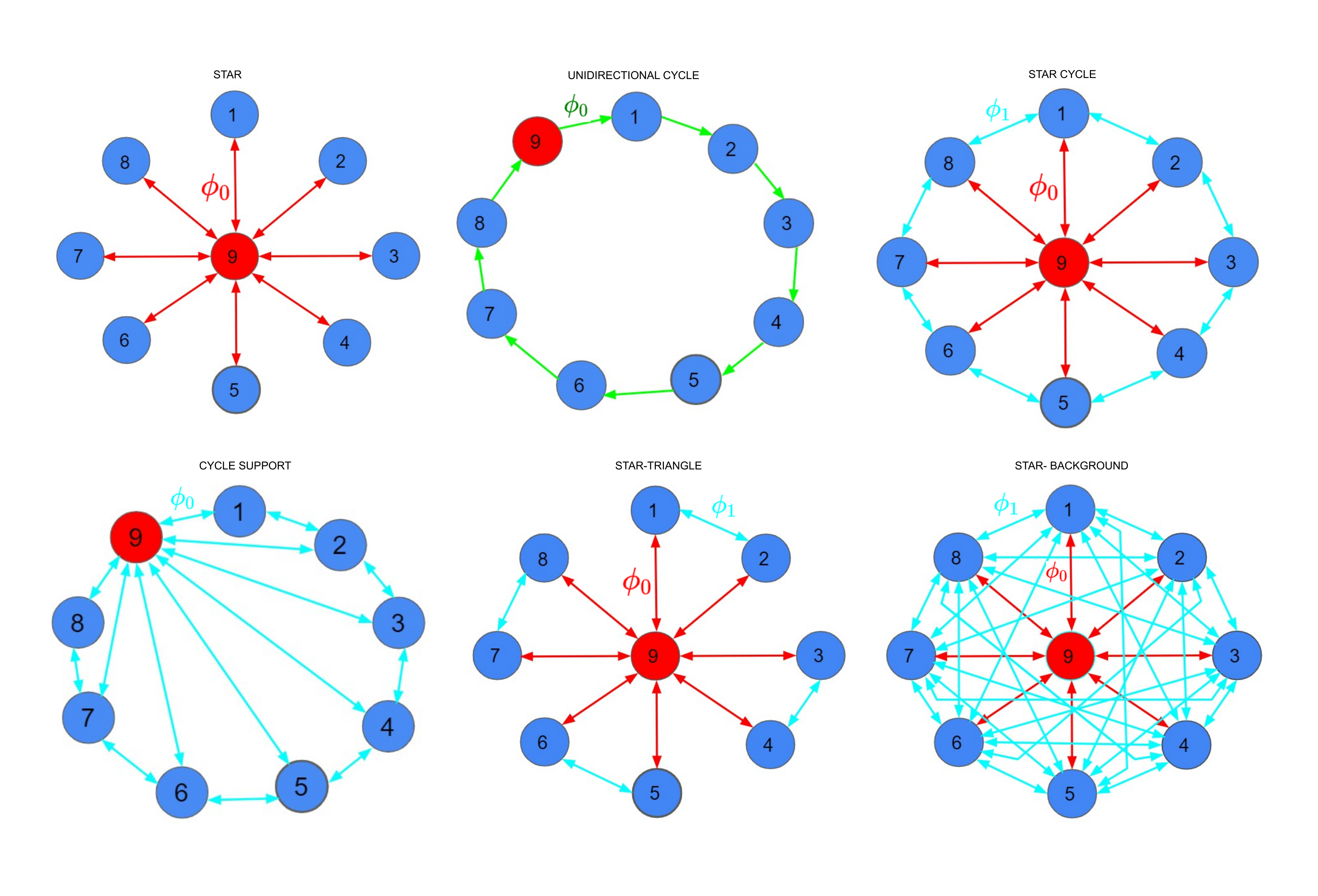}
   \end{adjustwidth}
 \caption{\textbf{Different network geometries ($M=9$):} All blue nodes  have the same infection rates $\beta_0>0$ except the red node, where $\beta_9= \zeta \beta_0$. Here $\zeta>0$ is a dimensionless multiplicative factor.} 
  \label{fig:diff_networks} 
\end{figure}

\subsection{Star-shaped networks} \label{StarShapeSection}
We begin our analysis on star-shaped networks, where all nodes $i \in V$ are uniquely connected to a central node $M$. A star-shaped network mimics a city where a well-connected node (for instance, the downtown of a big city) has a different infection rate and thus may influence the disease spread in the network.
%Keep in mind the general flux matrix, $\phi$ in (Eq. \ref{flux.matrix}) from where we can draw some special networks by inserting zero or nonzero for some fraction of residents $\phi_{ij}$ of node $i$ that travel to node $j$. It is natural to assume that the individuals travel between their node and the center of a city (center of the nodes).  We first define the parameter $\beta_{i}$, which is responsible for the transmission of the diseases in each node. Around the world, many large cities have become disease hotspots defined as the center of networks, while many smaller towns and rural areas have seen fewer cases and fatalities. Therefore, we assume that the center is more infectious than the other nodes. The infection rates are considered to be the same on all nodes except the center (node $M$). 
%The network is summarized as follows by the geometry of the flux matrix.   
%\subsubsection*{Star Shape Flux}
%We now define a network for the selected regions, where people go to work every day from the residential nodes $ i \in V \setminus \{M\} $ to the center of the city, $M$, and return to their node (neighborhood) after work. Like the general flux matrix, the conservation of the fraction of movement, $\sum_{j=1}^M\phi_{ij}=1$ for all $i$, and the nonnegative entries of the star shape 
The flux matrix  $\phi_{M \times M}$ of a star-shaped network is defined by  
\begin{equation} \label{starshape}
 \phi_{ij}= \begin{cases}
                                  \phi_0 & \text{if $i=M,~\text{and}~  j \in V \setminus M $} ~ \text{ or ~if $j=M, ~ \text{and} ~ i \in V \setminus M $} \\
                                 1-\phi_0  &  \text{if $i=j,~ \text{and} ~ ~ i,j \in V \setminus M$}\\
                                 1-(M-1)\phi_0  &  \text{if $i=j=M$}\\
                                 0  & \text{otherwise},
   \end{cases}
\end{equation}
where $0< \phi_0 \leq \frac{1}{M-1}$ to ensure that all matrix elements are nonnegative. We may also write the fluxes for Eq. \ref{starshape} in the matrix format 

\begin{equation*} 
    \phi_{M \times M}= \begin{pmatrix}
   1-\phi_0  & 0 & 0 &  \hdots & \phi_0\\
     0  & 1-\phi_0 & 0 & \hdots &\phi_0\\
     0  & 0 & 1-\phi_0 & \hdots &\phi_0\\
         \vdots & \vdots &  \vdots &  \ddots & \vdots\\ 
             \phi_0  & \phi_0 & \phi_0 &  \hdots & 1-(M-1)\phi_0
           \end{pmatrix}.
    \end{equation*}

For \emph{star-shaped} networks, we find the same minimum  $\zeta$ (Eq. \ref{zeta_critic}), and interval $\mathcal{I}$ (Eq. \ref{interval}) as the ones in the fully connected case (Theorem 2 in \cite{lucas_dengue_model}), where the flux between regions can control an epidemic outbreak. We start with a proposition for $M=3$ to show a step-by-step calculation using local stability analysis at the DFE.  
    
\begin{prop} \label{starprop_m3}
Let $V=\left\{1,2,3 \right\}$ be the nodes of the star-shaped network with flux matrix $\phi_{3 \times 3}$ be defined by (Eq. \ref{starshape}) for $M=3$, where $\phi_0 \in (0, \frac{1}{2}]$. For the infection rates $\beta_i$ (Eq. \ref{BetaConditionstarshape}), we assume that all nodes are stable except the center, i.e, $R_0^{\text{hom}}=\frac{\beta_0}{\gamma}<1$ and $\frac{\beta_3}{\gamma} = \frac{\zeta \beta_0}{\gamma}>1$. Then there exists a critical value $\zeta^{\text{crit}}$ given by 
\begin{equation} \label{zeta_condition_M=3}
    \zeta^{\text{crit}}= 1+\left( \frac{1-R_0^{\text{hom}}}{R_0^{\text{hom}}} \right)3.
\end{equation}
%Alternatively, the infection threshold of $\beta_0$, $\beta_0^{\text{crit}}$ is given by 
%\begin{equation} \label{beta0_condition_M=3}
%    \beta_0^{\text{crit}}= \gamma\left( \frac{3}{2+\zeta} \right).
%\end{equation}
such that if  $\zeta< \zeta^{\text{crit}}$, there is an interval 
\begin{equation} \label{interval_star_m_3}
    \mathcal{I}= \left( \max \left\{0, \frac{1}{3}-\Tilde{\phi} \right\}, \min \left\{\frac{1}{2}, \frac{1}{3}+\Tilde{\phi} \right\} \right),
\end{equation}
where 
\begin{equation}\label{star_phi_tilda_m_3}
    \begin{aligned}
        \Tilde{\phi}= \frac{1}{3} \sqrt{ \frac{\gamma (3\gamma-2\beta_0-\zeta\beta_0 )}{\beta_0(\gamma+2\zeta\gamma-3\zeta\beta_0)}}= \frac{1}{3} \sqrt{ \frac{ 3(1-R_0^{\text{hom}})-R_0^{\text{hom}}(\zeta-1) )}{R_0^{\text{hom}} \left(3 \zeta(1-R_0^{\text{hom}})-(\zeta-1) \right)}},
    \end{aligned}
\end{equation}
such that if $\phi_0 \in \mathcal{I}$, then the DFE is stable, and otherwise it is unstable. 
\end{prop}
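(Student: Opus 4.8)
The plan is to run the local stability analysis of the DFE described in the Remark, i.e.\ to decide when the largest eigenvalue of the next-generation matrix $\kappa$ in (Eq.~\ref{NGMFormulyaReduction}) lies below $1$. First I would write $\kappa$ out explicitly for the $M=3$ star flux matrix (Eq.~\ref{starshape}) with the infection rates (Eq.~\ref{BetaConditionstarshape}). Since $\phi_{3\times3}$ is symmetric and the node populations are equal, $\kappa=\tfrac1\gamma\,\phi D\phi$ with $D=\operatorname{diag}(\beta_0,\beta_0,\zeta\beta_0)$, so $\kappa=\tfrac1\gamma(D^{1/2}\phi)^{\!\top}(D^{1/2}\phi)$ is symmetric positive semidefinite; hence all its eigenvalues are real and nonnegative, and the DFE is (locally asymptotically) stable iff the largest one is $<1$ and unstable iff it exceeds $1$.

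Next I would use the symmetry exchanging the two outer nodes $1$ and $2$, which carry equal infection rate and connect identically to the center. This forces $(1,-1,0)^{\!\top}$ to be an eigenvector of $\kappa$ with eigenvalue $R_0^{\text{hom}}(1-\phi_0)^2$, which is $<1$ for every admissible $\phi_0$ because $R_0^{\text{hom}}<1$, so this mode never destabilizes. On the complementary invariant subspace $\{(a,a,b)\}$ the matrix $\kappa$ acts as an explicit $2\times2$ matrix $\tilde\kappa=\tilde\kappa(\phi_0,\zeta,R_0^{\text{hom}})$, whose two eigenvalues (real, being a subset of those of the symmetric $\kappa$) are the only ones that matter. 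Applying the Routh--Hurwitz criterion to the associated Jacobian block $\gamma(\tilde\kappa-I_2)$, the DFE is stable precisely when $\operatorname{tr}\tilde\kappa<2$ and $\det(I_2-\tilde\kappa)>0$.

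I would then compute these two quantities. Using $\det(\phi D\phi)=(\det\phi)^2\det D$, the identity $\det\phi=(1-\phi_0)(1-3\phi_0)$, and the eigenvalue $R_0^{\text{hom}}(1-\phi_0)^2$ already split off, one gets the clean formula $\det\tilde\kappa=\zeta (R_0^{\text{hom}})^2(1-3\phi_0)^2$, so that $\det(I_2-\tilde\kappa)=1-\operatorname{tr}\tilde\kappa+\det\tilde\kappa$ collapses to a \emph{quadratic} in $\phi_0$. Inspecting its coefficients shows the two roots sum to $2/3$, hence equal $\tfrac13\pm\tilde\phi$; solving for the half-width recovers $\tilde\phi$ of (Eq.~\ref{star_phi_tilda_m_3}), the discriminant is nonnegative exactly when $\zeta\le\zeta^{\text{crit}}$ (Eq.~\ref{zeta_condition_M=3}), and for $\zeta<\zeta^{\text{crit}}$ the leading coefficient is negative, so $\det(I_2-\tilde\kappa)>0\iff|\phi_0-\tfrac13|<\tilde\phi$. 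Intersecting with the admissible range $\phi_0\in(0,\tfrac12]$ produces the interval $\mathcal{I}$ of (Eq.~\ref{interval_star_m_3}); and when $\zeta\le 1/R_0^{\text{hom}}$ a direct check gives $\tilde\phi\ge\tfrac13$, so $\mathcal{I}=(0,\tfrac12)$ and the DFE is always stable, in agreement with Theorem~\ref{theoremfullcon}.

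The one step that genuinely needs care — and that I expect to be the main obstacle — is showing the trace inequality $\operatorname{tr}\tilde\kappa<2$ does not shrink $\mathcal{I}$, i.e.\ that $\det(I_2-\tilde\kappa)>0$ is the only binding condition. Here I would note that $\operatorname{tr}\tilde\kappa$ is an upward parabola in $\phi_0$ with vertex exactly at $\phi_0=\tfrac13$, so it is maximized on $[\tfrac13-\tilde\phi,\tfrac13+\tilde\phi]$ at the endpoints, where $\det(I_2-\tilde\kappa)=0$ forces one eigenvalue of $\tilde\kappa$ to equal $1$ and the other to equal $\det\tilde\kappa=9\zeta (R_0^{\text{hom}})^2\tilde\phi^2$; thus it suffices to verify $9\zeta (R_0^{\text{hom}})^2\tilde\phi^2<1$. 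Substituting $\tilde\phi$ and clearing the denominator (positive for $\zeta<\zeta^{\text{crit}}$) turns this into the positivity of a quadratic $q(\zeta)$, which holds for all $\zeta\ge 0$ because its vertex either has negative abscissa or nonnegative value (a one-line discriminant check). With this in hand, inside $\mathcal{I}$ both Routh--Hurwitz inequalities hold, so the DFE is stable; for $\phi_0\in(0,\tfrac12]\setminus\mathcal{I}$ one instead has $\det(I_2-\tilde\kappa)<0$ (its zeros $\tfrac13\pm\tilde\phi$ being the usual non-hyperbolic borderline), so an eigenvalue of $\kappa$ exceeds $1$ and the DFE is unstable, which completes the proof.
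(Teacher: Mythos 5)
Your proposal is correct, and it reaches the stated thresholds by a genuinely different route than the paper's own proof of this proposition. The paper computes the three Jacobian eigenvalues explicitly (via symbolic software), discards $\lambda_1=\beta_0(1-\phi_0)^2-\gamma$, shows $\phi_0=\tfrac13$ is the minimizer of the dominant eigenvalue $\lambda_2$, reads off $\zeta^{\text{crit}}$ from the sign of that minimum, and obtains $\mathcal{I}$ from the roots of $\lambda_2$ in $\phi_0$. You instead exploit the structure $\kappa=\tfrac1\gamma\,\phi D\phi$: positive semidefiniteness, the swap symmetry of the two outer nodes splitting off the eigenvalue $R_0^{\text{hom}}(1-\phi_0)^2$, and the reduction to a $2\times2$ block on $\{(a,a,b)\}$ analyzed through $\operatorname{tr}\tilde\kappa<2$ and $\det(I_2-\tilde\kappa)>0$, with the determinant identity $\det\phi=(1-\phi_0)(1-3\phi_0)$ giving $\det\tilde\kappa=\zeta (R_0^{\text{hom}})^2(1-3\phi_0)^2$ and hence a clean quadratic whose roots sum to $2/3$; your formulas for $\tilde\phi$, $\zeta^{\text{crit}}$, and the sign of the leading coefficient all check out against (Eqs.~\ref{zeta_condition_M=3})--(\ref{star_phi_tilda_m_3}). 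Your approach is in fact closer in spirit to the paper's proof of Theorem~\ref{theoremstar} (NGM plus Jury conditions), but with a symmetry/determinant reduction in place of the rank-two-plus-diagonal decomposition and Collings expansion, and it avoids the black-box symbolic eigenvalue computation; what it costs you is exactly the extra step you flagged, verifying that the trace condition is non-binding, which the paper sidesteps because its explicit $\lambda_3$ visibly attains the value $-\gamma<0$ at $\phi_0=\tfrac13$. Your treatment of that step is sound: the trace is an upward parabola with vertex at $\tfrac13$, at the endpoints one eigenvalue equals $1$ and the other equals $9\zeta (R_0^{\text{hom}})^2\tilde\phi^2$, and the resulting requirement reduces to positivity of $q(\zeta)=\zeta^2 (R_0^{\text{hom}})^2+2\zeta\bigl(1-3R_0^{\text{hom}}+(R_0^{\text{hom}})^2\bigr)+1$, which indeed holds for all $\zeta\ge0$ and $0<R_0^{\text{hom}}<1$ by the vertex/discriminant dichotomy you describe (the discriminant factors as $4(1-R_0^{\text{hom}})^2\bigl((R_0^{\text{hom}})^2-4R_0^{\text{hom}}+1\bigr)$, negative precisely when the vertex abscissa is nonnegative). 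The borderline points $\phi_0=\tfrac13\pm\tilde\phi$ are non-hyperbolic in both treatments, so your handling matches the paper's level of rigor there.
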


\begin{proof}
To establish the stability criteria for the DFE, we may perform a local stability analysis of the DFE. The Jacobian matrix at the DFE corresponding to the dynamical system of the $3 \times 3$ star-shaped flux matrix is given by the following expression:

\begin{equation} \label{jacob2}
    \begin{aligned}
     J_{3 \times 3}= \begin{pmatrix}
   \beta_1(-1+\phi_0)^2+ \beta_3 \phi_0^2-\gamma   & \beta_3 \phi_0^2 & \phi_0 \left(\beta_1+\beta_3-\beta_1 \phi_0-2\beta_3 \phi_0 \right)\\
   \beta_3 \phi_0^2 &   \beta_2(-1+\phi_0)^2+ \beta_3 \phi_0^2-\gamma &   \phi_0 \left(\beta_2+\beta_3-\beta_2 \phi_0-2\beta_3 \phi_0 \right)\\
\phi_0 \left(\beta_1+\beta_3-\beta_1 \phi_0-2\beta_3 \phi_0 \right)  &  \phi_0 \left(\beta_2+\beta_3-\beta_2 \phi_0-2\beta_3 \phi_0 \right) &   \beta_3(1-2\phi_0)^2+ (\beta_1+\beta_2) \phi_0^2-\gamma 
           \end{pmatrix}.
    \end{aligned}
    \end{equation}
    
Since symmetric matrices have only real eigenvalues, the eigenvalues of the Jacobian matrix (Eq. \ref{jacob2}) are real. We calculate the eigenvalues using the symbolic packages of the software Mathematica, which are given by 

\begin{equation*}
    \begin{aligned}
       & \lambda_1= \beta_0(-1+\phi_0)^2-\gamma \\
          &   \lambda_2=\frac{1}{2} \beta_0 \left(1+\zeta-2 \phi_0 -4 \zeta \phi_0+3\phi_0^2 + 6 \zeta \phi_0^2 + B-\gamma \right)\\
          &   \lambda_3=\frac{1}{2} \beta_0 \left(1+\zeta-2 \phi_0 -4 \zeta \phi_0+3\phi_0^2 + 6 \zeta \phi_0^2 - B-\gamma \right)
    \end{aligned}
\end{equation*}
   where $ B:=\sqrt{(1-2\phi_0+3\phi_0^2)^2 +\zeta^2 \left(1-4\phi_0+6\phi_0^2 \right)^2+2\zeta \left(-1+6\phi_0-\phi_0^2-24\phi_0^3+ 18 \phi_0^4 \right)}.$ Since $\lambda_2$ and $\lambda_3$ are real, the square root inside $B$ must be nonnegative for $\phi_0 \in (0, \frac{1}{2}]$ and for any $\zeta$, $B$ is well defined and positive real number. 
   
   Next, we find the conditions for which all eigenvalues are negative (LAS DFE), or at least one eigenvalue is positive (unstable DFE).  First, since $\phi_0 \leq 1/2 <1 $, we have $\lambda_1= \beta_0(-1+\phi_0)^2-\gamma<\beta_0-\gamma<0$, since by hypothesis $\frac{\beta_0}{\gamma}<1$. It is thus sufficient to check the critical points of $\lambda_2$ and $\lambda_3$. However,  since $\lambda_2>\lambda_3$, it is thus sufficient to check the sign of $\lambda_2$. It can be  verified that $\phi_0 = \frac{1}{3}$ is a critical point of both eigenvalues $\lambda_2$, and $\lambda_3$, since  $\frac{\partial \lambda_2}{\partial \phi_0} (\phi_0=\frac{1}{3},\beta_0, \zeta, \gamma)=0$, and $\frac{\partial \lambda_3}{\partial \phi_0} (\phi_0=\frac{1}{3},\beta_0, \zeta, \gamma)=0$. Moreover, we can compute the second derivatives at the critical point and obtain
\begin{equation*}
    \begin{aligned}
       &  \frac{\partial^2 \lambda_2}{\partial \phi_0^2} (\phi_0=\frac{1}{3},\beta_0, \zeta, \gamma) =  \frac{12 \beta_0 (\zeta-1)^2}{2+\zeta}>0\\
& \frac{\partial^2 \lambda_3}{\partial \phi_0^2} (\phi_0=\frac{1}{3},\beta_0, \zeta, \gamma) = \frac{54 \zeta \beta_0}{2+\zeta}>0\\
    \end{aligned}
\end{equation*}
and so $\lambda_2$, and $\lambda_3$ have reached to the minimum value when $\phi_0=\frac{1}{3}$. Finally, 
\begin{equation*}
\begin{aligned}
       %& \lambda_1 (\phi_0,\beta_0, \zeta, \gamma) <0 ~~\forall ~~ \phi_0 \in (0,1)\\
            & \lambda_2(\phi_0=\frac{1}{3},\beta_0, \zeta, \gamma)=\frac{1}{6} \left(2\beta_0+\zeta\beta_0+\sqrt{(2+\zeta)^2} \beta_0 -6\gamma \right)=\frac{1}{3} \left(\beta_0(2+\zeta)-3\gamma \right)\\
            & \lambda_3(\phi_0=\frac{1}{3},\beta_0, \zeta, \gamma)=\frac{1}{6} \left(2\beta_0+\zeta\beta_0-\sqrt{(2+\zeta)^2} \beta_0 -6\gamma \right)=-\gamma <0. 
    \end{aligned}
\end{equation*}
In this way, if $\lambda_2(\phi_0=\frac{1}{3},\beta_0, \zeta, \gamma)=\frac{1}{3} \left(\beta_0(2+\zeta)-3\gamma \right)<0$, then there exists an interval of $\phi_0$ values containing $\phi_0=\frac{1}{3}$ such that all eigenvalues are negative and thus the DFE is stable.  The above condition yields the critical value $\zeta^{crit}$, i.e, 
%$\beta_0^{\text{crit}}$   
\begin{equation*}
   \begin{aligned}
  % & \beta_0 < =\beta_0^{\text{crit}}:=\gamma\left( \frac{3}{2+\zeta} \right) ~~\text{alternatively}~\zeta^{\text{crit}}~ \text{is}  \\
          & \left(\beta_0(2+\zeta)-3\gamma \right)<0 \Leftrightarrow \zeta < \frac{3 \gamma- 2 \beta_0}{\beta_0} = 1+\left( \frac{1-R_0^{\text{hom}}}{R_0^{\text{hom}}} \right)3 :=\zeta^{\text{crit}}.\\
   \end{aligned}
\end{equation*}
In addition, the function $\phi_0 \mapsto \lambda_2 (\phi_0, \zeta, \beta_0, \gamma)$ has two roots 
\begin{equation*}
    \begin{aligned}
        & \phi_{c1}= \frac{1}{3} +\frac{1}{3} \sqrt{ \frac{\gamma (3\gamma-2\beta_0-\zeta\beta_0 )}{\beta_0(\gamma+2\zeta\gamma-3\zeta\beta_0)}}=\frac{1}{3}+ \frac{1}{3} \sqrt{ \frac{ 3(1-R_0^{\text{hom}})-R_0^{\text{hom}}(\zeta-1) )}{R_0^{\text{hom}} \left(3 \zeta(1-R_0^{\text{hom}})-(\zeta-1) \right)}},\\
        & \phi_{c2}= \frac{1}{3} -\frac{1}{3} \sqrt{ \frac{\gamma (3\gamma-2\beta_0-\zeta\beta_0 )}{\beta_0(\gamma+2\zeta\gamma-3\zeta\beta_0)}}=\frac{1}{3}- \frac{1}{3} \sqrt{ \frac{ 3(1-R_0^{\text{hom}})-R_0^{\text{hom}}(\zeta-1) )}{R_0^{\text{hom}} \left(3 \zeta(1-R_0^{\text{hom}})-(\zeta-1) \right)}}.
    \end{aligned}
\end{equation*}
 Therefore, taking into account the restriction $\phi_0 \in \left(0,\frac{1}{2}\right]$, the interval where there will be no epidemic is given by $\mathcal{I}(\zeta,R_0^{\text{hom}})=\left( \max \left\{0, \frac{1}{3}-\Tilde{\phi} \right\}, \min \left\{\frac{1}{2}, \frac{1}{3}+\Tilde{\phi} \right\} \right)$, where 

\begin{equation*}
    \begin{aligned}
        \Tilde{\phi}= \frac{1}{3} \sqrt{ \frac{ 3(1-R_0^{\text{hom}})-R_0^{\text{hom}}(\zeta-1)}{R_0^{\text{hom}} \left(3 \zeta(1-R_0^{\text{hom}})-(\zeta-1) \right)}}. 
    \end{aligned}
\end{equation*}
It is worth noting that $\frac{1}{R_0^{\text{hom}}}< \zeta< \zeta^{\text{crit}}$  guarantees that the term inside of the above square root is nonnegative. Therefore, $\Tilde{\phi}$ is a real number, and the interval $\mathcal{I}$ is well defined. This last observation concludes the proof that  the DFE is stable if  $\phi_0 \in \mathcal{I}$, and otherwise unstable.
\end{proof}

The existence of a critical value $\zeta^{crit}$ suggests that the infection rate at the central node plays a crucial role in determining the spread of the epidemic. Specifically, when the value of $\zeta$ is sufficiently low, we observe that the disease-free equilibrium (DFE) remains stable over a range of flux values. On the other hand, if $\zeta > \zeta^{crit}$, then the DFE is unstable for all flux values, a scenario where the central node promotes an outbreak given the sufficiently high infection rate. Finally, if $\zeta< \frac{1}{R_0^{\text{hom}}}$, an algebraic manipulation can be performed to show that $\Tilde{\phi}>\frac{1}{3}$ and thus $\mathcal{I}(\zeta,R_0^{\text{hom}}) = (0,\frac{1}{2})$, which means that the epidemic is controlled for all flux values. This is consistent with our intuition since in this case all nodes would be stable.

We also note that the threshold condition for the local stability of the DFE can be written in terms of a critical infection rate $\beta_0^{\text{crit}}$, which is given by 

\begin{equation} \label{beta0_condition_M=3}
    \beta_0^{\text{crit}}= \gamma\left( \frac{3}{2+\zeta} \right).
\end{equation}

In section 5, we perform numerical simulations changing the $\beta_0$ and observing the emergence of the $\phi_0$ intervals where the epidemic is controlled.

Now we state and prove a general theorem for star-shaped networks of arbitrary size. The particular case for $M=3$ is pedagogical since it highlights the mathematical nature of such \emph{flux-driven} epidemic control. The main tool is the classic local stability analysis, which  is ultimately equivalent to the next-generation matrix approach \cite{driessche2002}. In fact, the central argument of the proof of the next theorem is the particular decomposition of the next-generation matrix in the same format as found by Stolerman et al. \cite{lucas_dengue_model}.

\begin{thm} \label{theoremstar}
Let $V=\left\{1,2,3, \hdots, M \right\}$ be the nodes of a \emph{star-shaped} network. We denote the infection rate $\beta_i = \beta_0$ at node $i$ for all $i = 1,2,3, \cdots, M-1$ and $\beta_M = \zeta \beta_0$. We also assume that all the 
nodes are stable except the center $M$, i.e, $R_0^{\text{hom}}=\frac{\beta_0}{\gamma}<1$ and $\frac{\beta_M}{\gamma} = \frac{\zeta \beta_0}{\gamma}>1$. Then we have a minimum  $\zeta^{\text{crit}}$ given by  %given by (Eq. \ref{zeta_critic}).

%Alternatively, the critical value $\beta_0^{\text{crit}}$ is given by 
%\begin{equation} \label{general_beta0_condition_M}
%    \beta_0^{\text{crit}}= \left( \frac{M}{\zeta+M-1} \right)\gamma
%\end{equation}

\begin{equation} 
    \zeta^{\text{crit}}= 1+\left( \frac{1-R_0^{\text{hom}}}{R_0^{\text{hom}}} \right)M.
    \label{zeta_critic_star}
\end{equation}

such that if $\frac{1}{R_0^{\text{hom}}}<\zeta< \zeta^{\text{crit}}$, there exists an interval $\mathcal{I}= \left( \max \left\{0, \frac{1}{M}-\Tilde{\phi} \right\}, \min \left\{\frac{1}{M-1}, \frac{1}{M}+\Tilde{\phi} \right\} \right)$, where $\Tilde{\phi}$ is given by (Eq. \ref{interv_critic}),
such that if $\phi_0 \in \mathcal{I}$, then the DFE is stable, and otherwise it is unstable. Finally, if  $\zeta<\frac{1}{R_0^{\text{hom}}}$ i.e. $\beta_M < \gamma$, then the DFE is always stable.      
\end{thm}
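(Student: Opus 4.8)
The plan is to reduce the stability analysis to a $2\times2$ eigenvalue problem by exploiting the symmetry of the star flux matrix, putting the next-generation matrix into the same ``scaled identity plus low rank'' form that underlies the fully connected case in \cite{lucas_dengue_model}. By the remark in Section~\ref{The SIR-network model}, $J=\gamma(\kappa-I)$ at the DFE, and since the flux matrix is symmetric with equal populations $\kappa$ is the real \emph{symmetric} matrix of (Eq.~\ref{NGMFormulyaReduction}); hence all eigenvalues are real and the DFE is stable iff every eigenvalue of $\kappa$ is $<1$. First I would write $\kappa$ out for the star flux (Eq.~\ref{starshape}): using that $\phi_{ij}=0$ whenever $i\ne j$ and $M\notin\{i,j\}$, one gets $\kappa_{il}=\tfrac{\beta_0}{\gamma}\zeta\phi_0^2$ for distinct outer $i,l$, $\kappa_{ii}=\tfrac{\beta_0}{\gamma}\bigl[(1-\phi_0)^2+\zeta\phi_0^2\bigr]$ for outer $i$, $\kappa_{iM}=\tfrac{\beta_0}{\gamma}\phi_0\bigl[(1-\phi_0)+\zeta(1-(M-1)\phi_0)\bigr]$, and $\kappa_{MM}=\tfrac{\beta_0}{\gamma}\bigl[(M-1)\phi_0^2+\zeta(1-(M-1)\phi_0)^2\bigr]$: an arrowhead-type matrix.

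The structural key is the orthogonal splitting $\bR^M=W\oplus U$, where $U=\operatorname{span}\{\mathbf 1,e_M\}$ with $\mathbf 1=(1,\dots,1,0)$, and $W=U^{\perp}$ is the $(M-2)$-dimensional space of vectors supported on the outer nodes and summing to zero. Both subspaces are $\kappa$-invariant (immediate from the entries above, since all $\kappa_{iM}$ with outer $i$ agree), and on $W$ the rank-one part and the center coupling both vanish, so $\kappa$ acts there as the scalar $\tfrac{\beta_0}{\gamma}(1-\phi_0)^2$; as $\beta_0<\gamma$ and $\phi_0>0$, these $M-2$ eigenvalues are $<1$ (this is $\lambda_1$ in Proposition~\ref{starprop_m3}). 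Stability therefore reduces entirely to the $2\times2$ block $\kappa|_U$, which in the basis $\{\mathbf 1,e_M\}$ equals $\tfrac{\beta_0}{\gamma}B$ for an explicit $B=B(\phi_0,\zeta,M)$ whose eigenvalues are real, being those of a self-adjoint operator.

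The computational heart is the analysis of $B$. I would compute $\det B=\zeta(1-M\phi_0)^2$ (the cross term collapses since $(1-(M-1)\phi_0)(1-\phi_0)-(M-1)\phi_0^2=1-M\phi_0$) and $\operatorname{tr}B=(1-2\phi_0+M\phi_0^2)+\zeta\bigl(1-2(M-1)\phi_0+M(M-1)\phi_0^2\bigr)$. The requirement that both eigenvalues of $\kappa|_U$ be $<1$ is, by the standard ``both roots $<c$'' test, equivalent to $\det(\gamma I-\beta_0 B)>0$ \emph{and} $\operatorname{tr}(\gamma I-\beta_0 B)>0$. Expanding $\det(\gamma I-\beta_0 B)=\gamma^2-\gamma\beta_0\operatorname{tr}B+\beta_0^2\det B$ as a quadratic in $\phi_0$ and completing the square gives
\[
  \det(\gamma I-\beta_0 B)=\beta_0 M(\zeta P-\gamma)\Bigl(\phi_0-\tfrac1M\Bigr)^2+\tfrac{\gamma}{M}\bigl[M(\gamma-\beta_0)-\beta_0(\zeta-1)\bigr],\qquad P:=\gamma-M(\gamma-\beta_0).
\]
The constant term is positive precisely when $\zeta<\zeta^{\text{crit}}$ (Eq.~\ref{zeta_critic_star}); a short lemma (trivial if $P\le0$; if $P>0$ one checks $\zeta^{\text{crit}}P<\gamma$ via an elementary inequality in $R_0^{\text{hom}}\in(1-\tfrac1M,1)$) shows $\zeta<\zeta^{\text{crit}}$ forces $\zeta P<\gamma$, so the parabola opens downward, is maximal at $\phi_0=1/M$, and---this inequality also making $\tilde\phi$ real---$\{\phi_0\in(0,\tfrac1{M-1}]:\det(\gamma I-\beta_0 B)>0\}$ is exactly the interval $\mathcal I$ of half-width $\tilde\phi$ given by (Eq.~\ref{interv_critic}).

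It remains to settle the trace inequality and the other parameter ranges, which I would do with no further computation. On the connected set $\mathcal I$ the determinant is $>0$, so the two real, continuously varying eigenvalues of $J|_U$ keep a constant sign there; at $\phi_0=1/M$ one has $\det B=0$, so one of them equals $-\gamma<0$, which (via the positive product) forces the other negative too, hence both negative throughout $\mathcal I$. So $\operatorname{tr}(\gamma I-\beta_0 B)>0$ is automatic on $\mathcal I$ and the DFE is stable there; for $\phi_0\in(0,\tfrac1{M-1}]\setminus\mathcal I$ one has $\det(\gamma I-\beta_0 B)\le0$, so $J|_U$ has a nonnegative eigenvalue and the DFE is unstable. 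The case $\zeta\ge\zeta^{\text{crit}}$ follows the same lines (downward parabola with nonpositive peak, or, in the residual upward-parabola subcase, the trace condition fails). Finally, if $\zeta<1/R_0^{\text{hom}}$, i.e.\ $\beta_M<\gamma$, then all $\beta_j<\gamma$ and, writing $\kappa=\sum_{j=1}^M\tfrac{\beta_j}{\gamma}c_jc_j^{\top}$ with $c_j$ the $j$th column of the (symmetric) flux matrix, one gets $0\preceq\kappa\preceq(\max_j\beta_j/\gamma)\,\phi^2$ in the Loewner order, whence $\rho(\kappa)\le\max_j\beta_j/\gamma<1$ and the DFE is stable for every admissible $\phi_0$. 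The main obstacle is the quadratic identity displayed above together with the small inequality $\zeta^{\text{crit}}P<\gamma$ that fixes the orientation of the parabola (and the sign under the square root defining $\tilde\phi$); the continuity argument replacing a direct verification of $\operatorname{tr}B<2\gamma/\beta_0$ on $\mathcal I$ is the other step one might overlook.
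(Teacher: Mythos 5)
Your proposal is correct, and the entries, the $(M-2)$-fold eigenvalue $R_0^{\text{hom}}(1-\phi_0)^2$, the identities $\det B=\zeta(1-M\phi_0)^2$, the completed square with vertex at $\phi_0=\tfrac1M$, and the inequality $\zeta^{\text{crit}}P<\gamma$ (which reduces to $-M(M-1)(R_0^{\text{hom}}-1)^2<0$) all check out, reproducing exactly $\zeta^{\text{crit}}$ of (Eq.~\ref{zeta_critic_star}) and $\Tilde{\phi}$ of (Eq.~\ref{interv_critic}). The skeleton is the same as the paper's: the paper also isolates the eigenvalue $R_0^{\text{hom}}(1-\phi_0)^2$ with multiplicity $M-2$ and a residual quadratic, obtained by writing $\kappa=\frac{\beta_0}{\gamma}(\mathcal P+\mathcal D)$ with $\mathcal P$ of rank two and invoking Collings' expansion, where you instead exhibit the invariant splitting $W\oplus U$ explicitly. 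The genuine divergence is in the endgame: the paper converts the quadratic into the Jury inequalities (Eqs.~\ref{cond1star}--\ref{cond3star}), observes they coincide with (A.2--A.4) of Stolerman et al.\ \cite{lucas_dengue_model}, and imports the remainder of that proof, whereas you carry out a self-contained analysis of the $2\times2$ block: the determinant condition $\det(\gamma I-\beta_0 B)>0$ is resolved by completing the square in $\phi_0$, the trace condition is dispatched by a continuity argument anchored at $\phi_0=\tfrac1M$ (where one eigenvalue of $J|_U$ equals $-\gamma$), and the case $\zeta<1/R_0^{\text{hom}}$ is handled by the Loewner bound $\kappa\preceq(\max_j\beta_j/\gamma)\,\phi^2$, which is clean and not in the paper. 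What the paper's route buys is economy and an immediate structural explanation of why the star threshold coincides with the fully connected one (identical Jury conditions); what yours buys is a proof readable without the external appendix, with the geometric meaning of $\phi_0=\tfrac1M$ and of $\zeta^{\text{crit}}$ made transparent. The only under-justified fragment is your parenthetical treatment of $\zeta\ge\zeta^{\text{crit}}$ (the claim that in the upward-parabola subcase the trace condition fails is asserted, not checked), but that regime is not part of the theorem's stated claims, so it does not affect the validity of the proof of the statement as given.
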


\begin{proof}
We first find the next generation matrix (NGM) $\kappa= ( \kappa_{ij})_{M\times M}$, where $ \kappa_{ij}=\frac{1}{\gamma} \sum_{j=1}^M \beta_j\phi_{ij}\phi_{kj} \frac{N_i}{N_j^p}$ (Eq. \ref{NGMFormulya}). Similar to Section \ref{The SIR-network model}, we start from the premise that the populations of each node are equal. i.e. $N_i=N>0$ for all $i \in V$. We also consider a different infection rate in the central node (Eq. \ref{BetaConditionstarshape}). 

Now since we are assuming equal populations and symmetric flux matrix, the NGM can be written as in Eq. \ref{NGMFormulyaReduction}). For a general star-shaped network, the NGM entries can be explicitly calculated. 

In fact, for any $u \in {1,2,3,4,5,...,M-1}$, and $v \in {1,2,3,4,5,...,M-1}$, and $u=v$, then we have the following diagonal elements of the $\kappa$ are
\begin{equation*}
 \begin{aligned}
    &\kappa_{uu}=   \sum_{j=1}^M \beta_j\phi_{uj}\phi_{uj}= \sum_{j=1}^M \beta_j\phi_{uj}^2\\
        &= \beta_1 \phi_{u1}^2+\beta_2\phi_{u2}^2+ \beta_3 \phi_{u3}^2+ \hdots+\beta_u\phi_{uu}^2 +\hdots +\beta_M\phi_{uM}^2  \\
       & = \beta_0 .0^2+ \beta_0. 0^2+\beta_0 . 0^2+ \hdots+\beta_0(1-\phi_0)^2 +\hdots+ \zeta \beta_0 \phi_0^2 \\ 
        & =\beta_0 \left( \zeta \phi_0^2 + (1-\phi_0)^2  \right)\\
        & =\beta_0 \left( p+c_0 \right),
 \end{aligned}
\end{equation*}
and, if $u=v=M$, 
\begin{equation*}
    \begin{aligned}
        &\kappa_{MM}= \sum_{j=1}^M \beta_j\phi_{Mj}^2= \beta_1 \phi_{M1}^2+\beta_2\phi_{M2}^2+ \beta_3 \phi_{M3}^2+...+\beta_M\phi_{MM}^2  \\
       & = \beta_0\phi_0^2+ \beta_0\phi_0^2+\beta_0\phi_0^2+\hdots+ \zeta \beta_0 (1-(M-1)\phi_0)^2  \\ 
    & = \zeta \beta_0 (1-(M-1)\phi_0)^2 +(M-1) \beta_0\phi_0^2 \\ 
    & =(M-1) \beta_0\phi_0^2+\zeta \beta_0 (1-(M-1)\phi_0)^2 \\
             & =\beta_0 \left( M \phi_0^2-\phi_0^2+ \zeta (1-2(M-1)\phi_0+(M-1)^2\phi_0^2 \right) \\ 
 & = \beta_0 \left( M \phi_0^2-\phi_0^2 + \zeta-2 \zeta M\phi_0+2\zeta\phi_0+\zeta M^2\phi_0^2-2\zeta M\phi_0^2+\zeta \phi_0^2 \right)\\ 
 & = \beta_0 \left(\zeta \phi_0^2 + M \phi_0^2-\phi_0^2 + \zeta -2 \zeta M\phi_0 + 2\zeta\phi_0+\zeta M^2\phi_0^2-2\zeta M\phi_0^2 \right)\\ 
   & = \beta_0 \left( \zeta \phi_0^2  +\zeta-2\zeta(M-1)\phi_0+\phi_0^2 \left( M-1-2M\zeta+M^2\zeta \right) \right)\\  
     & = \beta_0 \left( p+ c_1\right).
    \end{aligned}
\end{equation*}
Finally, for any non-diagonal elements of the NGM $\kappa$, when $u \in {1,2,3,4,5,...,M-1}$, and $v \in {1,2,3,4,5,...,M-1}$, and $u \not=v$ we have 
\begin{equation*}
 \begin{aligned}
         &\kappa_{uv}=   \sum_{j=1}^M \beta_j\phi_{uj}\phi_{vj} \\
        &= \beta_1 \phi_{u1}\phi_{v1}+\beta_2 \phi_{u2}\phi_{v2}+\hdots+ \beta_u \phi_{uu}\phi_{vu}+...+\beta_v \phi_{uv}\phi_{vu}...+\beta_M \phi_{uM} \phi_{vM}  \\
          &=\beta_0 0 (0) + \beta_0 .0 + \hdots+ \beta_0 (1-\phi_0).0+\hdots +  0 + \zeta \beta_0 (\phi_0)\phi_0 \\
          &=   \beta_0 ( \zeta \phi_0^2 ) \\
                    & =\beta_0 p,
 \end{aligned}
\end{equation*}
and for $u=1,2,3,4, \hdots, M-1$, 
\begin{equation*}
   \begin{aligned}
          &\kappa_{uM}=   \sum_{j=1}^M \beta_j\phi_{uj}\phi_{Mj} \\
        &= \beta_1 \phi_{u1}\phi_{M1}+\beta_2 \phi_{u2}\phi_{M2}+ \beta_3 \phi_{u3}\phi_{M3}+\hdots+\beta_i \phi_{uu} \phi_{Mu} +\hdots+\beta_M \phi_{uM} \phi_{MM}   \\
          &= \beta_0 0. \phi_0+ \beta_0 0. \phi_0+\beta_0 0. \phi_0+\hdots+\beta_0 (1-\phi_0)\phi_0+...+\zeta \beta_0 \phi_0 (1-(M-1)\phi_0)  \\
          &= \beta_0[ \phi_0-\phi_0^2+\zeta \left(\phi_0- (M-1)\phi_0^2 \right)]  \\
          & =\beta_0 q,
   \end{aligned}
\end{equation*}
and also for $v=1,2,3,4, \hdots, M-1$,
\begin{equation*}
    \begin{split}
        &\kappa_{Mv}=   \sum_{j=1}^M \beta_j\phi_{Mj}\phi_{vj} \\
        &= \beta_1 \phi_{M1}\phi_{v1}+\beta_2 \phi_{M2}\phi_{v2}+ \beta_3 \phi_{M3}\phi_{v3}+\hdots+\beta_i \phi_{Mv} \phi_{vv} +\hdots+\beta_M \phi_{MM} \phi_{vM}   \\
          &= \beta_0 \phi_0.0+ \beta_0  \phi_0 . 0+\beta_0  \phi_0 . 0+\hdots+\beta_0 \phi_0 (1-\phi_0)+...+\zeta \beta_0 \phi_0 (1-(M-1)\phi_0)  \\
          &= \beta_0[ \phi_0-\phi_0^2+\zeta \left(\phi_0- (M-1)\phi_0^2 \right)]  \\
          & =\beta_0 q.
    \end{split}
\end{equation*}

The resulting NGM can be thus written as  
\begin{equation} 
\begin{aligned}\label{NGM2}
&\kappa=\frac{\beta_0}{\gamma} \begin{pmatrix}
   p+c_0  &  p & p & \hdots & q  \\
 p & p+c_0   & p & \hdots & q  \\
                   p & p &  p+c_0 & \hdots & q\\
    \vdots & \vdots & \vdots & \ddots  & \vdots\\
      q  &  q & q & \hdots & p+c_1  \\
        \end{pmatrix} 
        =\frac{\beta_0}{\gamma} \begin{pmatrix}
   p  &  p & p & \hdots & q  \\
            p & p & p & \hdots & q\\
                p & p & p & \hdots & q\\
    \vdots & \vdots  & \vdots & \ddots  & \vdots\\
      q  &  q & q & \hdots & p  \\
        \end{pmatrix} + \frac{\beta_0}{\gamma} \begin{pmatrix}
   c_0  \\
 c_0   \\
  c_0 \\
       \vdots \\
      c_1  \\
        \end{pmatrix} I 
\end{aligned}
    \end{equation}
where $p=\zeta \phi_0^2$, $q=\phi_0-\phi_0^2+\zeta \left(\phi_0- (M-1)\phi_0^2 \right) $, $c_0=(1-\phi_0)^2$, and $$c_1=\zeta-2\zeta(M-1)\phi_0+\phi_0^2 \left( M-1-2M\zeta+M^2\zeta \right).$$  An alternative representation of the NGM can be done by writing 
\begin{equation}
            \kappa= \frac{\beta_0}{\gamma} (\mathcal{P} + \mathcal{D}), \label{StarShapedReturnNGM}
\end{equation}

where $\mathcal{P}=\frac{\beta_0}{\gamma} \begin{pmatrix}
   p  &  p & p & \hdots & q  \\
            p & p & p & \hdots & q\\
                p & p & p & \hdots & q\\
    \vdots & \vdots  & \vdots & \ddots  & \vdots\\
      q  &  q & q & \hdots & p  \\
        \end{pmatrix}$ and the diagonal matrix $\mathcal{D}= \begin{pmatrix}
   c_0  \\
 c_0   \\
  c_0 \\
       \vdots \\
      c_1  \\
        \end{pmatrix} I $.

Since the next generation matrix $\kappa$ in (Eq. \ref{NGM2}) is symmetric, the eigenvalues of $\kappa$ are all real. By doing elementary row or column operations, we see that the rank of the matrix $\mathcal{P}$ is $2$. So all submatrices bigger than $2 \times 2$ have the determinant zero (see, e. g. Friedberg \cite{friedberg_book}). As $\kappa$ in (Eq. \ref{NGM2}) is the sum of a low-rank matrix and the diagonal matrix $\mathcal{D}$; therefore, we write the characteristic polynomial of $\mathcal{P}+\mathcal{D}$ in the form
\begin{equation*}
  (c_0-\lambda)^{M-2} [ \lambda^2-(c_0+c_1+pM)\lambda+\left(c_1c_0+ p(M-1)c_1 +pc_0+(M-1)(p^2-q^2) \right) ]=0 
\end{equation*}
using the expansion (see, Collings \cite{collings}) for the eigenvalues $\lambda$.  
For the square matrix, $\mathcal{P}+\mathcal{D}$, an  eigenvector $V$ and eigenvalue $\lambda$ satisfy the equation 
\begin{equation*}
   \left(\mathcal{P}+\mathcal{D}\right)V= \lambda V,
\end{equation*}
and so for any scalar $R_0^{\text{hom}}$ we have 
\begin{equation*}
   R_0^{\text{hom}} \kappa V = R_0^{\text{hom}}  \left(\mathcal{P}+\mathcal{D}\right)V= R_0^{\text{hom}}\lambda V.
\end{equation*}
Therefore, $R_0^{\text{hom}}\lambda$ is the eigenvalue of the next generation matrix $\kappa$. From the characteristic polynomial, the first eigenvalue of $\kappa$ is $\frac{\beta_0}{\gamma}c_0=R_0^{\text{hom}}c_0=R_0^{\text{hom}}(1-\phi_0)^2$ with multiplicity $M-2$, and the other two eigenvalues are determined from the quadratic term 
\begin{equation}\label{quardraticstar}
    P(\lambda)=\lambda^2-(c_0+c_1+pM)\lambda+\left(c_1c_0+ p(M-1)c_1 +pc_0+(M-1)(p^2-q^2) \right).
\end{equation}
Furthermore, we find the parameter conditions so that all eigenvalues of $\kappa$ are within the unit circle. The first set of eigenvalues automatically satisfies the condition 
\begin{equation} \label{eigenvcon1star}
    R_0^{\text{hom}}(1-\phi_0)^2 <1, 
\end{equation}
since $R_0^{\text{hom}}<1$ by assumption, and $\phi_0 \in   (0,\frac{1}{2}]$. Now, the quadratic (Eq. \ref{quardraticstar}) gives the other two eigenvalues lying within the unit circle which can be tested by applying the Jury conditions (Murray \cite{murray2002mathematical}, page $507$). For a quadratic equation $P(\lambda)=\lambda^2+a_1\lambda+a_0=0$, the Jury conditions tell us $P(1)=1+a_1+a_0>0$, $P(-1)=1-a_1+a_0>0$, and $P(0)=a_0<1$. 

As $R_0^{\text{hom}}\lambda:=\lambda'$ is the eigenvalue of $\kappa$, and so substituting $\lambda=\frac{\lambda'}{R_0^{\text{hom}}}$ in (Eq. \ref{quardraticstar}) we have 
\begin{equation}\label{quardratic2star}
    P(\lambda')=\frac{1}{(R_0^{\text{hom}})^2}[(\lambda')^2-a_1 R_0^{\text{hom}}\lambda'+a_0(R_0^{\text{hom}})^2,
\end{equation}
where $a_0:=\left(c_1c_0+ p(M-1)c_1 +pc_0+(M-1)(p^2-q^2) \right)$ and $a_1:=(c_0+c_1+pM)$.
By applying the Jury conditions to (Eq. \ref{quardratic2star}) we obtain the following inequalities:   
\begin{equation}\label{cond1star}
    \begin{aligned}
          \left\{R_{0}^{\text{hom}} [ \zeta \left( R_{0}^{\text{hom}} (M \phi_0-1)^2-(M-1)\phi_0(M\phi_0-2)-1  \right)  +\phi_0(2-M\phi_0)-1]+1  \right\}>0,
              \end{aligned}
\end{equation}
\begin{equation}\label{cond2star}
    \begin{aligned}
          \left\{R_{0}^{\text{hom}} [ \zeta \left( R_{0}^{\text{hom}} (M \phi_0-1)^2+(M-1)\phi_0(M\phi_0-2)+1  \right)  -\phi_0(2-M\phi_0)+1]+1  \right\}>0,
              \end{aligned}
\end{equation}

\begin{equation}\label{cond3star}
    \begin{aligned}
         (R_{0}^{\text{hom}})^2 \zeta (M \phi_0-1)^2 <1.
              \end{aligned}
\end{equation}

The Jury conditions imply that (Eqs. \ref{cond1star}-\ref{cond3star}) are identical to (Eqs. A.2-A.4) in Stolerman et al. \cite{lucas_dengue_model}. Therefore, the layout of the proof follows the same steps.  
\end{proof}

 Theorem \ref{theoremstar} is not only mathematically remarkable, but it opens a new question:  \emph{Is there a class of networks where the same critical $\zeta$ value and interval $\mathcal{I}$ always emerge?} A common feature of fully connected and star-shaped networks is the connection that all nodes have with the node with a different infection rate (the central node in the star). It is thus natural to explore networks without this feature and observe if the results obtained in Theorem \ref{theoremstar} (and Theorem 2 in \cite{lucas_dengue_model}) still hold. We start the next section with the so-called \emph{cycle-shaped} network, perhaps the most natural example of a network with no such highly connected nodes. 
 
 %If such a class exists, \emph{cycle-shaped} networks are not part of it. The following section provides the first evidence for this fact. We start the next section with a question: \emph{When the cycle network is not bidirectional, what happens?}  

\subsection{Cycle networks}
In the cycle networks, the nodes only connect to their immediate neighbors. These networks are a simple model for small cities or towns where people usually commute short distances. For \emph{unidirectional} cycle networks, people only travel from node $i$ to node $i+1$. This network can be represented by the following flux matrix:
%\begin{equation} \label{CycleShape}
% \phi_{ij}= \begin{cases}
 %                                  \phi_0 & \text{for ~ all ~ $i\in V \setminus M $~ and ~ $j=i+1$} \\
  %                               1-\phi_0 & \text{for~all $i,j \in V$~ and ~$i=j$} \\
   %                            \phi_0  & \text{if $i=M,~ and ~ j=1$}\\
    %                             0  & \text{otherwise},
   %\end{cases}
%\end{equation}

\begin{equation} \label{CycleShape}
\phi_{M \times M}= \begin{pmatrix}
   1-\phi_0  & \phi_0 & 0 & 0 & \hdots  & 0\\
     0  & 1-\phi_0 & \phi_0  & 0 & \hdots & 0\\
     0  & 0 & 1-\phi_0 & \phi_0 & \hdots & 0\\
         \vdots & \vdots &  \vdots & \ddots & \vdots & \vdots\\
           \phi_0  & 0 & 0 & 0 & \hdots &  1-\phi_0
           \end{pmatrix}.
          \end{equation}
    
    where $0 < \phi_0 < 1$. On the other hand, \emph{bidirectional} cycle networks can be summarized as follows by the geometry of the flux matrix:

%\begin{equation} \label{GnCycleShapeReturn}
%\phi_{ij}= \begin{cases}
 %                               \phi_0 & \text{for ~ all ~ $i\in V  \setminus M $~ and ~ $j=i+1$} \\
  %                                  \phi_0 & \text{for ~ all ~ $i\in V ~\&~ i\geq 2$~ and ~ $j=i-1$} \\
   %                             1-2\phi_0 & \text{for~all $i,j \in V$~ and ~$i=j$} \\
    %                         \phi_0  & \text{if $i=M,~ and ~ j=1$}\\
     %                          \phi_0  & \text{if $i=1$~and~$j=M$}\\
      %                          0  & \text{otherwise},
  %\end{cases}
%\end{equation}

\begin{equation} \label{CycleShapeReturn}
    \phi_{M \times M}= \begin{pmatrix}
   1-2\phi_0  & \phi_0 & 0 & \hdots & 0 & \phi_0\\
     \phi_0  & 1-2\phi_0 & \phi_0 & \hdots & 0 & 0\\
     0  & \phi_0 & 1-2\phi_0 & \hdots & 0 & 0\\
         \vdots & \vdots &  \vdots & \ddots & \vdots & \vdots\\
           \phi_0  & 0 & 0 & \hdots & \phi_0 &  1-2\phi_0
           \end{pmatrix}.
    \end{equation}
    where $0 < \phi_0 \leq \frac{1}{2}$.  %From the theoretical point of view, both unidirectional and bidirectional cycle networks are not in the class where the same critical value $\zeta$ (Eq. \ref{zeta_critic} ) and interval $\mathcal{I}$ (Eq. \ref{interval}) always emerge, where $M$ is the heterogeneous node. 

We start by exploring a particular case of a unidirectional cycle with $M=3$, which indicates that unidirectional cycles are not in the class of networks where the critical value $\zeta$ (Eq. \ref{zeta_critic}) and interval $\mathcal{I}$ (Eq. \ref{interval}) emerge. The proof of the following propositions can be found in the appendix.

\begin{prop} \label{controlCycleNeighbourhood}
Let $V=\left\{1,2,3 \right\}$ be the node of a \emph{unidirectional cycle} network and the flux matrix $\phi_{3 \times 3}$ be defined by (Eq. \ref{CycleShape}). We consider the infection rates $\beta_i$ from (Eq. \ref{BetaConditionstarshape}), where all the 
nodes are stable except the third one, i.e, $R_0^{\text{hom}} < 1$ and $\frac{\zeta \beta_0}{\gamma}>1$. Then we have a critical value   $\zeta^{\text{crit}}$ given by 

\begin{equation} \label{zeta_condition_M=3_circle}
\begin{aligned}
        \zeta^{\text{crit}} =  \frac{1}{R_0^{\text{hom}}} \left( \frac{3 R_0^{\text{hom}}-4}{R_0^{\text{hom}}-2} \right)
\end{aligned}
\end{equation}
such that  if $\frac{1}{R_0^{\text{hom}}}<\zeta< \zeta^{\text{crit}}$, then there exists an interval $\mathcal{I}$ around the point $\frac{1}{2}$ given by 
\begin{equation}\label{unicycleinterval}
    \mathcal{I}= \left( \max \left\{0, \frac{1}{2}-\Tilde{\phi} \right\}, \min \left\{1, \frac{1}{2}+\Tilde{\phi} \right\} \right),
\end{equation}
where 
\begin{equation} \label{cycle_tilda_phi_M3}
    \begin{aligned} 
       & \tilde{\phi} =\frac{1}{2} \sqrt{ \frac{ \left(\zeta  R_0^{\text{hom}}(R_0^{\text{hom}}-2)+(4-3R_0^{\text{hom}})\right)}{R_0^{\text{hom}} \left(1+2\zeta-3\zeta R_0^{\text{hom}}
        \right)}}
    \end{aligned}
\end{equation}
such that if $\phi_0 \in \mathcal{I}$, the DFE is stable, and otherwise, it is unstable.   
\end{prop}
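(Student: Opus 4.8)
The plan is to run the same local–stability-at-the-DFE argument used in Proposition~\ref{starprop_m3}, now for the $3\times3$ unidirectional cycle. First I would write down the flux matrix from (Eq.~\ref{CycleShape}) with $M=3$: diagonal entries $1-\phi_0$, super-diagonal entries $\phi_0$, and the wrap-around entry $\phi_{31}=\phi_0$. The key preliminary remark is that although this matrix is \emph{not} symmetric, its column sums are still $1$, so it is doubly stochastic; hence $N_j^p=\sum_k\phi_{kj}N_k=N$ and the reduced next-generation matrix (Eq.~\ref{NGMFormulyaReduction}) still applies despite the asymmetry. Writing $\Phi$ for the flux matrix and $B=\operatorname{diag}(\beta_0,\beta_0,\zeta\beta_0)$, one checks directly that $\kappa=\tfrac1\gamma\,\Phi B\Phi^{\mathsf T}$, which is symmetric; since $B\succeq 0$ it is also positive semidefinite, so all eigenvalues of $\kappa$ are real and nonnegative and $R_0$ is simply the largest one. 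Consequently the DFE is stable exactly when that largest eigenvalue is below $1$ (equivalently, when the corresponding eigenvalue $\gamma(\lambda-1)$ of $J=\gamma(\kappa-I)$ is negative), which is the condition I would analyze.

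Next I would compute $\kappa$ explicitly — a $3\times3$ matrix whose entries are quadratic in $\phi_0$ — and form its characteristic polynomial, a cubic in $\lambda$ with coefficients depending on $\phi_0$, $\zeta$, and $R_0^{\text{hom}}=\beta_0/\gamma$. Unlike the star case, there is no rank-$2$ summand to peel off a free linear factor, so I would either (i) factor the cubic with a symbolic package, isolating the one eigenvalue that stays below threshold for all admissible $\phi_0$ together with a remaining quadratic whose larger root is the ``governing'' eigenvalue, or (ii) bypass the roots entirely and impose the Jury conditions (as in the proof of Theorem~\ref{theoremstar}), equivalently the Routh–Hurwitz conditions on $J$, directly on the characteristic polynomial. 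In either case I expect the governing eigenvalue to attain its extremum at $\phi_0=\tfrac12$: the substitution $\phi_0\mapsto 1-\phi_0$ merely swaps the two homogeneous nodes and leaves the characteristic polynomial invariant, so every eigenvalue is a symmetric function of $\phi_0$ about $\tfrac12$ and hence has a critical point there; it then remains to verify, as in Proposition~\ref{starprop_m3} via the second derivative, that for the governing eigenvalue this critical point is a minimum.

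Evaluating the governing eigenvalue at $\phi_0=\tfrac12$ and setting it equal to the stability threshold should yield exactly $\zeta^{\text{crit}}=\tfrac{1}{R_0^{\text{hom}}}\!\left(\tfrac{3R_0^{\text{hom}}-4}{R_0^{\text{hom}}-2}\right)$ of (Eq.~\ref{zeta_condition_M=3_circle}): for $\zeta<\zeta^{\text{crit}}$ the minimum is below threshold, so there is a genuine subinterval of $\phi_0$ on which the DFE is stable, while for $\zeta\ge\zeta^{\text{crit}}$ the DFE is unstable for every $\phi_0$. Solving the quadratic ``governing eigenvalue $=$ threshold'' for $\phi_0$ produces the two roots $\tfrac12\pm\tilde\phi$ with $\tilde\phi$ as in (Eq.~\ref{cycle_tilda_phi_M3}); intersecting $\left(\tfrac12-\tilde\phi,\tfrac12+\tilde\phi\right)$ with the admissible range $(0,1)$ gives the interval $\mathcal I$ of (Eq.~\ref{unicycleinterval}). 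Finally, for $\zeta<\tfrac1{R_0^{\text{hom}}}$ (i.e.\ $\beta_M<\gamma$) an algebraic estimate on (Eq.~\ref{cycle_tilda_phi_M3}) should give $\tilde\phi>\tfrac12$, so $\mathcal I=(0,1)$ and the DFE is stable for all fluxes — consistent with every node being individually subcritical.

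The main obstacle is precisely the absence of the low-rank structure that made the star computation transparent: the characteristic cubic does not split for free, so one must either handle a somewhat unwieldy closed-form root (and justify positivity of the relevant discriminant/square root both on $(0,1)$ and in the regime $\tfrac1{R_0^{\text{hom}}}<\zeta<\zeta^{\text{crit}}$) or push the Jury/Routh–Hurwitz inequalities through, in which case care is needed to confirm that exactly one inequality is binding and that it is controlled by the behaviour at $\phi_0=\tfrac12$. A secondary but essential point is the asymmetry of $\Phi$: one cannot invoke the ``symmetric flux matrix'' simplifications, and the realness and nonnegativity of the spectrum must instead be recovered from the identity $\kappa=\tfrac1\gamma\,\Phi B\Phi^{\mathsf T}$.
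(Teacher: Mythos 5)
Your route is essentially the paper's: a local stability analysis of the DFE via the (real) spectrum of the symmetric matrix $\gamma\kappa=\Phi B\Phi^{\mathsf T}$, symbolic computation of the eigenvalues, minimization of the governing eigenvalue at $\phi_0=\tfrac12$, the threshold $\zeta^{\text{crit}}$ from its value there, and $\mathcal I$ from the two roots $\tfrac12\pm\tilde\phi$ intersected with $(0,1)$. Your preliminary remarks (double stochasticity gives $N_j^p=N$; symmetry and positive semidefiniteness of $\kappa$ despite the asymmetric flux) are correct, and your worry that the characteristic cubic ``does not split for free'' is unfounded: it does factor, with the explicit linear eigenvalue $\lambda_1=(\beta_0-\gamma)-3\beta_0\phi_0(1-\phi_0)$, which is always negative when $R_0^{\text{hom}}<1$, leaving a quadratic pair $\lambda_{2,3}$ exactly as in Proposition~\ref{starprop_m3}; this is precisely what the paper's symbolic computation produces.

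The one step that fails as stated is your symmetry justification for the critical point at $\phi_0=\tfrac12$: the substitution $\phi_0\mapsto1-\phi_0$ does \emph{not} correspond to swapping the two homogeneous nodes. Writing $P$ for the permutation matrix exchanging nodes $1$ and $2$, conjugation by $P$ turns the unidirectional cycle into its orientation reversal ($\Phi^{\mathsf T}$ with the same $\phi_0$), and at the level of the NGM it places the heterogeneous contribution incorrectly: $\bigl(P\,\gamma\kappa(\phi_0)\,P^{\mathsf T}\bigr)_{11}=\beta_0\bigl((1-\phi_0)^2+\phi_0^2\bigr)+(\zeta-1)\beta_0\phi_0^2$, whereas $\bigl(\gamma\kappa(1-\phi_0)\bigr)_{11}$ carries no $(\zeta-1)$ term, so the two matrices are not conjugate via your relabeling. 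The invariance of the characteristic polynomial under $\phi_0\mapsto1-\phi_0$ is nevertheless true: writing $\Phi=(1-\phi_0)I+\phi_0C$ with $C$ the cyclic shift, one has $\gamma\kappa=\beta_0\Phi\Phi^{\mathsf T}+(\zeta-1)\beta_0(\Phi e_3)(\Phi e_3)^{\mathsf T}$ with $\Phi\Phi^{\mathsf T}=\bigl((1-\phi_0)^2+\phi_0^2\bigr)I+\phi_0(1-\phi_0)(C+C^{\mathsf T})$, and since $C+C^{\mathsf T}$ is invariant under every permutation for $M=3$ while the transposition of nodes $2$ and $3$ sends $\Phi e_3=(0,\phi_0,1-\phi_0)^{\mathsf T}$ to $(0,1-\phi_0,\phi_0)^{\mathsf T}$, that transposition (not the swap of the two homogeneous nodes) conjugates $\kappa(\phi_0)$ into $\kappa(1-\phi_0)$. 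Alternatively, do what the paper does and verify directly that $\partial\lambda_2/\partial\phi_0$ and $\partial\lambda_3/\partial\phi_0$ vanish at $\phi_0=\tfrac12$ with positive second derivatives; with that repair (plus the routine check that the quantity under the square root in $\tilde\phi$ is nonnegative exactly for $\tfrac1{R_0^{\text{hom}}}<\zeta<\zeta^{\text{crit}}$), your plan reproduces the paper's proof.
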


The  $\zeta^{\text{crit}}$ found for \emph{unidirectional cycle} networks  (Eq. \ref{zeta_condition_M=3_circle}) are different from the \emph{star-shaped} networks  (Eq. \ref{zeta_condition_M=3}). The interval where there will be no epidemic is given by (Eq. \ref{unicycleinterval}) in the \emph{unidirectional cycle} networks, which is different from the interval obtained for \emph{star-shaped} networks (Eq. \ref{interval_star_m_3}). We conjecture that this difference also holds for $M>3$.  In section \ref{performence}, we run numerical simulations for different network sizes and observe that our conjecture is likely true.  

If we connect all nodes of a cycle network to the last node of index $M$, we can build different networks and further analyze the corresponding epidemic thresholds.   In the following section, we combine the bidirectional cycle-shaped networks with the star-shaped networks.  

\subsection{Extension of heterogeneous cycle networks}
This section aims to modify bidirectional cycle networks and investigate where we find the minimum $\zeta$ (Eq. \ref{zeta_critic}) and the interval $\mathcal{I}$ (Eq. \ref{interval}). Specifically, we modify the bidirectional cycle by joining the heterogeneous node $M$ directly to the other nodes across the cycle. In this case, the adjacent nodes to the heterogeneous node are disjoint. We call this network \emph{cycle support}. We then extend these simple heuristic networks to the \emph{star-cycle} networks by joining the heterogeneous node's immediate nodes $1$ and $M-1$.       

\subsubsection{Cycle support networks}\label{cycle_supports} %Cycle with the Background Support 
We introduce new networks called \emph{cycle support}, formed by the bidirectional cycle networks and joining the heterogeneous node $M$ directly to the nodes across the cycle. Their general flux matrix is given by:  

   \begin{equation} \label{cycleBacknodes}
    \phi_{M \times M}= \begin{pmatrix}
   1-2\phi_0  & \phi_0 & 0 & 0& \hdots & 0 & \phi_0\\
     \phi_0  & 1-3\phi_0 & \phi_0 & 0 & \hdots & 0 & \phi_0\\
     0  & \phi_0 & 1-3\phi_0 & \phi_0 & \hdots & 0 & \phi_0\\
         \vdots & \vdots &  \vdots & \vdots & \ddots & \vdots & \vdots\\
         0  & 0 & 0 &  \hdots &  \phi_0 & 1-2\phi_0 & \phi_0\\
           \phi_0  &  \phi_0 &     \phi_0 &    \hdots  & \phi_0 & \phi_0 &  1-(M-1)\phi_0
           \end{pmatrix},
    \end{equation}
where $0< \phi_0 \leq \frac{1}{M-1}$, since $\phi_{ij} \geq 0$. We then obtain the following proposition for \emph{cycle support} networks with $5$ nodes (the proof can be found in the Appendix). 

\begin{prop}  \label{star_cycle_without_adj}
Let $V=\left\{1,2,3,4,5 \right\}$ be a node of \emph{cycle-supports} network and let the geometry of its flux matrix be defined by (Eq. \ref{cycleBacknodes}), where $\phi_0 \in (0, \frac{1}{4}]$. The infection rate $\beta_i$ is defined in (Eq. \ref{BetaConditionstarshape}), and we also assume that all the outer nodes are stable ($R_0^{\text{hom}}=\frac{\beta_0}{\gamma}<1$) except the center node $M=5$. Then the critical value of $\zeta$ called $\zeta^{\text{crit}}$ is given by 
\begin{equation*} 
    \zeta^{\text{crit}}= 1+\left( \frac{1-R_0^{\text{hom}}}{R_0^{\text{hom}}} \right)5.
\end{equation*}
Alternatively, the infection threshold of $\beta_0$, $\beta_0^{\text{crit}}$, is given by 
\begin{equation} \label{beta0_condition_M=5Cycle_support}
    \beta_0^{\text{crit}}= \gamma\left( \frac{5}{4+\zeta} \right).
\end{equation}
If $\frac{1}{R_0^{\text{hom}}}< \zeta< \zeta^{\text{crit}}$, then there is an interval around the point $\frac{1}{5}$ given by $\mathcal{I}= \left( \max \left\{0, \frac{1}{5}-\Tilde{\phi} \right\}, \min \left\{\frac{1}{4}, \frac{1}{5}+\Tilde{\phi} \right\} \right)$, where 
\begin{equation*}
    \begin{aligned}
        \Tilde{\phi}= \frac{1}{5} \sqrt{ \frac{\gamma (5\gamma-4\beta_0-\zeta\beta_0 )}{\beta_0(\gamma+4\zeta\gamma-5\zeta\beta_0)}}= \frac{1}{5} \sqrt{ \frac{ 5(1-R_0^{\text{hom}})-R_0^{\text{hom}}(\zeta-1) )}{R_0^{\text{hom}} \left(5 \zeta(1-R_0^{\text{hom}})-(\zeta-1) \right)}},
    \end{aligned}
\end{equation*}
such that if $\phi_0 \in \mathcal{I}$, the DFE is stable; and otherwise it is unstable.  
\end{prop}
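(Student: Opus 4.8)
The plan is to follow the proof of Theorem~\ref{theoremstar} almost verbatim; the only genuinely new work is a slightly richer diagonalization of the next-generation matrix. First I would specialize the flux matrix (Eq.~\ref{cycleBacknodes}) to $M=5$ and, using equal populations and the symmetry of $\phi_{5\times5}$, compute the reduced NGM (Eq.~\ref{NGMFormulyaReduction}), writing $\kappa=\tfrac{\beta_0}{\gamma}K$ with $K_{uv}=\sum_{j=1}^{5}(\beta_j/\beta_0)\,\phi_{uj}\phi_{vj}$. The useful bookkeeping observation is that $K_{u5}=\phi_0-\phi_0^2+\zeta(\phi_0-4\phi_0^2)=:q$ for every outer node $u\in\{1,2,3,4\}$, exactly the $q$ appearing in the star-shaped case with $M=5$; however, unlike the star case, the outer block of $K$ is not constant off the diagonal, so $\kappa$ is \emph{not} a rank-two matrix plus a diagonal matrix.

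Second, I would use the reflection symmetry of the cycle-support graph that fixes the hub $5$ and swaps $1\leftrightarrow4$ and $2\leftrightarrow3$. It commutes with $K$, so $\mathbb{R}^5$ splits into the $2$-dimensional antisymmetric subspace $\{(x,y,-y,-x,0)\}$ and the $3$-dimensional symmetric subspace $\{(x,y,y,x,z)\}$. On the antisymmetric subspace the $\zeta$-dependence cancels, leaving a $2\times2$ matrix whose two (explicit, real) eigenvalues, after multiplication by $R_0^{\text{hom}}<1$, I would check remain in $(0,1)$ for $\phi_0\in(0,\tfrac14]$; inside the symmetric subspace the vector $(1,-1,-1,1,0)$ is an eigenvector with eigenvalue $(1-3\phi_0)^2$, again harmless after multiplication by $R_0^{\text{hom}}$. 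These three eigenvalues are the analogue of the multiplicity-$(M-2)$ eigenvalue $R_0^{\text{hom}}(1-\phi_0)^2$ in Theorem~\ref{theoremstar} and impose no constraint.

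Third --- the crux --- the remaining two eigenvalues come from the $2\times2$ block of $K$ on the span of $(1,1,1,1,0)$ and $(0,0,0,0,1)$, namely $B=\begin{pmatrix}(1-\phi_0)^2+4\zeta\phi_0^2 & q\\ 4q & 4\phi_0^2+\zeta(1-4\phi_0)^2\end{pmatrix}$. I would then verify by direct computation that $\det B=(c_0+4p)(c_1+p)-4q^2$ with $p=\zeta\phi_0^2$, $c_0=(1-\phi_0)^2$, and the same $q$ and $c_1$ as in Theorem~\ref{theoremstar}, and that the trace of $B$ equals $c_0+c_1+5p$; that is, the characteristic polynomial of $B$ is precisely the quadratic $P(\lambda)$ of (Eq.~\ref{quardraticstar}) specialized to $M=5$. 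Consequently the full characteristic polynomial of $\kappa$ factors as the harmless factors from the second step times the star-shaped/fully-connected quadratic with $M=5$.

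Finally, applying the Jury conditions to that quadratic reproduces (Eqs.~\ref{cond1star}--\ref{cond3star}) with $M=5$, which --- as in the proof of Theorem~\ref{theoremstar} and Theorem~2 of \cite{lucas_dengue_model} --- give exactly $\zeta^{\text{crit}}=1+\bigl(\tfrac{1-R_0^{\text{hom}}}{R_0^{\text{hom}}}\bigr)5$, equivalently $\beta_0^{\text{crit}}=\gamma\bigl(\tfrac{5}{4+\zeta}\bigr)$, and the stated interval $\mathcal{I}$ around $\phi_0=\tfrac15$; the form of the endpoints and the reality of $\tilde\phi$ follow from $\tfrac1{R_0^{\text{hom}}}<\zeta<\zeta^{\text{crit}}$, and the regime $\zeta<\tfrac1{R_0^{\text{hom}}}$ is handled identically. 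I expect the main obstacle to be the bookkeeping in the third step: since the cycle-support and star NGMs are genuinely different matrices, there is no structural shortcut --- one must actually assemble $B$ and check the trace/determinant identities (by hand or with a symbolic package), while also carefully justifying that the symmetry decomposition exhausts all five eigenvalues and that the three ``easy'' ones never bind.
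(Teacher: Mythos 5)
Your proposal is correct, but it reaches the result by a genuinely different route than the paper. The paper's own proof of this proposition does not go through the NGM at all: it performs the local stability analysis directly on the $5\times 5$ Jacobian at the DFE, computes all five eigenvalues symbolically (with Mathematica), observes that three of them are $\zeta$-free and negative for $\phi_0\in(0,\tfrac14]$ under $R_0^{\text{hom}}<1$, and then treats the largest remaining eigenvalue $\lambda_4(\phi_0)$ by calculus: $\phi_0=\tfrac15$ is shown to be its minimizer, evaluating there gives $\lambda_4=\tfrac15\beta_0(4+\zeta)-\gamma$ and hence $\zeta^{\text{crit}}$ and $\beta_0^{\text{crit}}$, and the two roots of $\lambda_4$ give the interval $\mathcal{I}$. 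You instead block-diagonalize the NGM using the reflection symmetry $1\leftrightarrow4$, $2\leftrightarrow3$, peel off three harmless eigenvalues, and show the residual $2\times2$ hub block has exactly the trace $c_0+c_1+5p$ and determinant $c_0c_1+4pc_1+pc_0+4(p^2-q^2)$ of the star-shaped quadratic (Eq.~\ref{quardraticstar}) with $M=5$, so the Jury-condition machinery of Theorem~\ref{theoremstar} (and Theorem 2 of \cite{lucas_dengue_model}) finishes the argument. Your key identities do check out: every outer node has $\kappa_{u5}=\beta_0 q/\gamma$ with the same $q$ as the star, the outer row sums equal $c_0+4p$, $\kappa_{55}=\beta_0(p+c_1)/\gamma$, $(1,-1,-1,1,0)$ is an eigenvector with eigenvalue $R_0^{\text{hom}}(1-3\phi_0)^2$ (the paper's $\lambda_1$), and the two $\zeta$-free antisymmetric eigenvalues stay strictly below $1$ on $(0,\tfrac14]$, so they indeed never bind. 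The trade-off: your argument is more structural --- it explains \emph{why} the cycle-support threshold coincides with the star/fully-connected one (the hub block of the NGM is literally the same quadratic) and avoids the minimization step, at the cost of the invariant-subspace bookkeeping; the paper's computation is more brute-force but delivers the explicit eigenvalues, the critical point $\phi_0=\tfrac15$, and the endpoints of $\mathcal{I}$ directly.
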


\begin{remark}
 We observe that if $\zeta <\frac{1}{R_0^{\text{hom}}}$, the DFE is always stable.  Note that we have a similar proof to $\{1, \hdots,5\}$, \ref{star_cycle_without_adj} for the nodes $M=4,5,6,7$. On the other hand, simulations can be performed for every node $M$. 
\end{remark}

%In the following conjecture, we extend the Proposition %\ref{star_cycle_without_adj} for \emph{cycle support} networks.
 
%\begin{conj} \label{cyclesuport_conj}
%Let $V=\left\{1,2,3, \hdots, M \right\}$ be the nodes of a \emph{cycle support} network defined by the symmetric flux matrix (Eq. \ref{cycleBacknodes}), where $0< \phi_0 \leq \frac{1}{M-1}$. We assume that the infection rate $\beta_i = \beta_0$ at node $i$ for all $i = 1,2,3, \cdots, M-1$ and $\beta_M = \zeta \beta_0$. We also suppose that all the 
%nodes are stable ($R_0^{\text{hom}}=\frac{\beta_0}{\gamma}<1$)  except the center $M$. Then we have a minimum  $\zeta^{\text{crit}}$ given by (Eq. \ref{zeta_critic}). If $\frac{1}{R_0^{\text{hom}}}<\zeta< \zeta^{\text{crit}}$, there exists an interval $\mathcal{I}= \left( \max \left\{0, \frac{1}{M}-\Tilde{\phi} \right\}, \min \left\{\frac{1}{M-1}, \frac{1}{M}+\Tilde{\phi} \right\} \right)$, where $\Tilde{\phi}$ is given by (Eq. \ref{interv_critic}),
%such that if $\phi_0 \in \mathcal{I}$, then the DFE is stable, otherwise, it is unstable. Finally, if  $\zeta<\frac{1}{R_0^{\text{hom}}}$, then the DFE is always stable.
%\end{conj}

\subsubsection{Star-cycle networks}
For star-cycle networks, we start from the heterogeneous node $M$ as the center node of the network. Then the center node $M$ is connected to all other outer nodes to create a \emph{star-shaped} network. Finally, the rest of the nodes are in the form of bidirectional cycle networks called \emph{star-cycle} networks. The geometry of the flux matrix is given by:       
\begin{equation} \label{StarCycleReturn}
    \phi_{M \times M}= \begin{pmatrix}
   1-3\phi_0  & \phi_0 & 0 & 0& \hdots & \phi_0 & \phi_0\\
     \phi_0  & 1-3\phi_0 & \phi_0 & 0 & \hdots & 0 & \phi_0\\
     0  & \phi_0 & 1-3\phi_0 & \phi_0 & \hdots & 0 & \phi_0\\
         \vdots & \vdots &  \vdots & \vdots & \ddots & \vdots & \vdots\\
         \phi_0  & 0 & 0 &  \hdots &  \phi_0 & 1-3\phi_0 & \phi_0\\
           \phi_0  &  \phi_0 &     \phi_0 &    \hdots  & \phi_0 & \phi_0 &  1-(M-1)\phi_0
           \end{pmatrix},
    \end{equation}
where $0< \phi_0 \leq \frac{1}{M-1}$, since $\phi_{ij} \geq 0$. Note that the \emph{cycle-support} networks described in Section \ref{cycle_supports} are \emph{star-cycle} networks, where the adjacent node of $M$ does not join with $M$.  We then obtain the following proposition for \emph{star-cycle} networks with $5$ nodes (the proof can be found in the Appendix). 

\begin{prop} \label{star_cycle_th}
Let $V=\left\{1,2,3,4,5 \right\}$ be the nodes of a \emph{star-cycle} network, and the geometry of the corresponding network is defined by the flux (Eq. \ref{StarCycleReturn}), where $M=5, ~\text{and} ~\phi_0 \in (0, \frac{1}{4}]$. The infection rate $\beta_i$ is defined in (Eq. \ref{BetaConditionstarshape}), and we also assume that all the 
nodes are stable ($R_0^{\text{hom}}=\frac{\beta_0}{\gamma}<1$) except the node $M=5$. Then the critical value of $\zeta$ called $\zeta^{\text{crit}}$ is given by 
\begin{equation} \label{zeta_condition_M=5Cycle_star}
    \zeta^{\text{crit}}= 1+\left( \frac{1-R_0^{\text{hom}}}{R_0^{\text{hom}}} \right)5.
\end{equation}
Alternatively, the infection threshold of $\beta_0$, $\beta_0^{\text{crit}}$ is given by 
\begin{equation} \label{beta0_condition_M=5Cycle_star}
    \beta_0^{\text{crit}}= \gamma\left( \frac{5}{4+\zeta} \right).
\end{equation}
If $\frac{1}{R_0^{\text{hom}}}< \zeta< \zeta^{\text{crit}}$, there is an interval around the point $\frac{1}{5}$ given by $\mathcal{I}= \left( \max \left\{0, \frac{1}{5}-\Tilde{\phi} \right\}, \min \left\{\frac{1}{4}, \frac{1}{5}+\Tilde{\phi} \right\} \right)$, where 
\begin{equation*}
    \begin{aligned}
        \Tilde{\phi}= \frac{1}{5} \sqrt{ \frac{\gamma (5\gamma-4\beta_0-\zeta\beta_0 )}{\beta_0(\gamma+4\zeta\gamma-5\zeta\beta_0)}}= \frac{1}{5} \sqrt{ \frac{ 5(1-R_0^{\text{hom}})-R_0^{\text{hom}}(\zeta-1) )}{R_0^{\text{hom}} \left(5 \zeta(1-R_0^{\text{hom}})-(\zeta-1) \right)}},
    \end{aligned}
\end{equation*}
such that if $\phi_0 \in \mathcal{I}$, the DFE is stable, and otherwise it is unstable.      
\end{prop}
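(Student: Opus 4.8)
The plan is to follow the architecture of the proof of Theorem~\ref{theoremstar}: build the next-generation matrix $\kappa$ for the star-cycle network with $M=5$, block-diagonalize it using the symmetry of the outer $4$-cycle, peel off the eigenvalues that are harmless for DFE stability, and show that the remaining $2\times 2$ block produces \emph{exactly} the same quadratic (hence the same Jury conditions) as in the star-shaped case of the same size.

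First I would record a compact formula for the NGM. With equal populations and symmetric flux matrix $\Phi=[\phi_{ij}]$ (Eq.~\ref{StarCycleReturn}), one has $\kappa=\frac{1}{\gamma}\Phi\,\mathrm{diag}(\beta_1,\dots,\beta_5)\,\Phi$, and writing $\mathrm{diag}(\beta_i)=\beta_0 I+\beta_0(\zeta-1)e_5e_5^{T}$ gives $\kappa=R_0^{\text{hom}}\big(\Phi^2+(\zeta-1)\,w\,w^{T}\big)$, where $w=\Phi e_5=(\phi_0,\phi_0,\phi_0,\phi_0,\,1-4\phi_0)^{T}$ is the last column of $\Phi$. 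Concretely $\kappa=R_0^{\text{hom}}K$ with $K$ symmetric, having a single value $a$ on the outer diagonal, $b$ on the edges of the outer cycle $1\!-\!2\!-\!3\!-\!4\!-\!1$, $c$ on its two ``diagonals,'' a constant $q$ in the last row and column, and $d$ in the $(5,5)$ entry; all five are explicit low-degree polynomials in $\phi_0,\zeta$.

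Next I would exploit the dihedral symmetry of the outer cycle. Writing $v_1=(1,1,1,1,0)^{T}$ and $v_2=e_5$, the rank-one term satisfies $w=\phi_0 v_1+(1-4\phi_0)v_2$, so it lives in $\mathrm{span}\{v_1,v_2\}$; consequently $(1,-1,1,-1,0)^{T}$ and the pair $(1,0,-1,0,0)^{T},(0,1,0,-1,0)^{T}$ are eigenvectors of $\Phi$ \emph{and} of $w w^{T}$ (eigenvalue $0$ for the latter), hence eigenvectors of $\kappa$. A short computation gives $\Phi$-eigenvalues $1-5\phi_0$ and $1-3\phi_0$, so $\kappa$ has eigenvalues $R_0^{\text{hom}}(1-5\phi_0)^2$ (simple) and $R_0^{\text{hom}}(1-3\phi_0)^2$ (multiplicity two); since $R_0^{\text{hom}}<1$ and $\phi_0\in(0,\tfrac14]$ these lie in $(0,1)$ and never destabilize the DFE. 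On the complementary invariant subspace $\mathrm{span}\{v_1,v_2\}$, $\kappa$ restricts to a $2\times 2$ matrix whose characteristic polynomial, after absorbing the $R_0^{\text{hom}}$ factor exactly as in the proof of Theorem~\ref{theoremstar}, is $P(\lambda)=\lambda^2-(a+2b+c+d)\lambda+\big((a+2b+c)\,d-4q^2\big)$.

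The crux --- and the step I expect to be the main obstacle, although it is only a finite computation --- is to verify the identities $a+2b+c=(1-\phi_0)^2+4\zeta\phi_0^2$, $\ d=4\phi_0^2+\zeta(1-4\phi_0)^2$, and $q=\phi_0(1-\phi_0)+\zeta\phi_0(1-4\phi_0)$. These say precisely that $a+2b+c$, $d$, $q$ coincide with the quantities $(M-1)p+c_0$, $p+c_1$, $q$ appearing in the proof of Theorem~\ref{theoremstar} at $M=5$; hence $P(\lambda)$ is \emph{identical} to the quadratic (Eq.~\ref{quardraticstar}) with $M=5$. Applying the Jury conditions to $P(\lambda)$ then reproduces verbatim (Eqs.~\ref{cond1star}--\ref{cond3star}) with $M=5$, which are (Eqs.~A.2--A.4) of \cite{lucas_dengue_model}; from there the value of $\zeta^{\text{crit}}$, the equivalent threshold $\beta_0^{\text{crit}}=\gamma\,5/(4+\zeta)$, the formula for $\Tilde{\phi}$ and the interval $\mathcal{I}$, and the ``always stable'' conclusion when $\zeta<1/R_0^{\text{hom}}$, all follow exactly as in the star-shaped case. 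The only place the cycle geometry could have intruded is through the two ``extra'' eigenvalues, and the bound $\phi_0\le\tfrac14$ (forced by nonnegativity of $\Phi$) comfortably keeps them inside the unit disc.
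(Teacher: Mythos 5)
Your proposal is correct, but it takes a genuinely different route from the paper's own proof. The paper proves this proposition by a direct local stability analysis of the Jacobian at the DFE: it computes all five eigenvalues symbolically (via Mathematica), observes that $\beta_0(1-3\phi_0)^2-\gamma$ (multiplicity $2$) and $\beta_0(1-5\phi_0)^2-\gamma$ are automatically negative, shows $\phi_0=\tfrac15$ is the minimizer of the dominant remaining eigenvalue $\lambda_3$, reads off $\zeta^{\text{crit}}$ and $\beta_0^{\text{crit}}$ from $\lambda_3(\tfrac15)<0$, and obtains $\mathcal{I}$ from the roots of $\lambda_3$ in $\phi_0$. You instead work with the next-generation matrix, writing $\kappa=R_0^{\text{hom}}\bigl(\Phi^2+(\zeta-1)ww^{T}\bigr)$ with $w=\Phi e_5$, use the symmetry of the outer $4$-cycle to peel off the eigenvalues $R_0^{\text{hom}}(1-5\phi_0)^2$ and $R_0^{\text{hom}}(1-3\phi_0)^2$ (which correspond exactly to the paper's harmless Jacobian eigenvalues), and reduce to the invariant plane $\operatorname{span}\{(1,1,1,1,0)^{T},e_5\}$; your identities $a+2b+c=(1-\phi_0)^2+4\zeta\phi_0^2$, $d=4\phi_0^2+\zeta(1-4\phi_0)^2$, $q=\phi_0(1-\phi_0)+\zeta\phi_0(1-4\phi_0)$ do check out, so the resulting quadratic has the same trace and determinant as (Eq.~\ref{quardraticstar}) at $M=5$, and the Jury-condition argument of Theorem~\ref{theoremstar} (equivalently Eqs.~A.2--A.4 of the cited reference) then yields $\zeta^{\text{crit}}$, $\beta_0^{\text{crit}}$, $\Tilde{\phi}$ and $\mathcal{I}$ verbatim. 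What your approach buys is a structural explanation of \emph{why} the star-cycle shares the star-shaped/fully-connected threshold: the extra cycle edges only influence eigendirections orthogonal to the heterogeneity vector $w$, and those stay inside the unit disc for $\phi_0\in(0,\tfrac14]$; it also avoids black-box symbolic computation. What the paper's proof buys is self-containedness: it produces the interval directly by locating and minimizing the dominant eigenvalue in $\phi_0$, without delegating the final step to the Jury-condition analysis carried out elsewhere.
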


We observe that, if $\zeta <\frac{1}{R_0^{\text{hom}}}$, DFE is always stable. Using symbolic calculations with the software Mathematica, we can also perform a similar analysis for  $M=4,5,6,7,9,11, 13$. In the following conjecture, we expand the Proposition \ref{star_cycle_th} to an arbitrary size of \emph{star-cycle} networks.
%\end{remark}

\begin{conj} \label{starcycle_conj}
Let $V=\left\{1,2,3, \hdots, M \right\}$ be the nodes of a \emph{star-cycle} network defined by the symmetric flux matrix (Eq. \ref{StarCycleReturn}), where $0< \phi_0 \leq \frac{1}{M-1}$. We assume that the infection rate $\beta_i = \beta_0$ at node $i$ for all $i = 1,2,3, \cdots, M-1$ and $\beta_M = \zeta \beta_0$. We also suppose that all the 
nodes are stable $(R_0^{\text{hom}}=\frac{\beta_0}{\gamma}<1)$  except the center $M$. Then we have a minimum  $\zeta^{\text{crit}}$ given by (Eq. \ref{zeta_critic} or \ref{zeta_critic_star}). If $\frac{1}{R_0^{\text{hom}}}<\zeta< \zeta^{\text{crit}}$, there exists an interval $\mathcal{I}= \left( \max \left\{0, \frac{1}{M}-\Tilde{\phi} \right\}, \min \left\{\frac{1}{M-1}, \frac{1}{M}+\Tilde{\phi} \right\} \right)$, where $\Tilde{\phi}$ is given by  (Eq. \ref{interv_critic}),
such that if $\phi_0 \in \mathcal{I}$, then the DFE is stable. Otherwise, it is unstable. Finally, if  $\zeta<\frac{1}{R_0^{\text{hom}}}$ i.e. $\beta_M < \gamma$, then DFE is always stable.
\end{conj}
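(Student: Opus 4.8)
\textbf{Proof proposal for Conjecture \ref{starcycle_conj}.}

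The plan is to mimic the proof of Theorem \ref{theoremstar} by computing the next-generation matrix $\kappa$ for the general star-cycle flux matrix (Eq. \ref{StarCycleReturn}) and exhibiting the same low-rank-plus-diagonal decomposition $\kappa = \frac{\beta_0}{\gamma}(\mathcal{P}+\mathcal{D})$ that appeared in (Eq. \ref{StarShapedReturnNGM}). First I would write down $\kappa_{uv} = \frac{1}{\gamma}\sum_{j=1}^M \beta_j \phi_{uj}\phi_{vj}$ explicitly, splitting into cases: $u,v$ both interior with $|u-v|\ge 2$; $u,v$ adjacent interior nodes; $u=v$ interior; $u$ interior and $v=M$; and $u=v=M$. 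Because every interior node of the star-cycle is connected to $M$ and to exactly two cycle-neighbors, the cross terms through $j=M$ contribute a common value, and the cycle structure contributes only to the diagonal and the immediate off-diagonal entries. The hoped-for outcome is that, just as in the star case, $\kappa$ splits as a rank-$2$ matrix $\mathcal{P}$ with constant off-``center'' block plus a diagonal correction $\mathcal{D} = \mathrm{diag}(c_0,\dots,c_0,c_1)$, where the discrepancies introduced by the cycle edges get absorbed into $\mathcal{D}$ (i.e.\ they only perturb diagonal entries). If that holds, the characteristic polynomial factors as $(\text{something})^{M-2}$ times a quadratic, the Jury conditions reduce to (Eqs. \ref{cond1star}--\ref{cond3star}), and the argument of \cite{lucas_dengue_model} closes the proof verbatim; the final clause $\zeta<1/R_0^{\text{hom}}\Rightarrow$ DFE always stable follows from the same algebraic manipulation showing $\tilde\phi > 1/M$.

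The main obstacle — and the reason this is only a conjecture — is precisely that the clean decomposition is \emph{not} obviously available for general $M$. In the star-cycle matrix the off-diagonal entry $\kappa_{uv}$ between two non-adjacent interior nodes $u,v$ equals $\beta_0\zeta\phi_0^2$ (only the $j=M$ term survives), but for adjacent interior nodes $u,v$ there is an extra contribution $\beta_0\big[(1-3\phi_0)\phi_0 + \phi_0(1-3\phi_0)\big]$ coming from $j=u$ and $j=v$, so $\kappa_{u,u+1}\ne\kappa_{u,v}$ in general. Thus $\mathcal{P}$ is not literally the constant block it was for the pure star; it is that block plus a \emph{cyclic tridiagonal} perturbation. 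The real work is to show that this cyclic perturbation, together with the diagonal, still yields the \emph{same} quadratic $P(\lambda)$ — equivalently, that the extra eigenvalues it introduces remain of the form $R_0^{\text{hom}}(1-\phi_0)^2<1$ or are otherwise dominated. One promising route is to exploit the circulant symmetry of the interior block: after a discrete Fourier change of basis on the $M-1$ interior coordinates, the cyclic tridiagonal part diagonalizes, the fully-symmetric Fourier mode couples to the center node giving the $2\times2$ block, and all non-trivial Fourier modes give eigenvalues $R_0^{\text{hom}}\big[(1-3\phi_0)^2 + \text{(small, mode-dependent term)}\big]$ that one must bound below $1$ for $\phi_0\in(0,\tfrac1{M-1}]$. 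Carrying out that Fourier reduction rigorously, and checking the mode-dependent eigenvalues never escape the unit circle under the stated hypotheses, is exactly the gap; the Mathematica verifications for $M=4,5,6,7,9,11,13$ cited in the text are the evidence that it closes, but a uniform-in-$M$ argument is what a full proof requires.

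I would organize the write-up as: (i) Lemma — explicit entries of $\kappa$ for (Eq. \ref{StarCycleReturn}); (ii) Lemma — the Fourier/circulant decomposition of the interior block and the resulting block-diagonal form of $\kappa$; (iii) Proposition — the non-principal eigenvalues are $<1$ for all $\phi_0\in(0,\tfrac1{M-1}]$ when $R_0^{\text{hom}}<1$; (iv) reduction of the remaining $2\times2$ block's Jury conditions to (Eqs. \ref{cond1star}--\ref{cond3star}) and invocation of \cite{lucas_dengue_model}. Steps (i) and (iv) are routine bookkeeping; step (ii) is standard linear algebra; step (iii) is the crux and is where I expect the estimate to be delicate, since the bound must be uniform in both the Fourier index and the network size $M$.
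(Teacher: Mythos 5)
The statement you are addressing is stated in the paper only as a conjecture: the paper contains no proof of it for general $M$. Its supporting evidence is Proposition \ref{star_cycle_th} (the $M=5$ case), proved in the appendix not via the NGM decomposition but by symbolically computing the eigenvalues of the Jacobian at the DFE for the flux matrix (Eq. \ref{StarCycleReturn}) — namely $\beta_0(1-3\phi_0)^2-\gamma$ with multiplicity two, $\beta_0(1-5\phi_0)^2-\gamma$, and a pair coming from a quadratic expression — followed by the same critical-point analysis at $\phi_0=\tfrac15$ used for the star; the text then reports analogous Mathematica checks for $M=4,\dots,13$. Your proposal is therefore correctly calibrated: it does not claim a complete proof, and it pinpoints exactly why the argument of Theorem \ref{theoremstar} does not transfer verbatim, namely that the cycle edges destroy the rank-two-plus-diagonal structure of $\kappa$, so the extra terms cannot simply be absorbed into $\mathcal{D}$.

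Your circulant/Fourier route is genuinely different from anything in the paper (which never attempts a uniform-in-$M$ argument), and it is consistent with the paper's data: for $M=5$ the non-principal eigenvalues $R_0^{\text{hom}}(1-3\phi_0)^2$ (twice) and $R_0^{\text{hom}}(1-5\phi_0)^2$ are precisely the Fourier-mode values $R_0^{\text{hom}}\left[(1-3\phi_0)+2\phi_0\cos\left(2\pi k/(M-1)\right)\right]^2$, $k=1,2,3$. Two remarks on your plan. First, your step (iii) is likely less delicate than you fear: since $(1-3\phi_0)+2\phi_0\cos\theta$ ranges over $[1-5\phi_0,\,1-\phi_0]$ and $0<\phi_0\le\tfrac{1}{M-1}$ with $M\ge 4$, one has $\left|(1-3\phi_0)+2\phi_0\cos\theta\right|<1$ uniformly in the mode and in $M$, and the center row and column couple only to the symmetric mode because $\kappa_{uM}=\beta_0\left[\phi_0(1-\phi_0)+\zeta\phi_0\bigl(1-(M-1)\phi_0\bigr)\right]$ is constant in $u$ (the same $q$ as in the star case); the remaining work is to check that the resulting $2\times 2$ block reproduces the quadratic (Eq. \ref{quardraticstar}) so that the Jury conditions (Eqs. \ref{cond1star}--\ref{cond3star}) and the argument of \cite{lucas_dengue_model} apply. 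Second, a bookkeeping slip: for interior nodes at cycle-distance exactly two, the shared neighbor contributes an extra $\beta_0\phi_0^2$ to $\kappa_{uv}$, so the perturbation of the constant block is a circulant of bandwidth two, not merely tridiagonal; this does not affect the Fourier strategy, but your case list in step (i) must include it. With those corrections, your outline is a plausible path to upgrading the conjecture to a theorem, which the paper itself does not do.
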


\subsection{More networks based on the star-shaped geometry}

In the previous sections, we studied networks with different structures, but assuming an equal flux $\phi_0$ between any two connected nodes. Realistically, the fluxes between nodes could be different. In this section, we explore more general networks with some of the structure of star-shaped networks, but assuming the existence of two different flux values $\phi_0$ and $\phi_1$.

%\subsection{A few General network models}

\subsubsection{Star-triangle networks}

We consider a \emph{star-triangle} network with an odd number of nodes $M$. The nonnegative entries of its flux matrix $\phi_{M \times M}$ are defined by  

 \begin{equation} \label{startriangleflux}
 \phi_{ij}= \begin{cases}
                                   \phi_0 & \text{if $i=M, ~\text{and}~ j \in V \setminus M $} ~ \text{ or ~if $j=M, ~ \text{and} ~ i \in V\setminus M $} \\
                                 1-\phi_0-\phi_1  &  \text{if $i=j,~ \text{and} ~ ~ i,j \in V \setminus M$}\\
                                 1-(M-1)\phi_0  &  \text{if $i=j=M$}\\
                                  \phi_1  &  \text{for all $i \in V \setminus \{M,M-1\}  ~ \&~ j=i+1$}\\
                                 \phi_1  &  \text{for all $j \in V \setminus \{M,M-1\}  ~ \&~ i=j+1$}\\
                                  0 & \text{otherwise},
                                     \end{cases}
\end{equation}
where $\phi_0+\phi_1<1, 0<\phi_0 \leq \frac{1}{M-1}$, and $\phi_1>0$. Alternatively, the explicit form of the flux matrix (Eq. \ref{starshape}) is given by 

\begin{equation*} 
    \phi_{M \times M}= \begin{pmatrix}
   1-\phi_0-\phi_1  & \phi_1 & \hdots  & 0 & 0 & \phi_0\\
     \phi_1 & 1-\phi_0-\phi_1 &  \hdots & 0 & 0  &\phi_0\\
        \vdots & \vdots &   \ddots &  \vdots & \vdots &\vdots\\
            0  & 0 &  \hdots & 1-\phi_0-\phi_1 & \phi_1 &\phi_0\\
            0 & 0   & \hdots & \phi_1 & 1-\phi_0-\phi_1 & \phi_0\\
           \phi_0  & \phi_0 &   \hdots & \phi_0 & \phi_0 & 1-(M-1)\phi_0
           \end{pmatrix}.
    \end{equation*}

In the following theorem, we find the exact minimum $\zeta$ and interval $\mathcal{I}$ for \emph{star-triangle} networks as we do for \emph{fully connected} networks, where the flux between nodes can prevent an epidemic outbreak if it lies in a specific interval. The proof can be found in the appendix. 

\begin{thm} \label{theoremstartriangle}
Let $V=\left\{1,2,3, \hdots, M \right\}$ be the nodes of a star-triangle network defined by the flux matrix (\ref{startriangleflux}), where $\phi_0+\phi_1<1, 0<\phi_0 \leq \frac{1}{M-1}$, and $\phi_1>0$. We assume that the infection rate $\beta_i = \beta_0$ at node $i$ for all $i = 1,2,3, \cdots, M-1$ and $\beta_M = \zeta \beta_0$. We also suppose that all the 
nodes are stable $(R_0^{\text{hom}}=\frac{\beta_0}{\gamma}<1)$  except the center node $M$. Then we have a minimum  $\zeta^{\text{crit}}$ given by (Eq. \ref{zeta_critic} or \ref{zeta_critic_star}). If $\frac{1}{R_0^{\text{hom}}}<\zeta< \zeta^{\text{crit}}$, there exists an interval $\mathcal{I}= \left( \max \left\{0, \frac{1}{M}-\Tilde{\phi} \right\}, \min \left\{\frac{1}{M-1}, \frac{1}{M}+\Tilde{\phi} \right\} \right)$, where $\Tilde{\phi}$ is given by (Eq. \ref{interv_critic}),
such that if $\phi_0 \in \mathcal{I}$, then the DFE is stable, and otherwise it is unstable.     
\end{thm}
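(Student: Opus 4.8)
The plan is to follow the same strategy that proved Theorem~\ref{theoremstar}: compute the next-generation matrix (NGM) $\kappa$ explicitly, exhibit the crucial decomposition $\kappa = \frac{\beta_0}{\gamma}(\mathcal{P}+\mathcal{D})$ where $\mathcal{P}$ has rank $2$ and $\mathcal{D}$ is diagonal, and then reduce the spectral condition ``all eigenvalues of $\kappa$ inside the unit disk'' to the three Jury inequalities, which we show coincide with (Eqs.~\ref{cond1star}--\ref{cond3star}). First I would use the symmetry of the flux matrix (Eq.~\ref{startriangleflux}) and the equal-population assumption to write $\kappa_{uv}=\frac{1}{\gamma}\sum_{j=1}^M \beta_j \phi_{uj}\phi_{vj}$ as in (Eq.~\ref{NGMFormulyaReduction}), and then evaluate the four types of entries: diagonal outer entries $\kappa_{uu}$ for $u\le M-1$, the corner $\kappa_{MM}$, outer–outer off-diagonal $\kappa_{uv}$ (distinguishing the cycle-adjacent pairs $|u-v|=1$ from the non-adjacent ones), and the spoke entries $\kappa_{uM}=\kappa_{Mu}$. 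The key point I expect is that, although the triangle bonds $\phi_1$ appear in the individual entries, the off-diagonal structure still splits as a rank-$2$ ``$p$/$q$'' part plus a diagonal correction: the $\phi_1$-contributions to off-diagonal entries should all land on the diagonal (since a node $i$ has $\phi_1$ only in columns $i\pm1$, and the product $\phi_{uj}\phi_{vj}$ with $u\neq v$ picks up a $\phi_1$ term only when $u,v$ are the two neighbors of a common $j$, contributing a term that I expect to either vanish or be absorbable). I would then write $\kappa = \frac{\beta_0}{\gamma}(\mathcal{P}+\mathcal{D})$ with $\mathcal{P}$ the same low-rank block matrix with entries $p$ on the outer block and $q$ on the last row/column, and $\mathcal{D}=\mathrm{diag}(d_1,\dots,d_{M-1},d_M)$ collecting $(1-\phi_0-\phi_1)^2$-type terms plus the $\phi_1^2$ self-contributions.

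Next I would argue, exactly as in Theorem~\ref{theoremstar}, that $\mathcal{P}$ has rank $2$ by elementary row operations, so the characteristic polynomial of $\mathcal{P}+\mathcal{D}$ factors as $(\text{diagonal eigenvalue}-\lambda)^{M-2}$ times a quadratic $P(\lambda)=\lambda^2-a_1\lambda+a_0$, provided the outer diagonal entries $d_1,\dots,d_{M-1}$ are all equal — which they are here, since every outer node has exactly the same local structure ($1-\phi_0-\phi_1$ on the diagonal, two $\phi_1$ cycle-bonds, one $\phi_0$ spoke). The repeated eigenvalue of $\kappa$ is then $R_0^{\text{hom}}\cdot d_{\text{out}}$ with $d_{\text{out}}=(1-\phi_0-\phi_1)^2+\phi_1^2$ (or whatever the explicit computation yields); I would check $R_0^{\text{hom}} d_{\text{out}}<1$ using $R_0^{\text{hom}}<1$ together with the constraint $\phi_0+\phi_1<1$, exactly as (Eq.~\ref{eigenvcon1star}) was checked. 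The two remaining eigenvalues come from the quadratic, and applying the Jury conditions $P(1)>0$, $P(-1)>0$, $P(0)<1$ (after the substitution $\lambda=\lambda'/R_0^{\text{hom}}$) produces three inequalities in $\phi_0,\zeta,R_0^{\text{hom}}$.

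The main obstacle — and the only place real computation is needed — is verifying that these three Jury inequalities are \emph{algebraically identical} to (Eqs.~\ref{cond1star}--\ref{cond3star}), i.e.\ that all dependence on the triangle flux $\phi_1$ cancels. Heuristically this must happen because the $\phi_1$-bonds form a homogeneous subnetwork among the outer nodes (all with rate $\beta_0$), and by the homogeneous-network principle (Theorem on homogeneous networks in \cite{lucas_dengue_model}) such internal structure contributes only $\beta_0/\gamma$ along the diagonal and cannot shift the threshold; but to make this rigorous I would just expand $a_0$ and $a_1$ symbolically, substitute the explicit $p,q,d_{\text{out}},d_M$, and confirm the $\phi_1$ terms drop out of $P(1)$, $P(-1)$, $P(0)$. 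Once the reduction to (Eqs.~\ref{cond1star}--\ref{cond3star}) is established, the remainder of the argument — solving for $\zeta^{\text{crit}}$, locating the interval $\mathcal{I}$ around $\phi_0=1/M$, and the boundary cases $\zeta<1/R_0^{\text{hom}}$ and $\zeta>\zeta^{\text{crit}}$ — is word-for-word the corresponding part of \cite{lucas_dengue_model}, so I would simply cite it as was done at the end of the proof of Theorem~\ref{theoremstar}.
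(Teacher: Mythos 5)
Your high-level strategy (compute the NGM, factor its characteristic polynomial into a large power times a quadratic, apply the Jury conditions after the substitution $\lambda=\lambda'/R_0^{\text{hom}}$, and observe that the resulting inequalities coincide with Eqs.~\ref{cond1star}--\ref{cond3star} so that the rest follows from \cite{lucas_dengue_model}) is exactly the paper's, and the Jury step does work out as you predict. However, the central structural claim of your proposal is false, and it stems from a misreading of the geometry. In the star-triangle network (Eq.~\ref{startriangleflux}) the outer nodes are joined in disjoint pairs (node $2i-1$ with node $2i$, which is why $M$ is odd), so each outer node carries exactly \emph{one} $\phi_1$ bond, not two cycle bonds; with two bonds the rows could not sum to one given the diagonal entry $1-\phi_0-\phi_1$. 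More importantly, the $\phi_1$ contributions do \emph{not} land only on the diagonal of $\kappa$: for a partnered pair $u,v$ the terms $j=u$ and $j=v$ give $\kappa_{uv}=\frac{\beta_0}{\gamma}\left(\zeta\phi_0^2+2\phi_1(1-\phi_0-\phi_1)\right)=\frac{\beta_0}{\gamma}(p+c_2)$ with $c_2=-2\phi_1(-1+\phi_0+\phi_1)>0$, whereas non-partnered outer pairs give only $\frac{\beta_0}{\gamma}p$. Hence $\kappa$ is not (rank $2$) plus (diagonal); it is (rank $2$) plus a block-diagonal correction with $2\times 2$ blocks on the partner pairs, and your proposed factorization $(\text{diagonal eigenvalue}-\lambda)^{M-2}\cdot(\text{quadratic})$, with a single repeated outer eigenvalue $R_0^{\text{hom}}\bigl(\phi_1^2+(1-\phi_0-\phi_1)^2\bigr)$ of multiplicity $M-2$, is simply not the spectrum of this matrix.

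The paper instead works with the true matrix (Eq.~\ref{NGM3}) and obtains the characteristic polynomial $(c_0+c_2-\lambda)^{\frac{M-3}{2}}(c_0-c_2-\lambda)^{\frac{M-1}{2}}\bigl(\lambda^2-a_1\lambda+a_0\bigr)$, i.e.\ \emph{two} families of outer eigenvalues of $\kappa$, namely $R_0^{\text{hom}}(c_0+c_2)=R_0^{\text{hom}}(1-\phi_0)^2$ and $R_0^{\text{hom}}(c_0-c_2)=R_0^{\text{hom}}(1-\phi_0-2\phi_1)^2$; the second family never appears in your outline and must be checked separately to lie inside the unit circle (it does, using $R_0^{\text{hom}}<1$ together with $\phi_0+\phi_1<1$, $\phi_1>0$). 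Only after accounting for this block structure does one get the quadratic factor whose Jury conditions reduce, $\phi_1$-free, to Eqs.~\ref{cond1star}--\ref{cond3star}, at which point the conclusion follows as in Theorem~\ref{theoremstar}. So the gap is concrete: you need the $2\times2$ pair blocks (equivalently, a symmetric/antisymmetric eigenvector argument within each pair, or the expansion used in the paper following Collings) rather than a diagonal correction; your heuristic that the $\phi_1$ terms "vanish or are absorbable" fails precisely on the partnered off-diagonal entries, which is where the new eigenvalue $R_0^{\text{hom}}(1-\phi_0-2\phi_1)^2$ comes from.
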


In the previous theorem, we note that if $\zeta<\frac{1}{R_0^{\text{hom}}}$, i.e., $\beta_M < \gamma$, then DFE is always stable. Let us now generalize the flux matrix (Eq. \ref{startriangleflux}) accounting for the other nodes with different fluxes in the following way: 

\begin{equation} \label{genstartriangle}
 \phi_{M \times M}= \begin{cases}
                                \phi_{2i-1,2i}=   \phi_{i}  & \text{ $1 \leq i \leq  (\frac{M-1}{2})$} \\
                                \phi_{2i,2i-1}=   \phi_{i}  & \text{ $1 \leq i \leq (\frac{M-1}{2})$} \\
                                \phi_{M,i}=   \phi_{(\frac{M+1}{2})}  & \text{ $1 \leq i \leq M-1$} \\
                                \phi_{i,M}=   \phi_{(\frac{M+1}{2})}  & \text{ $1 \leq i \leq M-1$}\\
                                 \phi_{i,i}= 1- \sum_{j=1}^{M} \phi_{i,j}  & \text{ $1 \leq i \leq M-1$ ~ and ~$i \neq j$}\\
                                 \phi_{M,M}=1-(M-1) \phi_{(\frac{M+1}{2})} \\
                                 0  & \text{otherwise},
   \end{cases}
    \end{equation}
where for all $\phi_{i,j} >0, i , j \in V$, and $M$ is an odd number. The fraction of movement from the center node $M$ to the other nodes is defined as $ \phi_{(\frac{M+1}{2})}=\phi_0$ to be consistent with the other networks. We prove the following result for this more general \emph{star-triangle} network for the network size $M=11$ (the proof can be found in the Appendix).

\begin{prop} \label{star_triangle_gen_M_11}
Let $V=\left\{1,2,3,4, \hdots, 11 \right\}$ be the nodes of a star-triangle network defined by the symmetric flux matrix (Eq. \ref{genstartriangle}) and $M=11$ be an odd number. We assume that the infection rate $\beta_i = \beta_0$ at node $i$ for all $i = 1,2,3, \cdots, 10$ and $\beta_{11} = \zeta \beta_0$. We also suppose that all the nodes are stable $(R_0^{\text{hom}}=\frac{\beta_0}{\gamma}<1)$  except the center node $M$. Then the critical value of $\zeta$ called $\zeta^{\text{crit}}$ is given by 

\begin{equation} \label{star_triangle_gen_M_11_zeta_cond}
    \zeta^{\text{crit}}= 1+\left( \frac{1-R_0^{\text{hom}}}{R_0^{\text{hom}}} \right)11. 
\end{equation}
Alternatively, the infection threshold of $\beta_0$, $\beta_0^{\text{crit}}$ is given by 
\begin{equation} \label{beta0_condition_M=11star_triangle}
    \beta_0^{\text{crit}}= \gamma\left( \frac{11}{10+\zeta} \right).
\end{equation}
If $\frac{1}{R_0^{\text{hom}}}<\zeta< \zeta^{\text{crit}}$, there exists an interval $\mathcal{I}= \left( \max \left\{0, \frac{1}{11}-\Tilde{\phi} \right\}, \min \left\{\frac{1}{10}, \frac{1}{11}+\Tilde{\phi} \right\} \right)$, where $\Tilde{\phi}$ is given by (Eq. \ref{interv_critic}),
such that if $\phi_0 \in \mathcal{I}$, then the DFE is stable, and otherwise it is unstable. 
\end{prop}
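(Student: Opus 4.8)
The plan is to mirror the proof of Theorem~\ref{theoremstar}: write down the next-generation matrix $\kappa$ from (Eq.~\ref{genstartriangle}), use the symmetries of the star-triangle geometry to block-diagonalize it, and show that the only block able to carry an eigenvalue outside the unit circle is exactly the $2\times2$ block --- equivalently, the quadratic factor --- that already controlled the fully connected and star-shaped cases. Since the flux matrix is symmetric and all node populations are equal, $\kappa_{uv}=\frac{1}{\gamma}\sum_{j}\beta_j\phi_{uj}\phi_{vj}$ as in (Eq.~\ref{NGMFormulyaReduction}). Label the five ``triangle pairs'' $(1,2),(3,4),(5,6),(7,8),(9,10)$, with internal fluxes $\phi_1,\dots,\phi_5$ respectively, and let node $M=11$ be the center, joined to every outer node by $\phi_0$. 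A direct computation gives: for two outer nodes in the \emph{same} pair $i$, $\kappa_{2i-1,2i}=\frac{\beta_0}{\gamma}\bigl(2\phi_i(1-\phi_0-\phi_i)+\zeta\phi_0^2\bigr)$ and $\kappa_{2i-1,2i-1}=\frac{\beta_0}{\gamma}\bigl((1-\phi_0-\phi_i)^2+\phi_i^2+\zeta\phi_0^2\bigr)$; for outer nodes in \emph{different} pairs the entry collapses to $\frac{\beta_0}{\gamma}\zeta\phi_0^2$; the center couplings are $\kappa_{11,k}=\frac{\beta_0}{\gamma}\bigl(\phi_0(1-\phi_0)+\zeta\phi_0(1-10\phi_0)\bigr)$ for every outer $k$, and $\kappa_{11,11}=\frac{\beta_0}{\gamma}\bigl(10\phi_0^2+\zeta(1-10\phi_0)^2\bigr)$. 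The crucial point, on which the whole proposition rests, is that $\kappa_{11,k}$ and the pair sum $\kappa_{2i-1,2i-1}+\kappa_{2i-1,2i}=\frac{\beta_0}{\gamma}\bigl((1-\phi_0)^2+2\zeta\phi_0^2\bigr)$ are \emph{independent} of $\phi_i$, because the $\phi_i$-terms cancel.

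Next I would diagonalize by symmetry. Swapping the two nodes of any single pair leaves $\kappa$ invariant, so $\bR^{11}$ splits into the five antisymmetric lines of the pairs and a $6$-dimensional symmetric subspace (the five pair-values together with the center). On the antisymmetric line of pair $i$, $\kappa$ acts by the scalar $\kappa_{2i-1,2i-1}-\kappa_{2i-1,2i}=\frac{\beta_0}{\gamma}(1-\phi_0-2\phi_i)^2=R_0^{\text{hom}}(1-\phi_0-2\phi_i)^2$, which is strictly less than $1$ because $R_0^{\text{hom}}<1$ and $0<\phi_0+\phi_i<1$ forces $|1-\phi_0-2\phi_i|<1$; hence those five eigenvalues never threaten stability. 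On the $6$-dimensional symmetric subspace, the $\phi_i$-cancellation above makes the restricted matrix invariant under all permutations of the five pair-coordinates, so it splits once more: the $4$-dimensional complement of the all-ones pair-vector contributes the eigenvalue $R_0^{\text{hom}}(1-\phi_0)^2<1$, and the remaining $2\times2$ block on $\operatorname{span}\{(1,1,1,1,1,0),(0,0,0,0,0,1)\}$ has trace $\frac{\beta_0}{\gamma}(c_0+c_1+11p)$ and determinant $\bigl(\tfrac{\beta_0}{\gamma}\bigr)^2\bigl(c_1c_0+10pc_1+pc_0+10(p^2-q^2)\bigr)$, where $p=\zeta\phi_0^2$, $q=\phi_0(1-\phi_0)+\zeta\phi_0(1-10\phi_0)$, $c_0=(1-\phi_0)^2$, and $c_1=\zeta-20\zeta\phi_0+\phi_0^2(10+99\zeta)$. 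This is precisely the quadratic (Eq.~\ref{quardraticstar}) of the star-shaped network with $M=11$.

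From here the argument is verbatim the final part of Theorem~\ref{theoremstar}: nine of the eleven eigenvalues lie automatically inside the unit circle, and the Jury conditions applied to the remaining quadratic reproduce exactly (Eqs.~\ref{cond1star}--\ref{cond3star}) with $M=11$, which Stolerman et al.\ \cite{lucas_dengue_model} already showed to yield $\zeta^{\text{crit}}=1+\frac{1-R_0^{\text{hom}}}{R_0^{\text{hom}}}\cdot 11$ (equivalently $\beta_0^{\text{crit}}=\gamma\,\tfrac{11}{10+\zeta}$), the interval $\mathcal{I}$ with $\Tilde{\phi}$ as in (Eq.~\ref{interv_critic}), and an always-stable DFE once $\zeta<1/R_0^{\text{hom}}$. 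The main obstacle --- the only genuinely new ingredient beyond Theorem~\ref{theoremstar} --- is the algebraic bookkeeping proving the $\phi_i$-cancellation in the off-diagonal and center-coupling entries of $\kappa$ and in the pair-diagonal sums; once that is established, the $M=11$ star-triangle problem becomes literally the $M=11$ fully connected problem. Alternatively, for fixed $M=11$ one can carry out the whole eigenvalue computation symbolically in Mathematica, which produces the same factorization of the characteristic polynomial.
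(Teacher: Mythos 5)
Your argument is correct, but it reaches the result by a different route than the paper. The paper's own proof of Proposition \ref{star_triangle_gen_M_11} follows the same recipe as its other fixed-$M$ propositions: it linearizes at the DFE, lets Mathematica compute the eleven Jacobian eigenvalues symbolically --- a four-fold eigenvalue $\beta_0(1-\phi_0)^2-\gamma$, five pair eigenvalues $\beta_0(1-2\phi_i-\phi_0)^2-\gamma$, and the two $\zeta$-dependent eigenvalues $\lambda_7,\lambda_8$ --- and then extracts the thresholds by calculus: $\phi_0=\tfrac{1}{11}$ is a minimum of $\lambda_7$, the sign condition $\lambda_7(\tfrac{1}{11})<0$ yields $\zeta^{\text{crit}}$ and $\beta_0^{\text{crit}}$, and the roots of $\lambda_7$ in $\phi_0$ give the interval $\mathcal{I}$ directly. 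You instead work with the next-generation matrix and obtain the same spectral decomposition by hand from the pair-swap symmetry: your five antisymmetric eigenvalues $R_0^{\text{hom}}(1-\phi_0-2\phi_i)^2$ and the four-fold eigenvalue $R_0^{\text{hom}}(1-\phi_0)^2$ are exactly the paper's $\lambda_2,\dots,\lambda_6$ and $\lambda_1$ after the shift $J=\gamma(\kappa-I)$, and your key cancellation claim checks out: the cross-pair entries, the centre couplings, and the pair sums are independent of $\phi_1,\dots,\phi_5$, so the surviving $2\times 2$ block has trace $\frac{\beta_0}{\gamma}\left(c_0+c_1+11p\right)$ and determinant $\left(\frac{\beta_0}{\gamma}\right)^2\left((c_0+10p)(c_1+p)-10q^2\right)=\left(\frac{\beta_0}{\gamma}\right)^2\left(c_0c_1+10pc_1+pc_0+10(p^2-q^2)\right)$, i.e., its characteristic polynomial is precisely the star-shaped quadratic (Eq. \ref{quardraticstar}) with $M=11$; finishing with the Jury conditions and the reduction to (Eqs. A.2--A.4) of \cite{lucas_dengue_model} is exactly how the paper closes Theorem \ref{theoremstar}, so it is equally valid here. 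What your route buys is a software-free, structural explanation of why the general star-triangle network inherits the fully connected/star-shaped thresholds (the $\phi_i$-cancellation), and it visibly extends to arbitrary odd $M$, i.e., toward Conjecture \ref{startrinagleconjecture}; what the paper's route buys is an explicit closed form for the critical eigenvalue as a function of $\phi_0$, from which $\Tilde{\phi}$ and $\mathcal{I}$ are read off directly rather than inherited from the earlier Jury-condition analysis. Your fallback remark --- that for $M=11$ one could simply run the symbolic eigenvalue computation --- is, in fact, precisely what the paper does.
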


Also if  $\zeta<\frac{1}{R_0^{\text{hom}}}$, i.e., $\beta_M < \gamma$, in Proposition \ref{star_triangle_gen_M_11}, then the DFE is always stable. Proposition  \ref{star_triangle_gen_M_11} also holds for odd $M$ values $M=3, 5, 7, 9$ and higher. Our results for particular network sizes can be thus generalized in the following conjecture.

\begin{conj} \label{startrinagleconjecture}
Let $V=\left\{1,2,3,4, \hdots, M \right\}$ be the nodes of a \emph{star-triangle} network defined by the symmetric flux matrix (Eq. \ref{genstartriangle}), and let $M$ be an odd number. We assume that the infection rate $\beta_i = \beta_0$ at node $i$ for all $i = 1,2,3, \cdots, M-1$ and $\beta_M = \zeta \beta_0$. We also suppose that all the nodes are stable ($R_0^{\text{hom}}=\frac{\beta_0}{\gamma}<1$)  except the center node $M$. Then we have a minimum  $\zeta^{\text{crit}}$ given by (Eq. \ref{zeta_critic} or \ref{zeta_critic_star}). If $\frac{1}{R_0^{\text{hom}}}<\zeta< \zeta^{\text{crit}}$, there exists an interval $\mathcal{I}= \left( \max \left\{0, \frac{1}{M}-\Tilde{\phi} \right\}, \min \left\{\frac{1}{M-1}, \frac{1}{M}+\Tilde{\phi} \right\} \right)$, where $\Tilde{\phi}$ is given by (Eq. \ref{interv_critic}),
such that if $\phi_0 \in \mathcal{I}$, then the DFE is stable, otherwise, it is unstable. Finally, if  $\zeta<\frac{1}{R_0^{\text{hom}}}$, i.e., $\beta_M < \gamma$, then the DFE is always stable.
\end{conj}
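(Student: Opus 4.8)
The plan is to mimic the proof of \cref{theoremstar}, since the key structural fact there — that the next-generation matrix decomposes as a rank-$2$ symmetric perturbation of a diagonal matrix — appears to persist for the generalized star-triangle family. First I would write down the NGM $\kappa = (\kappa_{uv})$ using the reduced formula (Eq.~\ref{NGMFormulyaReduction}), valid since the populations are equal and the flux matrix (Eq.~\ref{genstartriangle}) is symmetric. Because each outer node $u \in \{1,\dots,M-1\}$ is adjacent only to its triangle-partner and to the center $M$, the sum $\kappa_{uv} = \sum_j \beta_j \phi_{uj}\phi_{vj}$ has very few nonzero terms, and one expects the same three-type pattern as in \cref{theoremstar}: a diagonal contribution $c_0$ on outer nodes, a diagonal contribution $c_1$ at node $M$, an off-diagonal block coupling outer nodes to the center through $q = \phi_0 - \phi_0^2 + \zeta(\phi_0 - (M-1)\phi_0^2)$ (this came out identical in Propositions~\ref{star_cycle_without_adj}, \ref{star_cycle_th}, \ref{star_triangle_gen_M_11}), and a constant $p = \zeta\phi_0^2$ coupling outer nodes to each other through the center. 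The triangle edges $\phi_i$ only produce terms of the form $\beta_0\phi_i^2$ and $\beta_0\phi_i(1-\phi_0-\cdots)$ that are diagonal or that vanish off-diagonal because no two distinct outer nodes share a neighbor other than $M$; hence they get absorbed into the diagonal matrix $\mathcal{D}$ and do not perturb the rank-$2$ part $\mathcal{P}$.

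The second step is to argue that $\mathcal{P}$ (the part with all entries equal to $p$ except the last row/column, which is $q$, and the $(M,M)$ entry, which is $p$) still has rank exactly $2$ — this is immediate by the same elementary row/column operations used in \cref{theoremstar}, independent of the triangle structure. Then the characteristic polynomial of $\mathcal{P}+\mathcal{D}$ factors: an eigenvalue coming from $\mathcal{D}$ restricted to the outer block with high multiplicity, plus a quadratic $P(\lambda)$ in the remaining $2$-dimensional subspace. One has to check that the high-multiplicity eigenvalue (of the form $R_0^{\text{hom}}(1-\phi_0-\cdots)$ or a combination thereof) is automatically less than $1$ under the hypothesis $R_0^{\text{hom}}<1$, $\phi_0 \le \tfrac{1}{M-1}$, $\phi_1,\dots > 0$ — here a mild subtlety is that the outer diagonal entries of $\mathcal{D}$ may not all coincide if the $\phi_i$ differ, so one may need to treat each distinct value, but each is still bounded above since the corresponding diagonal flux $1-\phi_0-\phi_i \in (0,1)$. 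The third step is to apply the Jury conditions to the quadratic $P(\lambda)$ exactly as in \cref{theoremstar}, and verify that the three resulting inequalities collapse — after substituting the explicit $p,q,c_0,c_1$ and simplifying — to (Eqs.~\ref{cond1star}--\ref{cond3star}), i.e.\ to the same inequalities (A.2)--(A.4) of \cite{lucas_dengue_model}. Since the Jury-condition output is precisely what determines $\zeta^{\text{crit}}$, the interval $\mathcal{I}$, and $\tilde\phi$, the conclusions of the conjecture follow verbatim from the analysis there, including the final claim that $\zeta < 1/R_0^{\text{hom}}$ forces stability for all $\phi_0$.

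The main obstacle I anticipate is the second step: confirming that the triangle edges really do contribute only to the diagonal of $\kappa$, and hence that $\mathcal{P}$ retains rank $2$ with the same $p$ and $q$ as before. This requires a careful case analysis of $\kappa_{uv} = \sum_{j} \beta_j \phi_{uj}\phi_{vj}$ for all pairs of outer nodes $u \ne v$: one must check that for $j \notin \{M\}$ at most one of $\phi_{uj}, \phi_{vj}$ is nonzero (true because the triangle matching is a perfect matching on $\{1,\dots,M-1\}$ with $M$ odd, so distinct outer nodes never lie in a common triangle), while the $j=M$ term always gives $\beta_M \phi_0^2 = \zeta\beta_0\phi_0^2 = \beta_0 p$. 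If instead some outer nodes were to share a triangle neighbour, an extra constant would leak into $\mathcal{P}$, breaking the rank-$2$ structure and changing the thresholds — so the ``triangle'' (pairwise-disjoint $2$-cliques) geometry and the parity of $M$ are essential, and the proof should make this explicit. Once this combinatorial point is nailed down, the remaining algebra (collecting the Jury inequalities) is routine but lengthy, best relegated to symbolic computation with Mathematica as the authors do for the $M=11$ case.
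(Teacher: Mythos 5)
Your reduction fails at exactly the point you flag as the ``main obstacle'', and it fails in the opposite direction from what you assert. For two outer nodes $u\neq v$ that are matched by a triangle edge (the pair $(2i-1,2i)$ with flux $\phi_i$ in Eq.~\ref{genstartriangle}), the sum $\kappa_{uv}=\tfrac{1}{\gamma}\sum_j\beta_j\phi_{uj}\phi_{vj}$ does \emph{not} collapse to the $j=M$ term: taking $j=u$ gives $\phi_{uu}\phi_{vu}=(1-\phi_0-\phi_i)\,\phi_i>0$, and $j=v$ gives the same amount, because the diagonal (stay-at-home) fluxes $\phi_{uu},\phi_{vv}$ are nonzero and the matching edge makes $u$ and $v$ neighbors of each other -- your parity/perfect-matching argument rules out a common \emph{outer} neighbor, but the relevant terms come from the pair edge itself combined with the self-fluxes. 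Hence for matched pairs $\kappa_{uv}=\tfrac{\beta_0}{\gamma}\bigl(\zeta\phi_0^2+2\phi_i(1-\phi_0-\phi_i)\bigr)$, not $\tfrac{\beta_0}{\gamma}\,\zeta\phi_0^2$, so the triangle edges do leak into the off-diagonal part, $\mathcal{P}+\mathcal{D}$ with $\mathcal{D}$ diagonal is not the NGM, and the characteristic polynomial does not carry the factor $(c_0-\lambda)^{M-2}$. The paper's own \cref{theoremstartriangle} already shows this for the uniform case: the NGM (Eq.~\ref{NGM3}) has entries $p+c_2$ with $c_2=-2\phi_1(-1+\phi_0+\phi_1)\neq 0$ between partnered nodes, and the factors are $(c_0+c_2-\lambda)^{\frac{M-3}{2}}(c_0-c_2-\lambda)^{\frac{M-1}{2}}$, not a single $(c_0-\lambda)$ power. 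So your second step, and with it the claim that the quadratic and the Jury inequalities (Eqs.~\ref{cond1star}--\ref{cond3star}) come out verbatim, does not follow as written.

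The conclusion can plausibly still be reached, but the repair is where the real content lies, and it is worth spelling out. Write $\gamma\kappa/\beta_0=\mathcal{P}+\mathcal{B}$ with $\mathcal{B}$ block diagonal: for each pair $i$ a $2\times 2$ block with diagonal $c_0^{(i)}=\phi_i^2+(1-\phi_0-\phi_i)^2$ and off-diagonal $c_2^{(i)}=2\phi_i(1-\phi_0-\phi_i)$, plus the entry $c_1$ at the center. The pair-antisymmetric vectors $e_{2i-1}-e_{2i}$ are annihilated by $\mathcal{P}$ and give eigenvalues $R_0^{\text{hom}}\bigl(c_0^{(i)}-c_2^{(i)}\bigr)=R_0^{\text{hom}}(1-\phi_0-2\phi_i)^2<1$, while on the pair-symmetric subspace the identity $c_0^{(i)}+c_2^{(i)}=\bigl((1-\phi_0-\phi_i)+\phi_i\bigr)^2=(1-\phi_0)^2$, independent of $\phi_i$, restores exactly the star-shaped structure of \cref{theoremstar}; only after this reduction do the Jury conditions produce $\zeta^{\text{crit}}$, $\Tilde{\phi}$ (Eq.~\ref{interv_critic}) and $\mathcal{I}$. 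This is consistent with the spectra the paper computes symbolically for $M=11$ in \cref{star_triangle_gen_M_11}, whose eigenvalues $\beta_0(-1+2\phi_i+\phi_6)^2-\gamma$ are precisely the antisymmetric pair modes. Note finally that the paper itself does not prove this statement -- it is left as a conjecture, supported only by fixed-size symbolic verifications -- so a correct argument along the repaired lines above would genuinely go beyond the paper; but as proposed, your key structural claim is false rather than merely delicate.
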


\subsubsection{Star-background networks}
In Section \ref{StarShapeSection}, we studied \emph{star-shaped} networks and introduced theoretical estimates for stability conditions for networks of arbitrary size $M$. In Section \ref{previous_results}, we reviewed a previous result for \emph{fully connected} networks (Theorem \ref{theoremfullcon}). In this section, we define \emph{star-background} networks, from which both fully connected and star-shaped networks can be obtained.  Specifically, we consider a heterogeneous node $M$ (the one with a different infection rate) connecting to all other nodes with the same flux $\phi_0$, and all the outer nodes are connected to all other nodes by a different flux ($\phi_1$). The corresponding flux matrix is given by
\begin{equation} \label{starshapebackgroundgn}
 \phi_{ij}= \begin{cases}
                                   \phi_0 & \text{if $i=M, ~\text{and}~ j \in V$} ~ \text{ or ~if $j=M, ~ \text{and} ~ i \in V$} \\
                                 1-\phi_0-(M-2)\phi_1  &  \text{if $i=j,~ \text{and} ~ ~ i,j \in V \setminus M$}\\
                                 1-(M-1)\phi_0  &  \text{if $i=j=M$}\\
                                \phi_1 & \text{otherwise},
   \end{cases}
\end{equation}
where $\phi_0+(M-2)\phi_1<1, 0<\phi_0 \leq \frac{1}{M-1}$, and $\phi_1>0$. The explicit form of the flux matrix (Eq. \ref{starshapebackgroundgn}) is the following: 

\begin{equation} \label{starshapebackground}
    \phi_{M \times M}= \begin{pmatrix}
   1-\phi_0-(M-2)\phi_1  & \phi_1 & \phi_1 & \hdots & \phi_0\\
     \phi_1 & 1-\phi_0-(M-2)\phi_1 & \phi_1 &  \hdots &\phi_0\\
            \phi_1  & \phi_1 & 1-\phi_0-(M-2)\phi_1 &  \hdots &\phi_0\\
         \vdots & \vdots &   \vdots &  \ddots & \vdots\\
             \phi_0  & \phi_0 & \phi_0 &  \hdots & 1-(M-1)\phi_0
           \end{pmatrix}.
    \end{equation}

The above flux matrix somehow connects both \emph{fully connected} and \emph{star-shaped} networks. In fact, when $\phi_1 = 0$ we obtain \emph{star-shaped} networks. On the other hand, when $\phi_1=\phi_0$, we obtain the \emph{fully connected} networks. \emph{Star-background} networks are thus an interesting ``bridge'' between fully connected and \emph{star-shaped} networks. The following theorem presents the epidemic thresholds for this class of networks (see appendix for the proof).

\begin{thm} \label{theoremstarbg} %[Main Theorem] 
Let $V=\left\{1,2,3, \hdots, M \right\}$ be the nodes of a \emph{star-background} networks defined by the flux matrix (Eq. \ref{starshapebackgroundgn}), where $\phi_0+(M-2)\phi_1<1, 0<\phi_0 \leq \frac{1}{M-1}$, and $\phi_1>0$. We assume that the infection rate $\beta_i = \beta_0$ at node $i$ for all $i = 1,2,3, \cdots, M-1$ and $\beta_M = \zeta \beta_0$. We also suppose that all the 
nodes are stable ($R_0^{\text{hom}}=\frac{\beta_0}{\gamma}<1$)  except the center $M$. Then we have a minimum  $\zeta^{\text{crit}}$ given by (Eq. \ref{zeta_critic} or \ref{zeta_critic_star}). If $\frac{1}{R_0^{\text{hom}}}<\zeta< \zeta^{\text{crit}}$, there exists an interval $\mathcal{I}= \left( \max \left\{0, \frac{1}{M}-\Tilde{\phi} \right\}, \min \left\{\frac{1}{M-1}, \frac{1}{M}+\Tilde{\phi} \right\} \right)$, where $\Tilde{\phi}$ is given by (Eq. \ref{interv_critic}),
such that if $\phi_0 \in \mathcal{I}$, then the DFE is stable, and otherwise it is unstable. Finally, if  $\zeta<\frac{1}{R_0^{\text{hom}}}$, i.e., $\beta_M < \gamma$, then the DFE is always stable.     
\end{thm}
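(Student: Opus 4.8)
The plan is to reproduce, step for step, the argument used for Theorem~\ref{theoremstar}: compute the next generation matrix (NGM), exhibit it as a constant multiple of a rank-two matrix plus a diagonal matrix, read off its spectrum, and convert ``all eigenvalues below~$1$'' into the Jury conditions for a single quadratic. The only genuinely new ingredient is an algebraic cancellation showing that the auxiliary flux $\phi_1$ drops out of the relevant quadratic, so that the final inequalities are exactly those already analysed in \cite{lucas_dengue_model}.

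First I would compute the entries $\kappa_{ij}=\tfrac1\gamma\sum_{k=1}^M\beta_k\phi_{ik}\phi_{jk}$, which is legitimate because the populations are equal and the flux matrix \eqref{starshapebackgroundgn} is symmetric. Splitting into the cases $u=v\le M-1$, $u\ne v$ with $u,v\le M-1$ (here there are $M-3$ terms with $k\notin\{u,v,M\}$, which is the bookkeeping one must get right), $u\le M-1$ with $v=M$, and $u=v=M$, I expect the following: the cross entries $\kappa_{uM}$ and the corner $\kappa_{MM}$ are \emph{independent of $\phi_1$} and equal the star-shaped values $\tfrac{\beta_0}\gamma q$ and $\tfrac{\beta_0}\gamma(p+c_1)$ of Theorem~\ref{theoremstar}, because the $M$-th row of the flux matrix carries only the entry $\phi_0$ off the diagonal and the inner row sums telescope, $\bigl(1-\phi_0-(M-2)\phi_1\bigr)+(M-2)\phi_1=1-\phi_0$; the outer block has constant off-diagonal entry $\tfrac{\beta_0}\gamma\,\tilde p$ with $\tilde p=\zeta\phi_0^2+2(1-\phi_0)\phi_1-(M-1)\phi_1^2$; and $\kappa_{uu}-\kappa_{uv}=\tfrac{\beta_0}\gamma\bigl(1-\phi_0-(M-1)\phi_1\bigr)^2=:\tfrac{\beta_0}\gamma\,c_0$.

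Next I would write $\kappa=\tfrac{\beta_0}\gamma(\mathcal P+\mathcal D)$, as in \eqref{StarShapedReturnNGM}, where $\mathcal P$ has the entry $\tilde p$ on the full outer $(M-1)\times(M-1)$ block, the entry $q$ along the last row and column, and the star-shaped value $p=\zeta\phi_0^2$ in the corner, and $\mathcal D=\operatorname{diag}(c_0,\dots,c_0,c_1)$ with $c_1$ exactly the star-shaped value. Since the first $M-1$ rows of $\mathcal P$ coincide, $\mathcal P$ has rank $2$, so the characteristic polynomial of $\mathcal P+\mathcal D$ factors as $(c_0-\lambda)^{M-2}$ times a quadratic $P(\lambda)$, and scaling eigenvectors by $R_0^{\text{hom}}$ gives the spectrum of $\kappa$ as $R_0^{\text{hom}}c_0$ with multiplicity $M-2$ together with the two roots of $P$ times $R_0^{\text{hom}}$, exactly as in \eqref{NGM2}. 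Restricting $\mathcal P+\mathcal D$ to the invariant plane spanned by $(1,\dots,1,0)$ and $e_M$ yields the $2\times2$ matrix $\left(\begin{smallmatrix}(M-1)\tilde p+c_0 & q\\ (M-1)q & p+c_1\end{smallmatrix}\right)$, and the crucial point is the identity $(M-1)\tilde p+c_0=(M-1)\zeta\phi_0^2+(1-\phi_0)^2$: the $\phi_1$-linear term $2(M-1)(1-\phi_0)\phi_1$ and the $\phi_1$-quadratic term $(M-1)^2\phi_1^2$ coming from $(M-1)\tilde p$ cancel against the corresponding terms in $c_0$. Hence this $2\times2$ block, its trace, its determinant, and therefore $P(\lambda)$, are \emph{identical} to those in the star-shaped (equivalently fully connected) case.

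From here the proof is the one already carried out. The eigenvalue $R_0^{\text{hom}}c_0=R_0^{\text{hom}}\bigl(1-\phi_0-(M-1)\phi_1\bigr)^2$ is automatically $<1$ since $R_0^{\text{hom}}<1$ and the constraints $\phi_0+(M-2)\phi_1<1$, $\phi_1>0$, $\phi_0\le\tfrac1{M-1}$ force $\bigl(1-\phi_0-(M-1)\phi_1\bigr)^2\le1$; and applying the Jury conditions to $P(\lambda')$ with $\lambda=\lambda'/R_0^{\text{hom}}$ reproduces \eqref{cond1star}--\eqref{cond3star} verbatim, so the conclusions about $\zeta^{\text{crit}}$, the interval $\mathcal I$, the formula \eqref{interv_critic} for $\tilde\phi$, and the boundary cases (in particular $\zeta<1/R_0^{\text{hom}}$, i.e.\ $\beta_M<\gamma$, which forces $\tilde\phi$ to exceed the half-width of the admissible $\phi_0$-range and hence makes the DFE always stable) follow exactly as in \cite{lucas_dengue_model}. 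As a consistency check, $\phi_1=0$ recovers Theorem~\ref{theoremstar} and $\phi_1=\phi_0$ recovers Theorem~2 of \cite{lucas_dengue_model} (i.e.\ Theorem~\ref{theoremfullcon}). I expect the only real work to be the careful expansion of the NGM entries and the verification that $(M-1)\tilde p+c_0$ is $\phi_1$-free; once that cancellation is secured, no new spectral analysis is required.
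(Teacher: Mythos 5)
Your proposal is correct and follows essentially the same route as the paper: compute the NGM for the flux matrix (Eq.~\ref{starshapebackgroundgn}), decompose it as $\frac{\beta_0}{\gamma}(\mathcal{P}+\mathcal{D})$ with $\mathcal{P}$ of rank two, factor the characteristic polynomial into $(c_0-\lambda)^{M-2}$ times a quadratic, and apply the Jury conditions, which reduce verbatim to the inequalities of Stolerman et al.\ as in Theorem~\ref{theoremstar}. Your explicit cancellation identity $(M-1)\tilde p+c_0=(M-1)\zeta\phi_0^2+(1-\phi_0)^2$ and the corrected repeated eigenvalue $R_0^{\text{hom}}\bigl(1-\phi_0-(M-1)\phi_1\bigr)^2$ actually make transparent (and fix small slips in) what the paper's proof only asserts when it claims the resulting conditions are $\phi_1$-independent.
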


\begin{remark}
In particular, if $\phi_1=\phi_0$ in (Eq. \ref{starshapebackgroundgn}), then the flux creates the \emph{fully connected} networks described in Theorem \ref{theoremfullcon}, and the results of Theorem \ref{theoremstarbg} are identical to Theorem \ref{theoremfullcon}. Furthermore, $\phi_1= 0$ in (Eq. \ref{starshapebackgroundgn}) the network becomes \emph{star-shaped} (Fig. \ref{fig:diff_networks}) network revealed in Theorem \ref{theoremstar}. Then the results of the Theorem \ref{theoremstarbg} are identical with Theorem \ref{theoremstar}.
\end{remark}

\subsubsection{Conjectures on more flexible network structures} \label{gen_net_models}

To extend the previous section, we have constructed a general network diagram representing a \emph{star-class} network. The network construction begins by connecting a center node $M$ to outer nodes with flux value $\phi_0$ (by the solid red arrows in Fig. \ref{fig:StarToGeneral}). Then all the outer or external neighboring nodes are joined (solid green arrows) with the new flux parameter $\phi_1$. Next, we connect any other outer nodes with a gap of one node between them by the flux $\phi_{2}$ (solid blue line). Then we connect all the external nodes with a spacing of two nodes between them with the new flux parameter $\phi_3$ (maroon dashed line) and so on. Finally, the general flux of two interconnecting outer nodes is $\phi_{p+1}$, where p is the number of nodes between them. 

\begin{figure}[!htbp] % [!htbp] fix the image in the same page  
\centering
 % \begin{adjustwidth}{-.49in}{0in} 
   \includegraphics[width=0.4\textwidth]{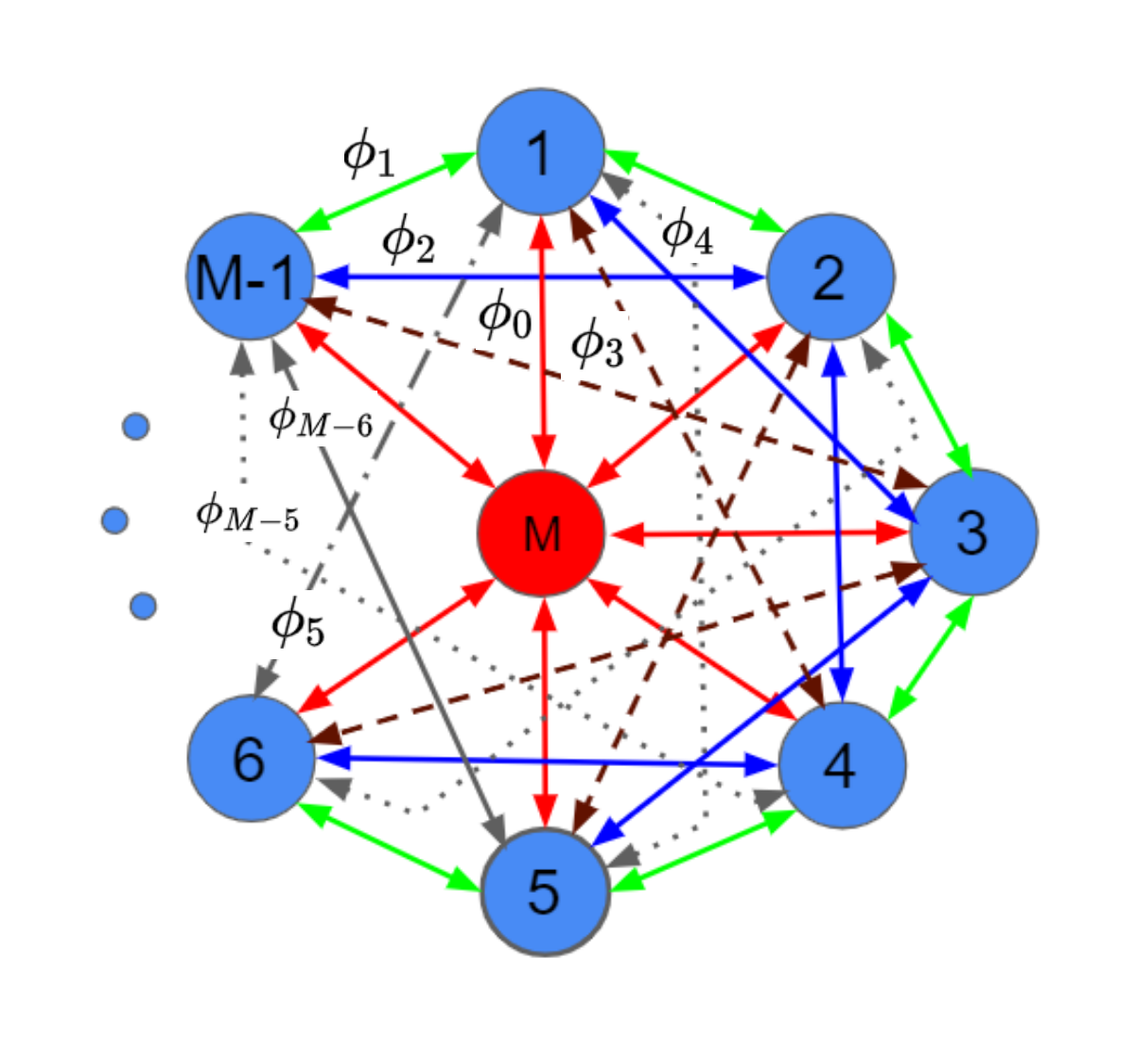}
% \end{adjustwidth}
  \caption{\textbf{A more flexible \emph{star-class} network of $M$ nodes:} a center node is connected to the outer nodes with a flux $\phi_0$. The outer notes with a gap of $i$ nodes are connected with a different flux $\phi_i$. } 
  \label{fig:StarToGeneral} 
\end{figure}  
\vspace{0.12cm}
 
The flux matrix for a network with an odd number $M$ of nodes can be written as

{
   \centering
   \begin{adjustwidth}{-.5in}{0in} 
\footnotesize
\setlength{\arraycolsep}{2.5pt} % default: 5pt
\medmuskip = 1mu % default: 4mu plus 2mu minus 4mu
\begin{equation}\label{generalfluxModd}
    \phi_{M \times M}= \begin{pmatrix}
   \phi_d & \phi_1  & \phi_2 & \phi_3 & \hdots & \phi_{\frac{M-1}{2}-1}  & \phi_{\frac{M-1}{2}} & \phi_{\frac{M-1}{2}-1} & \phi_{\frac{M-1}{2}-2} & \hdots & \phi_4  & \phi_3 & \phi_2 & \phi_1 & \phi_0\\
    \phi_1 & \phi_d  & \phi_1 & \phi_2 & \hdots & \phi_{\frac{M-1}{2}-2} & \phi_{\frac{M-1}{2}-1} &  \phi_{\frac{M-1}{2}} & \phi_{\frac{M-1}{2}-1} & \hdots & \phi_5 & \phi_4 & \phi_3 & \phi_2 & \phi_0 \\
 \phi_2 & \phi_1 & \phi_d  & \phi_1 & \hdots & \phi_{\frac{M-1}{2}-3} & \phi_{\frac{M-1}{2}-2} & \phi_{\frac{M-1}{2}-1} &  \phi_{\frac{M-1}{2}}  & \hdots  & \phi_6 & \phi_5 & \phi_4 & \phi_3 & \phi_0 \\
  \phi_3 & \phi_2 & \phi_1 & \phi_d  & \hdots & \phi_{\frac{M-1}{2}-4} & \phi_{\frac{M-1}{2}-3} &  \phi_{\frac{M-1}{2}-2} & \phi_{\frac{M-1}{2}-1}  & \hdots & \phi_7 & \phi_6 & \phi_5 & \phi_4 & \phi_0 \\
\vdots & \vdots & \vdots & \vdots & \ddots & \vdots & \vdots & \vdots  & \vdots & \ddots &  \vdots & \vdots & \vdots &  \vdots  & \vdots \\   

\phi_{\frac{M-1}{2}-1} & \phi_{\frac{M-1}{2}-2} & \phi_{\frac{M-1}{2}-3} & \phi_{\frac{M-1}{2}-4}  & \hdots & \phi_{d} & \phi_{1} & \phi_{2} &  \phi_{3}  & \hdots &  \phi_{\frac{M-1}{2}-3} & \phi_{\frac{M-1}{2}-2} & \phi_{\frac{M-1}{2}-1} &  \phi_{\frac{M-1}{2}} & \phi_0 \\

\phi_{\frac{M-1}{2}} & \phi_{\frac{M-1}{2}-1} & \phi_{\frac{M-1}{2}-2} & \phi_{\frac{M-1}{2}-3}  & \hdots & \phi_{1} & \phi_{d} &  \phi_{1} & \phi_{2}  & \hdots & \phi_{\frac{M-1}{2}-4}  & \phi_{\frac{M-1}{2}-3} & \phi_{\frac{M-1}{2}-2} & \phi_{\frac{M-1}{2}-1} & \phi_0 \\

\phi_{\frac{M-1}{2}-1} & \phi_{\frac{M-1}{2}} & \phi_{\frac{M-1}{2}-1} & \phi_{\frac{M-1}{2}-2}  & \hdots & \phi_{2} & \phi_{1} &  \phi_{d} & \phi_{1}  & \hdots & \phi_{\frac{M-1}{2}-5}  & \phi_{\frac{M-1}{2}-4} & \phi_{\frac{M-1}{2}-3} & \phi_{\frac{M-1}{2}-2} & \phi_0 \\

\phi_{\frac{M-1}{2}-2} & \phi_{\frac{M-1}{2}-1} & \phi_{\frac{M-1}{2}} & \phi_{\frac{M-1}{2}-1}  & \hdots & \phi_{3} & \phi_{2} &  \phi_{1} & \phi_{d}  & \hdots & \phi_{\frac{M-1}{2}-6}  & \phi_{\frac{M-1}{2}-5} & \phi_{\frac{M-1}{2}-4}& \phi_{\frac{M-1}{2}-3} & \phi_0 \\

\vdots & \vdots & \vdots & \vdots & \ddots & \vdots & \vdots & \vdots  & \vdots & \ddots &  \vdots & \vdots & \vdots &  \vdots  & \vdots \\   

 \phi_4 & \phi_5  & \phi_6 & \phi_7 & \hdots & \phi_{\frac{M-1}{2}-3} & \phi_{\frac{M-1}{2}-4} &  \phi_{\frac{M-1}{2}-5} & \phi_{\frac{M-1}{2}-6} & \hdots & \phi_d & \phi_1 & \phi_2 & \phi_3 & \phi_0 \\
 
\phi_3 & \phi_4  & \phi_5 & \phi_6 & \hdots & \phi_{\frac{M-1}{2}-2} & \phi_{\frac{M-1}{2}-3} &  \phi_{\frac{M-1}{2}-4} & \phi_{\frac{M-1}{2}-5} & \hdots & \phi_1 & \phi_d & \phi_1 & \phi_2 & \phi_0 \\

\phi_2 & \phi_3  & \phi_4 & \phi_5 & \hdots & \phi_{\frac{M-1}{2}-1} & \phi_{\frac{M-1}{2}-2} &  \phi_{\frac{M-1}{2}-3} & \phi_{\frac{M-1}{2}-4} & \hdots & \phi_2 & \phi_1 & \phi_d & \phi_1 & \phi_0 \\
  
 \phi_1 & \phi_2  & \phi_3 & \phi_4 & \hdots & \phi_{\frac{M-1}{2}} & \phi_{\frac{M-1}{2}-1} &  \phi_{\frac{M-1}{2}-2} & \phi_{\frac{M-1}{2}-3} & \hdots & \phi_3 & \phi_2 & \phi_1 & \phi_d & \phi_0 \\

\phi_0 & \phi_0 & \phi_0 & \phi_0 & \hdots & \phi_0 & \phi_0 & \phi_0 & \phi_0 & \hdots & \phi_0 & \phi_0 & \phi_0 & \phi_0  & \phi_{MM}
\end{pmatrix}
\end{equation}
  \end{adjustwidth}
}

where $\phi_d:=1-2(\phi_0+\phi_1+ \hdots + \phi_{\frac{M-1}{2}-1})-\phi_{\frac{M-1}{2}}-\phi_0$ and $\phi_{MM}=1-(M-1)\phi_0$. The interested reader can find a similar matrix for a network of an even number of nodes $M$ in the appendix. %(\ref{generalfluxMeven}). % Similarly, if the network has an even number of nodes $M$, then its flux matrix %$\phi_{M \times M}$ is defined by 

We conjecture that the \emph{star-class} network has the same epidemic threshold as the fully connected networks. The following proposition can be proved for $M=9$ (the proof can be found in the appendix).

\begin{prop} \label{general_M9}
Let $V=\left\{1,2, \hdots, 9\right\}$ be the nodes of a \emph{general networks} defined by the symmetric flux matrix (Eq. \ref{generalfluxModd}). We assume that the infection rate $\beta_i = \beta_0$ at node $i$ for all $i = 1,2,3, \hdots, 9$ and $\beta_9 = \zeta \beta_0$. We also suppose that all the nodes are stable i.e. $R_0^{\text{hom}}=\frac{\beta_0}{\gamma}<1$  except the center node $M$. Then the critical value of $\zeta$ called $\zeta^{\text{crit}}$ is given by 
\begin{equation*} 
    \zeta^{\text{crit}}= 1+\left( \frac{1-R_0^{\text{hom}}}{R_0^{\text{hom}}} \right)9.
\end{equation*}
Alternatively, the infection threshold of $\beta_0$, $\beta_0^{\text{crit}}$ is given by 
\begin{equation*} 
    \beta_0^{\text{crit}}= \gamma\left( \frac{9}{8+\zeta} \right).
\end{equation*}
If $\frac{1}{R_0^{\text{hom}}}<\zeta< \zeta^{\text{crit}}$, and taking into account the restriction $\phi_{ij}$, where $i, j = 1,2,3, \hdots, 9$, then the interval where there will be no epidemic is given by $\mathcal{I}(\zeta,R_0^{\text{hom}})=\left( \max \left\{0, \frac{1}{9}-\Tilde{\phi} \right\}, \min \left\{\frac{1}{8}, \frac{1}{9}+\Tilde{\phi} \right\} \right)$, where $\Tilde{\phi}$ is given by 

\begin{equation*}
    \begin{aligned}
        \Tilde{\phi}=\frac{1}{9} \sqrt{ \frac{ 9(1-R_0^{\text{hom}})-R_0^{\text{hom}}(\zeta-1) )}{R_0^{\text{hom}} \left(9 \zeta(1-R_0^{\text{hom}})-(\zeta-1) \right)}}
    \end{aligned}
\end{equation*}

such that if $\phi_6 \in \mathcal{I}$, then the DFE is stable, and, otherwise, it is unstable. 
\end{prop}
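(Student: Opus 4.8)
The plan is to follow the same strategy used in the proof of Theorem~\ref{theoremstar}: reduce the local stability analysis at the DFE to a spectral condition on the next-generation matrix $\kappa$, show that $\kappa$ decomposes as a rank-two matrix plus a diagonal matrix, and then recognize that the resulting Jury conditions coincide with those of Stolerman et al.~\cite{lucas_dengue_model}. Concretely, since all populations are equal and the flux matrix~(Eq.~\ref{generalfluxModd}) is symmetric, I would first compute the entries $\kappa_{uv}=\frac{\beta_0}{\gamma}\sum_{j=1}^{9}\beta_j\phi_{uj}\phi_{vj}$ using the explicit matrix with $M=9$ (so $\phi_d = 1-2(\phi_0+\phi_1+\phi_2+\phi_3)-\phi_4-\phi_0$ and $\phi_{99}=1-8\phi_0$). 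The crucial structural fact to extract is that, because the heterogeneous node $9$ connects to \emph{every} outer node with the \emph{same} flux $\phi_0$, the last row and column of $\kappa$ contribute a term of the form $\zeta\phi_0(\cdots)$ that is identical across all outer indices; together with the $\beta_0$-weighted contributions from the cyclic/banded structure among the outer nodes, this should force $\kappa$ into the shape $\frac{\beta_0}{\gamma}(\mathcal{P}+\mathcal{D})$ where $\mathcal{P}$ has rank $2$ (all outer--outer entries equal to some $p$, all outer--center entries equal to some $q$, center--center entry $p$) and $\mathcal{D}$ is diagonal with the first $M-1$ entries equal and the last one different.

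Once that decomposition is in hand, the argument is essentially mechanical: the characteristic polynomial factors as $(c_0-\lambda)^{M-2}\cdot P(\lambda)$ with $P$ a quadratic, the $(M-2)$-fold eigenvalue $R_0^{\text{hom}}c_0$ lies inside the unit circle automatically (using $R_0^{\text{hom}}<1$ and the constraint on $\phi_0$), and the remaining two eigenvalues are controlled by the Jury conditions applied to $P$. I would then verify by direct (symbolic/Mathematica) computation that, after substituting $p,q,c_0,c_1$ in terms of $\phi_0,\zeta,R_0^{\text{hom}}$, the three Jury inequalities reduce to exactly (Eqs.~\ref{cond1star}--\ref{cond3star}) with $M=9$ — equivalently to (Eqs.~A.2--A.4) in~\cite{lucas_dengue_model}. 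From there the critical value $\zeta^{\text{crit}}=1+\left(\frac{1-R_0^{\text{hom}}}{R_0^{\text{hom}}}\right)9$, the critical infection rate $\beta_0^{\text{crit}}=\gamma\left(\frac{9}{8+\zeta}\right)$, the interval $\mathcal{I}$, and the ``always stable if $\zeta<1/R_0^{\text{hom}}$'' conclusion all follow verbatim from the analysis already carried out in~\cite{lucas_dengue_model}.

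The main obstacle I anticipate is the first step: verifying that the NGM entries genuinely collapse to just two distinct off-diagonal values $p$ and $q$ despite the many different flux parameters $\phi_1,\dots,\phi_4$ appearing in the banded/cyclic part of~(Eq.~\ref{generalfluxModd}). Unlike the pure star case, here the outer--outer sums $\sum_{j}\beta_0\phi_{uj}\phi_{vj}$ involve products of the $\phi_i$'s with each other and with $\phi_d$, and it is not a priori obvious that these are independent of the particular pair $(u,v)$ of outer nodes. The resolution should come from the highly symmetric ``circulant-like'' layout of the matrix — each outer row is a cyclic shift of the same multiset of entries $\{\phi_d,\phi_1,\phi_1,\phi_2,\phi_2,\dots\}$ — so that $\sum_j \phi_{uj}\phi_{vj}$ depends only on the cyclic distance between $u$ and $v$, and then the $\zeta\phi_0^2$ contribution from column $9$ plus a telescoping/row-sum identity (using $\sum_j\phi_{uj}=1$) should wash out that dependence on the outer--outer block up to the diagonal correction absorbed into $\mathcal{D}$. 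I would treat the $M=9$ case by explicit computation to confirm this, flag that the same bookkeeping works for $M=3,5,7,\dots$, and note that the general-$M$ statement is precisely the content of the conjecture that follows.
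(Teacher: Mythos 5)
Your reduction to the star-shaped template breaks at exactly the step you flagged, and the hoped-for rescue does not work. For the flux matrix (Eq.~\ref{generalfluxModd}) the outer--outer entries of the NGM are
\[
\kappa_{uv}=\frac{\beta_0}{\gamma}\Bigl(\textstyle\sum_{j=1}^{M-1}\phi_{uj}\phi_{vj}+\zeta\phi_0^2\Bigr),
\]
and the first sum is the $(u,v)$ entry of the square of the circulant outer block; it depends on the cyclic distance between $u$ and $v$ (compare the distance-$1$ inner product, which contains $2\phi_d\phi_1+2\phi_1\phi_2+\dots$, with the distance-$2$ one), and the row-sum identity $\sum_j\phi_{uj}=1$ constrains only the sums $\sum_j\phi_{uj}$, not the pairwise products, so it cannot wash this dependence out. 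Consequently $\kappa$ is \emph{not} of the form $\frac{\beta_0}{\gamma}(\mathcal{P}+\mathcal{D})$ with a rank-two $\mathcal{P}$ and a diagonal $\mathcal{D}$ having $M-1$ equal entries, and there is no $(M-2)$-fold eigenvalue $R_0^{\text{hom}}c_0$. The paper's own computation makes this visible: for $M=9$ the Jacobian at the DFE has several \emph{distinct} eigenvalues involving $\phi_1,\dots,\phi_4$, e.g. $-\gamma+\beta_0(-1+\phi_0+2\phi_1+4\phi_2+2\phi_3)^2$ and $-\gamma+\beta_0(-1+\phi_0+4\phi_1+4\phi_3)^2$, alongside two eigenvalues depending only on $(\phi_0,\zeta)$. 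Your plan would also silently skip the necessary check that these $\phi_1,\dots,\phi_4$-dependent eigenvalues stay negative, since in your picture they would all be the harmless $R_0^{\text{hom}}(1-\phi_0)^2$.

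What is true, and what the paper exploits (by brute-force symbolic computation of the Jacobian spectrum rather than an NGM decomposition), is a weaker structural fact: because the heterogeneous node couples to every outer node with the same flux $\phi_0$ and the outer block has constant row sums, the two-dimensional subspace spanned by the uniform vector on the outer nodes and the center coordinate is invariant, and on it the linearization reduces to a quadratic whose coefficients involve only $\phi_0$, $\zeta$, $R_0^{\text{hom}}$ and $M$ --- this is why $\zeta^{\text{crit}}$, $\beta_0^{\text{crit}}$ and $\mathcal{I}$ coincide with the fully connected/star formulas. To repair your argument you would need (i) this invariant-subspace (or equivalent quadratic-factor) statement in place of the rank-two decomposition, and (ii) a separate verification, as in the paper's proof, that all eigenvalues on the orthogonal complement (the ones carrying $\phi_1,\dots,\phi_4$) are below threshold under the nonnegativity and stochasticity constraints; only then do the Jury-type conditions for the quadratic, and hence the stated thresholds and interval, follow.
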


We present a conjecture which contains the findings in Proposition \ref{general_M9} for a general network with $M$ nodes, where $M$ is an odd number.

\begin{conj} \label{gen_conjecture}
Let $V=\left\{1,2,3, \hdots, M \right\}$ be the nodes of a network with different fluxes defined by the symmetric flux matrix (Eq. \ref{generalfluxModd}), where $ \phi_{ij} \geq 0$. We assume that the infection rate $\beta_i = \beta_0$ at node $i$ for all $i = 1,2,3, \cdots, M-1$ and $\beta_M = \zeta \beta_0$. We also suppose that all the 
nodes are stable ($R_0^{\text{hom}}=\frac{\beta_0}{\gamma}<1$) except the center node $M$. Then we have a minimum  $\zeta^{\text{crit}}$ given by (Eq. \ref{zeta_critic} or \ref{zeta_critic_star}). If $\frac{1}{R_0^{\text{hom}}}<\zeta< \zeta^{\text{crit}}$, there exists an interval $\mathcal{I}= \left( \max \left\{0, \frac{1}{M}-\Tilde{\phi} \right\}, \min \left\{\frac{1}{M-1}, \frac{1}{M}+\Tilde{\phi} \right\} \right)$, where $\Tilde{\phi}$ is given by (Eq. \ref{interv_critic})),
such that if $\phi_0 \in \mathcal{I}$, then the DFE is stable, and otherwise it is unstable. Finally, if  $\zeta<\frac{1}{R_0^{\text{hom}}}$ i.e. $\beta_M < \gamma$, DFE is always stable.
\end{conj}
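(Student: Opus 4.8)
\emph{Proof proposal.} The plan is to run the argument of Theorem \ref{theoremstar} almost verbatim, replacing the rank-two splitting of the next-generation matrix used there by a decomposition of $\mathbb{R}^{M}$ into two $\kappa$-invariant subspaces. Write the flux matrix (Eq. \ref{generalfluxModd}) in block form $\phi_{M\times M}=\left(\begin{smallmatrix} C & \phi_0\mathbf{1}\\ \phi_0\mathbf{1}^{\top} & 1-(M-1)\phi_0\end{smallmatrix}\right)$, where $C$ is the $(M-1)\times(M-1)$ block indexed by the outer nodes and $\mathbf{1}\in\mathbb{R}^{M-1}$ is the all-ones vector. The only facts about $C$ the argument needs are immediate from the hypotheses: $C$ is symmetric, $C$ is entrywise nonnegative ($\phi_{ij}\ge 0$), and every row of $C$ sums to $1-\phi_0$ (each row of $\phi_{M\times M}$ sums to $1$ with last entry $\phi_0$); in particular $C\mathbf{1}=(1-\phi_0)\mathbf{1}$, hence $C^{2}\mathbf{1}=(1-\phi_0)^{2}\mathbf{1}$.

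First I would compute $\kappa$ from (Eq. \ref{NGMFormulyaReduction}) with $N_j^p=N$, $\beta_j=\beta_0$ for $j<M$, and $\beta_M=\zeta\beta_0$: for outer indices $u,v\le M-1$ one gets $\kappa_{uv}=\tfrac{\beta_0}{\gamma}\big((C^{2})_{uv}+\zeta\phi_0^{2}\big)$, while $\kappa_{uM}=\kappa_{Mu}=\tfrac{\beta_0}{\gamma}q$ and $\kappa_{MM}=\tfrac{\beta_0}{\gamma}(p+c_1)$ with the same scalars $p=\zeta\phi_0^{2}$, $q=\phi_0-\phi_0^{2}+\zeta(\phi_0-(M-1)\phi_0^{2})$, $c_0=(1-\phi_0)^{2}$ and $c_1=\zeta-2\zeta(M-1)\phi_0+\phi_0^{2}\big(M-1-2M\zeta+M^{2}\zeta\big)$ as in the proof of Theorem \ref{theoremstar}; this is forced by the last column of $\phi_{M\times M}$ being the constant vector $\phi_0\mathbf{1}$ together with $C\mathbf{1}=(1-\phi_0)\mathbf{1}$, and is therefore independent of the auxiliary fluxes $\phi_1,\dots,\phi_{(M-1)/2}$. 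Next I would exhibit the orthogonal $\kappa$-invariant splitting $\mathbb{R}^{M}=\widetilde W\oplus U$ with $\widetilde W=\{(w,0):\mathbf{1}^{\top}w=0\}$ and $U=\operatorname{span}\{(\mathbf{1},0),\,e_M\}$, where $e_M$ is the last standard basis vector: the constant $\zeta\phi_0^{2}$ term is annihilated on $\widetilde W$ and $\kappa_{Mv}$ is constant in $v$, so $\widetilde W$ is invariant with $\kappa|_{\widetilde W}=R_0^{\text{hom}}\,(C|_{\mathbf{1}^{\perp}})^{2}$; and a direct check shows $\kappa(\mathbf{1},0),\kappa e_M\in U$, with $\kappa|_U$ represented in the basis $\{(\mathbf{1},0),e_M\}$ by $R_0^{\text{hom}}\left(\begin{smallmatrix} c_0+(M-1)p & q\\ (M-1)q & p+c_1\end{smallmatrix}\right)$ — identical to the reduced block in Theorem \ref{theoremstar}, its $(1,1)$ entry matching precisely because $C^{2}\mathbf{1}=(1-\phi_0)^{2}\mathbf{1}$.

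The spectrum of $\kappa$ is then the union of the two pieces. On $\widetilde W$ the eigenvalues are $R_0^{\text{hom}}\mu^{2}$, with $\mu$ ranging over the $M-2$ eigenvalues of $C$ other than $1-\phi_0$; since $C\ge 0$ is symmetric with all row sums equal to $1-\phi_0$, Perron--Frobenius (or Gershgorin) gives $|\mu|\le 1-\phi_0$, so $R_0^{\text{hom}}\mu^{2}\le R_0^{\text{hom}}(1-\phi_0)^{2}<1$ always. On $U$ the characteristic polynomial is the same quadratic as (Eq. \ref{quardraticstar}); substituting $\lambda=\lambda'/R_0^{\text{hom}}$ and applying the Jury conditions reproduces verbatim the inequalities (Eqs. \ref{cond1star}--\ref{cond3star}), which are exactly (A.2)--(A.4) of \cite{lucas_dengue_model}. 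Hence the same $\zeta^{\text{crit}}$ (Eq. \ref{zeta_critic_star}) and interval $\mathcal{I}$ with $\Tilde{\phi}$ from (Eq. \ref{interv_critic}) emerge, and the case $\zeta<1/R_0^{\text{hom}}$ (DFE always stable) follows from the identical computation showing $\Tilde{\phi}>1/M$.

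The main obstacle is bookkeeping rather than conceptual: one must verify carefully that the $(u,M)$, $(M,u)$ and $(M,M)$ entries of $\kappa$ and the $(1,1)$ entry of the reduced $2\times 2$ block really do not see the internal fluxes $\phi_1,\dots,\phi_{(M-1)/2}$ — this is exactly where the constant-row-sum identity does all the work — and one must run the even-$M$ variant of (Eq. \ref{generalfluxModd}) separately, although the three structural properties of $C$ used above hold there as well, so the argument transfers unchanged. The nonnegativity of all fluxes (equivalently $\phi_d\ge 0$, so that $C\ge 0$) is needed only to pin down $|\mu|\le 1-\phi_0$.
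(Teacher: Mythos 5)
Note first that the paper itself does not prove this statement: it is stated as a conjecture, supported only by fixed-size verifications (e.g.\ Proposition \ref{general_M9} for $M=9$, proved by computing the Jacobian eigenvalues symbolically in Mathematica) and by simulations, so there is no general proof in the paper to compare against. Your proposal is therefore a genuinely different and stronger route, and as far as I can check it is correct. You isolate the only structural features that matter — the last row/column of the flux matrix is constant equal to $\phi_0$, and the outer block $C$ is symmetric, nonnegative, with constant row sums $1-\phi_0$ — and I verified the key computations: $\kappa_{uv}=R_0^{\text{hom}}\bigl((C^2)_{uv}+p\bigr)$ for outer $u,v$, $\kappa_{uM}=\kappa_{Mu}=R_0^{\text{hom}}q$, $\kappa_{MM}=R_0^{\text{hom}}(p+c_1)$ with exactly the scalars of Theorem \ref{theoremstar}; the zero-sum outer vectors form a $\kappa$-invariant subspace whose spectrum is $\{R_0^{\text{hom}}\mu^2\}$ with $|\mu|\le 1-\phi_0$ by Perron--Frobenius/Gershgorin (this is where $\phi_{ij}\ge 0$, i.e.\ $\phi_d\ge 0$, is used), hence always inside the unit circle; and on $\operatorname{span}\{(\mathbf{1},0),e_M\}$ the trace $c_0+c_1+pM$ and determinant $c_0c_1+p(M-1)c_1+pc_0+(M-1)(p^2-q^2)$ of your reduced $2\times 2$ block reproduce exactly the quadratic (Eq.\ \ref{quardraticstar}), so the Jury-condition step, $\zeta^{\text{crit}}$, $\Tilde{\phi}$, the instability outside $\mathcal{I}$, and the case $\zeta<1/R_0^{\text{hom}}$ are all inherited verbatim from the star-shaped/fully-connected analysis. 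Two points to make explicit when writing it up: (i) the identity $C\mathbf{1}=(1-\phi_0)\mathbf{1}$ is just the stochasticity of the flux matrix, which you should invoke as the model assumption, since the displayed diagonal $\phi_d$ in (Eq.\ \ref{generalfluxModd}) contains what appears to be a typo (an extra $\phi_0$ inside the doubled sum) and does not literally make the rows sum to one; (ii) since your argument uses nothing else about $C$, state the result at that level of generality — it then subsumes in one stroke the star-shaped, cycle-support, star-cycle, star-triangle and star-background results of Section 4, covers the even-$M$ matrix of the appendix, and upgrades Conjectures \ref{starcycle_conj}, \ref{startrinagleconjecture}, \ref{gen_conjecture} and \ref{alt_gen_conjecture} to theorems.
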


To analyze a more general symmetric network, we may consider a flux matrix of the form

\begin{equation} \label{alt_generalflux}
 \phi= \begin{cases}
                                \phi_{i,j+1}=   \phi_j  & \text{ $1 \leq i \leq M-1, ~\text{and}~ i \leq j \leq M-1$} \\
                                                       \phi_{j+1,i}=   \phi_j  & \text{ $1 \leq i \leq M-1, ~\text{and}~ i \leq j \leq M-1$} \\
                                 \phi_{i,i}= 1- \sum_{j=1}^{M-1}\phi_{i,j} &  ~ 1 \leq i \leq M\\
                                  %\phi_{i,M}=   \phi_M  & \text{ $1 \leq i \leq M-1$} \\
                                                                    %\phi_{M,i}=   \phi_M  & \text{ $1 \leq i \leq M-1$} \\
                                                                    0  & \text{ otherwise} \\
   \end{cases}
\end{equation}
where $\phi_{ij} >0, ~\forall~ i , j \in V$. Here we define the fraction of movement from the center node $M$ to the other nodes by $\phi_M=\phi_0$ to be consistent with the other networks. We obtain a particular result for $M=7$  (see appendix for the proof) that we extend as a second conjecture for this class of most flexible networks. 

\begin{conj} \label{alt_gen_conjecture}
Let $V=\left\{1,2,3, \hdots, M \right\}$ be the nodes of networks with the different wights which are defined by the symmetric flux matrix (Eq. \ref{alt_generalflux}), where $ \phi_{ij} \geq 0$. We assume that the infection rate $\beta_i = \beta_0$ at node $i$ for all $i = 1,2,3, \cdots, M-1$ and $\beta_M = \zeta \beta_0$. We also suppose that all the 
nodes are stable i.e. $R_0^{\text{hom}}=\frac{\beta_0}{\gamma}<1$  except the center $M$. Then we have a minimum  $\zeta^{\text{crit}}$ given by (Eq. \ref{zeta_critic} or \ref{zeta_critic_star}). If $\frac{1}{R_0^{\text{hom}}}<\zeta< \zeta^{\text{crit}}$, there exists an interval $\mathcal{I}= \left( \max \left\{0, \frac{1}{M}-\Tilde{\phi} \right\}, \min \left\{\frac{1}{M-1}, \frac{1}{M}+\Tilde{\phi} \right\} \right)$, where $\Tilde{\phi}$ is given by (Eq. \ref{interv_critic})),
such that if $\phi_0 \in \mathcal{I}$, then the DFEis stable, and otherwise it is unstable. Finally, if  $\zeta<\frac{1}{R_0^{\text{hom}}}$ i.e. $\beta_M < \gamma$, then the DFE is always stable.
\end{conj}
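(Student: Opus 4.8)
The plan is to collapse the spectral analysis of the next-generation matrix to a fixed $2\times 2$ problem that does not see the free fluxes at all, exactly as in the proof of Theorem \ref{theoremstar}. Write $\phi$ for the symmetric stochastic flux matrix (Eq. \ref{alt_generalflux}), let $\mathbf 1=(1,\dots ,1)^{\top}$, let $\mathbf e_M$ be the last standard basis vector, and set $\mathbf c=\phi\,\mathbf e_M$, the last column of $\phi$. Since $\beta_j=\beta_0+(\zeta-1)\beta_0\,\delta_{jM}$ and $\sum_l\phi_{il}\phi_{kl}=(\phi^{2})_{ik}$, formula (Eq. \ref{NGMFormulyaReduction}) becomes the rank-one perturbation
\begin{equation*}
\kappa=\frac{\beta_0}{\gamma}\bigl(\phi^{2}+(\zeta-1)\,\mathbf c\,\mathbf c^{\top}\bigr)=R_0^{\text{hom}}\bigl(\phi^{2}+(\zeta-1)\,\mathbf c\,\mathbf c^{\top}\bigr).
\end{equation*}
The one feature shared by every network in the ``star class'' (and the one cycles lack) is that the heterogeneous node $M$ is joined to all other nodes by the \emph{same} flux $\phi_0$; hence $\mathbf c=\phi_0\mathbf 1+(1-M\phi_0)\mathbf e_M$, $\mathbf c^{\top}\mathbf 1=1$, and $\mathbf c^{\top}\mathbf e_M=1-(M-1)\phi_0$.

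The key step is that $V_0:=\operatorname{span}\{\mathbf 1,\mathbf e_M\}$ is $\phi$-invariant: $\phi\mathbf 1=\mathbf 1$ because $\phi$ is stochastic, and $\phi\mathbf e_M=\mathbf c\in V_0$. Consequently $V_0$ is invariant under $\phi^{2}$ and under $\mathbf c\mathbf c^{\top}$, so it is $\kappa$-invariant, and since $\kappa$ is symmetric its orthogonal complement $V_0^{\perp}=\{v:\sum_i v_i=0,\ v_M=0\}$ is $\kappa$-invariant too. On $V_0^{\perp}$ one has $\mathbf c^{\top}v=0$, so $\kappa v=R_0^{\text{hom}}\phi^{2}v$; because the eigenvalues of a symmetric stochastic matrix lie in $[-1,1]$, each of these $M-2$ ``bulk'' eigenvalues of $\kappa$ equals $R_0^{\text{hom}}\mu^{2}$ with $|\mu|\le 1$, hence is $\le R_0^{\text{hom}}<1$ and cannot destabilise the DFE. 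This is precisely the generalisation of the eigenvalue $R_0^{\text{hom}}(1-\phi_0)^{2}$ of multiplicity $M-2$ appearing in Theorem \ref{theoremstar}.

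Everything now hinges on the $2\times 2$ block $B$ of $(R_0^{\text{hom}})^{-1}\kappa$ on $V_0$. Using $\phi^{2}\mathbf 1=\mathbf 1$ and $\phi^{2}\mathbf e_M=\phi\mathbf c=\phi_0(2-M\phi_0)\mathbf 1+(1-M\phi_0)^{2}\mathbf e_M$, in the basis $\{\mathbf 1,\mathbf e_M\}$ one obtains
\begin{equation*}
B=\begin{pmatrix}
1+(\zeta-1)\phi_0 & \phi_0(2-M\phi_0)+(\zeta-1)(1-(M-1)\phi_0)\phi_0\\
(\zeta-1)(1-M\phi_0) & (1-M\phi_0)^{2}+(\zeta-1)(1-(M-1)\phi_0)(1-M\phi_0)
\end{pmatrix},
\end{equation*}
whose entries depend only on $M,\phi_0,\zeta$ — not on $\phi_1,\dots,\phi_{M-1}$. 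A short computation gives $\det B=\zeta(M\phi_0-1)^{2}$ and $\operatorname{tr}B=c_0+c_1+pM$ with the same $p=\zeta\phi_0^{2}$, $c_0=(1-\phi_0)^{2}$, $c_1$ as in Theorem \ref{theoremstar}, so the characteristic polynomial of $B$ is exactly the quadratic (Eq. \ref{quardraticstar}). Scaling its roots by $R_0^{\text{hom}}$ and imposing the Jury conditions reproduces (Eqs. \ref{cond1star}--\ref{cond3star}) verbatim, i.e. (Eqs. A.2--A.4) of Stolerman et al. \cite{lucas_dengue_model}; from here the argument is word for word the one used for the fully connected and star-shaped networks, yielding $\zeta^{\text{crit}}$ as in (Eq. \ref{zeta_critic_star}), the interval $\mathcal I$ with $\tilde{\phi}$ as in (Eq. \ref{interv_critic}), and stability of the DFE for all $\phi_0$ when $\zeta<1/R_0^{\text{hom}}$.

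The main obstacle is not conceptual but one of hypotheses and bookkeeping. First, (Eq. \ref{alt_generalflux}) must genuinely force node $M$ to be attached to every other node with one common flux $\phi_0$; if that ``centre'' flux were allowed to vary, $\mathbf c$ would leave $V_0$ and the reduction would collapse, so the statement should be read with this restriction. Second, one must verify that $\det B$ and $\operatorname{tr}B$ really do simplify to the $p,c_0,c_1$ of Theorem \ref{theoremstar} (the only place an algebra slip is likely) and dispose of the degenerate locus $\phi_0=1/M$, where $\mathbf c\parallel\mathbf 1$, by continuity. Beyond that the argument is structural and handles all $M$ at once rather than node-by-node with computer algebra; in fact the same reduction proves Conjectures \ref{starcycle_conj}, \ref{startrinagleconjecture} and \ref{gen_conjecture} as well, since in each of those families the heterogeneous node is likewise joined to all other nodes by the single flux $\phi_0$.
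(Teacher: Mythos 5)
Your argument is correct, but it is not the route the paper takes: the paper never proves this statement in general — it is stated as a conjecture, supported only by case-by-case symbolic computation (the appendix proof for $M=7$ lists the Jacobian eigenvalues produced by Mathematica and analyzes the two nontrivial ones by hand), exactly as in Propositions \ref{star_cycle_without_adj}, \ref{star_cycle_th} and \ref{star_triangle_gen_M_11}. Your reduction is the structural version of what those computations keep rediscovering: writing $\kappa=R_0^{\text{hom}}\bigl(\phi^{2}+(\zeta-1)\mathbf{c}\mathbf{c}^{\top}\bigr)$ with $\mathbf{c}=\phi\mathbf{e}_M$, noting that in (Eq. \ref{alt_generalflux}) all off-diagonal entries of the last column equal one value (the paper's $\phi_{M-1}$, renamed $\phi_0$), so $\mathbf{c}=\phi_0\mathbf{1}+(1-M\phi_0)\mathbf{e}_M$ and $\operatorname{span}\{\mathbf{1},\mathbf{e}_M\}$ is $\kappa$-invariant; the $M-2$ eigenvalues on the orthogonal complement are $R_0^{\text{hom}}\mu^{2}\le R_0^{\text{hom}}<1$ (these are exactly the ``$\lambda_1,\dots,\lambda_5$'' the appendix checks one by one), and I have verified that your $2\times2$ block has $\operatorname{tr}B=c_0+c_1+pM$ and $\det B=\zeta(M\phi_0-1)^{2}$, i.e. its characteristic polynomial is precisely (Eq. \ref{quardraticstar}), so the Jury conditions and the tail of the argument from Theorem \ref{theoremstar} and \cite{lucas_dengue_model} apply verbatim, giving $\zeta^{\text{crit}}$ as in (Eq. \ref{zeta_critic_star}) and $\Tilde{\phi}$ as in (Eq. \ref{interv_critic}). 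What your approach buys is substantial: it settles the conjecture for every $M$ (and likewise Conjectures \ref{starcycle_conj}, \ref{startrinagleconjecture}, \ref{gen_conjecture}, and the theorems for star-triangle and star-background networks) with one argument whose only hypotheses are that $\phi$ is symmetric, stochastic, nonnegative, and has constant off-diagonal entries in row/column $M$, whereas the paper's method gives explicit eigenvalues but only for fixed small $M$. Two of your cautionary remarks can be dropped: the ``equal centre flux'' restriction is automatic from (Eq. \ref{alt_generalflux}) rather than an extra hypothesis (only the labelling $\phi_0$ versus $\phi_{M-1}$ is awkward), and nothing degenerates at $\phi_0=1/M$, since $V_0$ is defined by $\mathbf{1}$ and $\mathbf{e}_M$ independently of whether $\mathbf{c}$ is parallel to $\mathbf{1}$.
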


\section{Numerical simulations} \label{performence}
We perform numerical simulations to complement our analytical estimates. Our goal is twofold:  (i) to numerically verify the epidemic thresholds obtained from our theoretical estimates and (ii) to explore the impact of the network structures on the  temporal dynamics of epidemic outbreaks. All simulations are performed in MATLAB R2020a using the ode45 (4th/5th order Runge-Kutta-Fehlberg method) function, keeping in mind the order stability area when running simulations by numerically integrating the dynamics of our model. 
 
%Epidemic thresholds and network effects

\subsection*{Epidemic thresholds and network structure} \label{colormaps}
In section 4, we discovered a class of networks with the same epidemic thresholds as the fully connected network. Here we explore the threshold phenomenon using numerical simulations to determine parametric epidemic or non-epidemic regions. More specifically, we investigate when the fluxes and infection rates yielded an epidemic.  Fig. \ref{fig:Pcolor_panel_1} illustrates these results for networks with nine nodes ($M=9$). We numerically determine that an outbreak occurred when $\max_{t \in \tau} I_{\text{tot}}(t) \geq 2$, where $I_{\text{tot}}(t)= \sum_{i=1}^M I_i (t)$ is obtained from the simulations. In Fig. \ref{fig:Pcolor_panel_1} a., we depict star-shaped, star-cycle, and a more general star-background network. For those networks, each  outer node (blue colored) has the same transmission rate $\beta_0$, and the center node (red colored) has a different transmission rate, $\beta_9= \zeta \beta_0$. We then plot colormaps in the $\phi_0 \times \beta_0$ plane to observe epidemic (yellow-colored) and non-epidemic (blue-colored) parametric regions. Here $\phi_0$ represents the flux corresponding to the nodes connecting with the center and $\beta_0$ is the infection rates at the outer nodes. The similarity among the colormaps (including the same critical infection rate $\beta_c$) is consistent with our analytical estimates.  Fig \ref{fig:Pcolor_panel_1} b highlights the difference in the colormaps of cycle networks (uni and bi-directional), indicating that cycles do not have the same epidemic threshold as fully connected or star-shaped networks. A larger epidemic region for cycle networks suggests that a low number of connections may promote outbreaks more easily, likely due to the concentration of infected individuals in the troublesome node (red-colored in the figure). 

\begin{figure}[!htbp]% [!htbp] fix the image in the same page 
   \centering
     \begin{adjustwidth}{0.00in}{0in} 
   \includegraphics[width=1.0\textwidth]{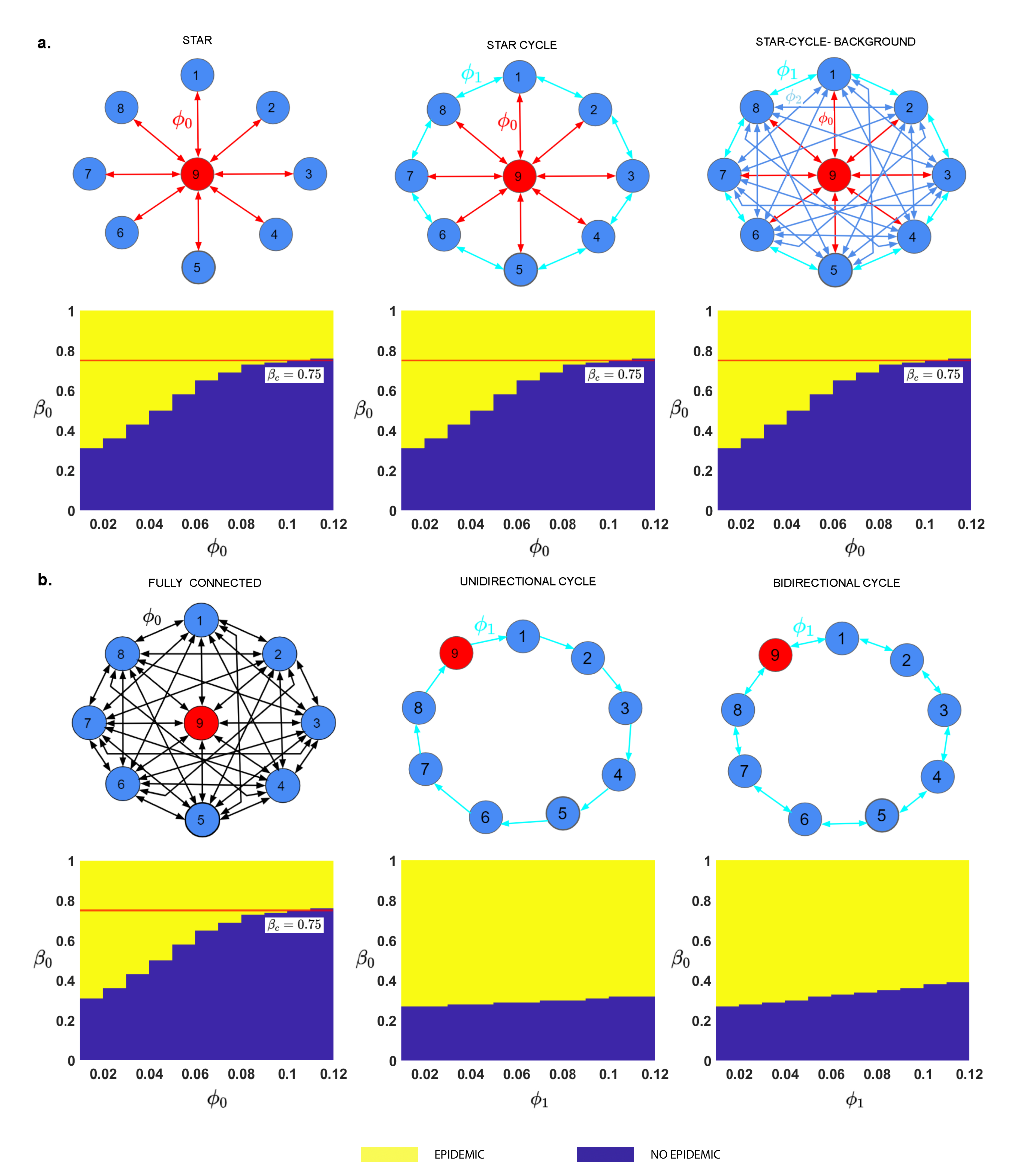}
   \end{adjustwidth}
 \caption{\textbf{Epidemic (yellow colored) \emph{vs.} non-epidemic (blue colored) parametric regions depending on the flux between nodes and infection rates:} Panel (a) shows \emph{star-shaped, star-cycle, and star-cycle-background} networks with their colormaps in the range of the flux $\phi_0=[0.01, 0.12]$ and transmission rate $\beta_0=[0, 1]$, with the step sizes $\Delta \phi_0=0.01$ and $\Delta \beta=0.01$ for the network of size $M=9$. Panel (b) represents the \emph{fully connected, unidirectional, and bidirectional-cycle} networks along with the colormaps using the same information of $\phi_0$, $\beta_0$, and $\phi_1=[0.01, 0.12]$. The area above the threshold (shown by the red line) reveals where the outbreak (yellow parametric region) occurs as predicted by our analytical estimates. For comparison, in all panels we assume $\beta_9 = 4 \beta_0$ and  $\phi_0 \in [0.01, \frac{1}{8}]$, so star-shaped networks have only positive fluxes. For the more general \emph{star-background} network connections, we consider two additional fluxes, $\phi_1=0.08$ and $\phi_2=0.1$ respectively with flux $\phi_0=[0.01, \frac{1}{M-1}]$. For these simulations, we also consider $\gamma=1$, and the initial conditions $I_2(0)=1, I_i(0)=0$ for all $i \in V \setminus 2$; $S_2(0)= 9,999, S_i(0)=10,000$ for all $i \in V \setminus 2$, where $V=\{1,\hdots, 9\}$. We run simulations over the time range $\tau=[0,300]$ with step size $\Delta \tau =0.01$.} 
  \label{fig:Pcolor_panel_1} 
\end{figure}

\subsection*{Epidemic thresholds and network size}

  Cycle and star-class networks differ on the number of node connections, so it is worth asking if the epidemic thresholds change significantly as the network size $M$ increases. He numerically obtained $\beta_0$ threshold values by plotting colormaps as in Fig \ref{fig:Pcolor_panel_1} (see Fig 1 in the appendix for details). Fig \ref{fig:Pcolor_threashold_compare} shows the network geometries and corresponding plots of $M \times \beta_0^{\text{crit}}$  where $M$ is the number of nodes and $\beta_0^{\text{crit}}$ is the minimum infection rate such that epidemics can be controlled for certain values of the flux. We plot the critical infection rates for different multiplicative factors $\zeta$, where $\beta_M = \zeta \beta_0$. When $M$ increases, unidirectional and bidirectional cycles exhibit epidemic thresholds that seem constant with respect to the network size. A possible explanation for this phenomenon might be the little importance of having more nodes in the cycle, given that the epidemic likely occurs in the heterogeneous node and spreads to the neighboring nodes only. On the other hand, star-class networks exhibit a different phenomenon, where the critical infection rate converges to a fixed number. In fact, if our conjecture \ref{gen_conjecture} is true, then $\beta_0^{\text{crit}}$ for the star-class network is given by the expression

$$    \beta_0^{\text{crit}}= \left( \frac{M}{\zeta+M-1} \right)\gamma$$

and thus $ \beta_0^{\text{crit}} \to \gamma $ as $M \to \infty$, which agrees with panel c in Fig \ref{fig:Pcolor_threashold_compare} (in the simulations we consider $\gamma =1$). It is worth noting that such \emph{asymptotic} critical value does not depend on the multiplicative factor $\zeta = \frac{\beta_M}{\beta_0}$. In contrast, for cycle networks $\beta_0^{\text{crit}}$ converges to different values depending on $\zeta$, a feature that is also likely connected with the network geometry.

\begin{figure}[!htbp] % [!htbp] fix the image in the same page  
\centering
  \begin{adjustwidth}{1.25in}{0in}  
   \includegraphics[width=.65\textwidth]{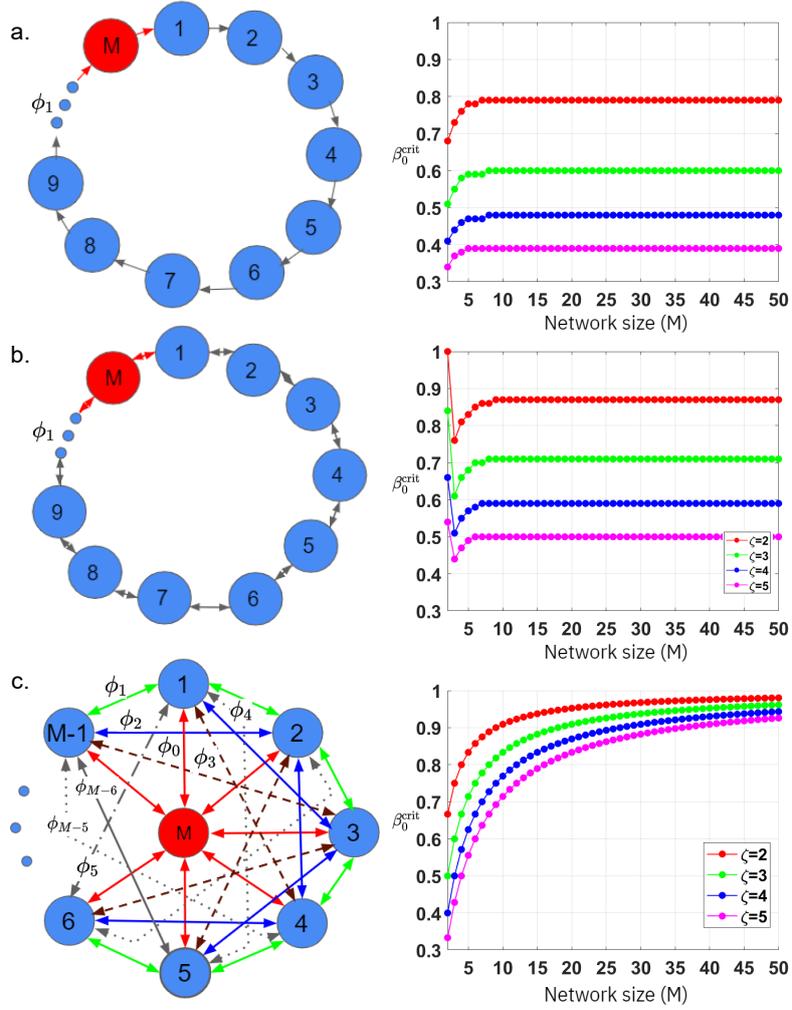}
 \end{adjustwidth}
  \caption{\textbf{Critical infection rates as network size $M$ increases:} We plot $\beta_0^{\text{crit}}$ for $M =1,2,...,50$ and different $\zeta$ values for (a) unidirectional cycles, (b) bidirectional cycles and (c) the general star-class network. Critical values are numerically obtained with colormaps as in Fig \ref{fig:Pcolor_panel_1}  (see Fig \ref{fig:Pcolor_thersholds_compare_with_M_increased} in the Appendix for details). For each simulation, we assume that every node has the same infection rate $\beta_0$, except the center node $M$, where the infection rate is $\beta_M= \zeta \beta_0$.  }
  \label{fig:Pcolor_threashold_compare} 
\end{figure}

% \begin{figure}[!htbp] % [!htbp] fix the image in the same page  
% \centering
%   \begin{adjustwidth}{1.25in}{0in}  
%    \includegraphics[width=.75\textwidth]{Final_Pcolors_ConvSTARclass_BID_UNID_zeta_2.pdf}
%  \end{adjustwidth}
%    \caption{\textbf{Pcolors as network size $M$ increases:} We present colormaps for $M =3, 10$ and $30$, where $\zeta=2$ and the same data as in Fig \ref{fig:Pcolor_panel_1} for (b) the general star-class network, (c) unidirectional cycles, (d) bidirectional cycles. For each simulation, we assume that every node has the same infection rate $\beta_0$, except the center node $M$, where the infection rate is $\beta_M= \zeta \beta_0$.  }
%   \label{fig:Pcolor_thersholds_compare_with_M_increased} 
% \end{figure}  

\subsection*{Temporal dynamics of epidemic outbreaks}

To better understand the temporal evolution of epidemic outbreaks on different networks, we run numerical simulations for star-shaped, fully connected and cycle (uni and bi-directional) networks and for three scenarios where flux values are low, medium, and high. In Fig. \ref{fig:trajectori_panel_0} illustrates the chosen networks of nine nodes ($M=9$) with stable blue colored nodes ($\beta_0 <\gamma$) and the red colored heterogeneous node ($\beta_9 >\gamma$). We then plot the time series of the infected population for flux values $0.01$, $0.06$, and $0.12$ at a single node $I_2$, the heterogeneous node $I_9$, and for all stable nodes summed (represented by $I_1+I_2+...+I_8$).  The infection dynamics in star-shaped and fully connected networks (Fig. \ref{fig:trajectori_panel_0} a and b) are similar: when the flux increases, the disease also spreads significantly from the center to the other nodes. As a result, the sum of the infected individuals in nodes $1$ to $8$ increases, decreasing the epidemic at node $9$. In contrast, for unidirectional and bidirectional cycles (Fig. \ref{fig:trajectori_panel_0} c and d) the epidemic peak remains high in the heterogeneous node for all three flux values. The outbreak spreads to the other nodes when the flux is high, but the peak still occurs earlier compared to the dynamics at  star-shaped and fully connected networks. The difference between star-shaped and cycle networks is notable for low flux values, as cycle networks exhibit a much more localized outbreak, which is likely due to the lack of connections with the heterogeneous node.

\begin{figure}[!htbp] % [!htbp] fix the image in the same page  
\centering
   \begin{adjustwidth}{-0.40in}{0in} 
     \includegraphics[width=1.10\textwidth]{}
 \end{adjustwidth}
  \caption{\textbf{Epidemic dynamics for low, medium, and high flux values:} Trajectories of the infected populations for star-shaped, fully connected, and cycle networks, considering three scenarios of low, medium, and high fluxes. Simulated epidemic individuals are obtained for time courses $\tau=[0,60]$ with the step size $\Delta \tau=0.01$, recovery rate $\gamma=1$, and infection rate $ \beta_i= \beta_0=0.95 ~ \forall i \in V \setminus M$, and  $\beta_M= \zeta \beta_0$, where $\zeta=4$. Moreover, the initial conditions are $I_2(0)=1, I_i(0)=0$ for all $i \in V \setminus 2$, and $S_2(0)=9,999, S_i(0)=10,000$ for all $i \in V \setminus 2$. } 
  \label{fig:trajectori_panel_0} 

\end{figure}  

\subsection*{Epidemic peak intensity and timing across networks}

  We further investigate outbreak properties by plotting the maximum peak intensity $I^{max}_{i}$ and  the corresponding peak time $t(I^{max}_i)$ for each node $i$, considering star-shaped, fully connected and cycle networks with nine nodes (Fig \ref{fig:peak_vs_time_node_mInfected}). For low flux values ($\phi_0=\phi_1=0.01$) and for all four network structures, we observe a higher peak at the heterogeneous node ($i=9$) in comparison to the other nodes (green dots). For the peak time $t(I^{max}_i)$, the network structure seemed to play a major role. In fact, for both star-shaped and fully connected networks, $t(I^{max}_i)$ had a similar value of around 8 weeks for all stable nodes ($i=1,2,...,8$) being lower for node 9 (around 6 weeks). In contrast, cycle networks exhibited \emph{delayed} peaks for nodes that are distant from the heterogeneous node 9. For example, nodes 4 and 5 reached their maximum number of infected only 30/40 weeks after the initial time. For higher flux values ($\phi_0=\phi_1=0.06$ and $\phi_0=\phi_1=0.12$), the peak intensities and timing were comparable between stable and heterogeneous nodes for both star-shaped and fully connected networks, thus indicating the role of mobility promoting the spread of the disease. For cycle networks, we again observe delayed and damped peaks for nodes at a higher distance of the heterogeneous node 9 due to the low number of connections.

\begin{figure}[!htbp] % [!htbp] fix the image in the same page  
\centering
  %\begin{adjustwidth}{-0.5in}{0in} 
    \includegraphics[width=1\textwidth]{}
 %\end{adjustwidth}
  \caption{\textbf{The intensity and timing of epidemic peaks:}  the maximum number of infected individuals $I^{max}_{i}$ and peak time $t(I^{max}_{i})$ for each node of star-shaped, fully connected, unidirectional, and bidirectional cycle networks. When the flux is low, the center node significantly has the highest peak in each network, and the epidemic occurs more quickly than the other nodes. However, the disease spreads more rapidly in Star-shaped and fully connected networks compared to cycle networks and shows the same dynamical behavior in Star-shaped and fully connected networks. For cycle networks, the highest peaks still occur at the center node for any fluxes. Results were obtained by using the same parameter values described in Fig. \ref{fig:trajectori_panel_0}.}
  \label{fig:peak_vs_time_node_mInfected} 
\end{figure}

%\begin{figure}[!htbp] % [!htbp] fix the image in the same page  
%\centering
%  \begin{adjustwidth}{1.20in}{0in} 
%   \includegraphics[width=.7\textwidth]{Ud_Bd_Traj_bdflux.pdf}
% \end{adjustwidth}
 % \caption{\textbf{Unidirectional and bidirectional cycle network trajectories at boundary flux values:} For the unidirectional cycle network, when flux $\phi_1=1$, and $\beta_i=\beta_0<1$ for all $i$ except the center node $\beta_M= \zeta \beta_0$ for any $\zeta>1$, there will be no epidemic. The epidemic occurs in the unidirectional cycle if $\beta_0 \geq 1$ and for any $\zeta>1$ and flux $\phi_0=1$. Although the bidirectional cycle networks have flux $\phi_1=0.5$, the epidemic will occur either for $\beta_0<1$ or $\beta_0 \geq 1$ when $\zeta>1$. Both cases use the epidemic criteria $\max_{t \in \tau} I_{\text{tot}}(t) \geq 2$, where $I_{\text{tot}}(t)= \sum_{i=1}^M I_i (t)$, the transmission rate $\beta_0=0.95$, the multiplication parameter $\zeta=3$, and the simulation time interval $\tau=[0,60]$ with step size $\Delta \tau=0.01$. The recovery rate $\gamma=1$, and the initial conditions are $I_2(0)=1, I_i(0)=0$ for all $i \in M \setminus 2$, and $S_2(0)= 9,999, S_i(0)=10,000$ for all $i \in M \setminus 2$.} 
 %\label{fig:Pcolor_unid_bd_trajectory_boundary} 
%\end{figure}  

\newpage 

\section{Conclusions}

% add this sentence somewhere: We particularly chose the SIR-network model, originally conceived as a simplified model for dengue fever, because it can generally describe diseases with direct transmission mechanisms. The main advantage of the SIR-network model is its simplicity (the equations resemble the classic SIR model \cite{Kermack_McKendric1927}) combined with the metapopulation framework, which allows for insights into the sole interplay between disease transmission and human movement/network geometry.

%%%%%%%%%%%% Paragraph 1: overview of the theoretical results %%%%%%%%%%%%%%%%%%%

In this study, we modeled the spread of a disease in a metapopulation, i.e.,  a network of subpopulations (nodes) connected by the flux of individuals (edges).  We investigated epidemic thresholds using the SIR-network model in the case where all subpopulations are equally sized and all nodes have the same infection rate except one (the heterogeneous node).  To determine outbreak conditions, we utilized the classical next-generation matrix approach to find explicit formulas for the intervals in which the flux of people may control epidemics on different network geometries.  Interestingly, we identified a class of networks with the same epidemic thresholds as fully connected networks, previously established in \cite{lucas_dengue_model}.  This class is characterized by a star-shaped structure, where a central heterogeneous node is equally connected to all other nodes.  On the other hand, cycle networks exhibited different thresholds, suggesting that epidemics on poorly connected networks manifest differently while still being influenced by infection rates and mobility factors.

%%%%%%%%%%%% Paragraph 2: takeaways on numerical simulations %%%%%%%%%%%%%%%%%%%

Our numerical simulations revealed distinct epidemic thresholds depending on the network geometry. Fig \ref{fig:Pcolor_panel_1} illustrates how the flux ($\phi_0$) may control or promote outbreaks more prominently in star-shaped networks compared to cycle networks. For cycles, the blue/yellow colormap division was primarily driven by changes in the infection rates $\beta_0$. The epidemic regions for cycles were larger than star-shaped networks, indicating that a highly connected troublesome node may promote and control the disease spread. Regarding temporal dynamics, cycle networks produced more localized outbreaks at low flux values than star-shaped and fully connected networks. At high flux values, epidemics on cycles also peaked faster, mainly in the heterogeneous node. At the same time, star-shaped and fully connected networks could ``spread'' the outbreak to the whole network.

%%%%%%%%%%%% Paragraph 3: comparison with previous work %%%%%%%%%%%%%%%%%%%

Metapopulation models have played a significant role in exploring  epidemic thresholds and dynamics of infectious diseases on networks. Tocto-Erazo recently explored the contribution of periodic human mobility on the dynamics of Dengue fever in Hermosillo, Mexico \cite{tocto2021effect}. Also in the context of vector-borne diseases, Iggidr and colleagues \cite{iggidr2016dynamics} explored the effects of population circulation on strongly connected networks, estimating $R_0$ and analyzing the stability of both disease-free and endemic equilibrium. Anzo-Hernández et al. \cite{anzo2019risk} conducted a detailed study on the influence of human mobility on $R_0$ for vector-borne diseases, providing explicit formulas and numerical simulations for various network configurations. The impact of vaccination game theoretical approaches \cite{kabir2019evolutionary} and awareness with information spreading \cite{ariful2019impact} has also been explored along with mobility factors and network geometries to provide insights on disease outbreaks and progression. In this work, we adopted a different approach by systematically examining how mobility and geometry shape epidemic thresholds for networks of increased complexity. Unlike previous studies that focused on specific diseases or analyzed reduced-node network structures, we established theoretical estimates that hold for networks with an arbitrary number of nodes, including  star-shaped, star-background, and star-triangle networks. Our formulas provide explicit parametric regions in which the flux of people can either control or promote outbreaks. Additionally, our numerical simulations revealed critical differences in the temporal dynamics of diseases based on the network geometry. 

%% Paragraph 4: limitations and future studies

The insights gained from this pave the way for potential future research directions. Subsequent studies could delve into the role of population heterogeneity by extending our findings to networks with subpopulations of different sizes. For instance, it is reasonable to expect that a central node, such as a city center or transportation hub, can be also highly populated. Investigating the influence of highly populated central nodes could thus provide valuable insights. Furthermore, future studies can relax the assumption of symmetric flux matrices and explore more heterogeneous movement patterns. While all networks considered in this work are theoretical, an intriguing alternative study could involve extracting network features from real-world mobility data.  Finally, it is important to acknowledge that the SIR-network model, originally conceived as a simplified model for dengue fever, can generally describe diseases with direct transmission mechanisms. However, it lacks specific elements such as loss of immunity, exposure periods, and demographic changes. Therefore, an interesting avenue for further exploration would be incorporating those factors in the SIR-network model and examining their impact on epidemic thresholds and disease dynamics.

\clearpage
\newpage

\section*{Acknowledgements}
The authors would like to acknowledge Cort Vanzant (OSU) and Dr. Pedro Maia (UTA)  for their careful reading and feedback on this work. The authors also thank Dr. Anthony Kable (OSU) for the insightful comments, especially for proposing the study of cycle-support networks. LMS thanks Dr. Stefanella Boatto (Federal University of Rio de Janeiro) for initial discussions about the SIR-network model and supportive mentorship during his undergraduate and master's studies.

\bibliography{scibib}
\bibliographystyle{unsrt}

\newpage
\begin{center}
 \section*{Appendix}
\end{center}
\appendix

\section{A few proofs of Propositions and Theorems} 
The purpose of this appendix is to present a few proofs of the Propositions and Theorems that are outlined in the main manuscript.

\textbf{Proposition \ref{controlCycleNeighbourhood}} 
%\begin{prop} \label{controlCycleNeighbourhood}
Let $V=\left\{1,2,3 \right\}$ be the node of a \emph{unidirectional cycle} network and the flux matrix $\phi_{3 \times 3}$ be defined by (Eq. \ref{CycleShape}). We consider the infection rates $\beta_i$ from (Eq. \ref{BetaConditionstarshape}), where all the 
nodes are stable except the third one, i.e, $R_0^{\text{hom}} < 1$ and $\frac{\zeta \beta_0}{\gamma}>1$. Then we have a critical value   $\zeta^{\text{crit}}$ given by 

\begin{equation} %\label{zeta_condition_M=3_circle}
\begin{aligned}
        \zeta^{\text{crit}} =  \frac{1}{R_0^{\text{hom}}} \left( \frac{3 R_0^{\text{hom}}-4}{R_0^{\text{hom}}-2} \right)
\end{aligned}
\end{equation}
such that  if $\frac{1}{R_0^{\text{hom}}}<\zeta< \zeta^{\text{crit}}$, then there exists an interval $\mathcal{I}$ around the point $\frac{1}{2}$ given by 
\begin{equation}  %\label{unicycleinterval}
    \mathcal{I}= \left( \max \left\{0, \frac{1}{2}-\Tilde{\phi} \right\}, \min \left\{1, \frac{1}{2}+\Tilde{\phi} \right\} \right),
\end{equation}
where 
\begin{equation} % \label{cycle_tilda_phi_M3}
    \begin{aligned} 
       & \tilde{\phi} =\frac{1}{2} \sqrt{ \frac{ \left(\zeta  R_0^{\text{hom}}(R_0^{\text{hom}}-2)+(4-3R_0^{\text{hom}})\right)}{R_0^{\text{hom}} \left(1+2\zeta-3\zeta R_0^{\text{hom}}
        \right)}}
    \end{aligned}
\end{equation}
such that if $\phi_0 \in \mathcal{I}$, the DFE is stable, and otherwise, it is unstable.   
%\end{prop}

\begin{proof}
To establish the stability criteria for the DFE, we may perform a local stability analysis of the DFE. The Jacobian matrix at the DFE corresponding to the dynamical system of a $3 \times 3$ star-shaped flux matrix is given by the following expression:
\begin{equation} \label{jacob4}
   \begin{aligned}
    J_{3 \times 3}= \begin{pmatrix}
  \beta_1(-1+\phi_0)^2+ \beta_2 \phi_0^2-\gamma   & -\beta_2(-1+\phi_0)\phi_0 &   -\beta_1 \left(-1+\phi_0 \right)\phi_0\\
 - \beta_2(-1+\phi_0)\phi_0 &   \beta_2 \left(-1+\phi_0 \right)^2+\beta_3 \phi_0^2 -\gamma &   -\beta_3(-1+\phi_0)\phi_0\\
-\beta_1(-1+\phi_0)\phi_0 & -\beta_3(-1+\phi_0)\phi_0 &   \beta_3(-1-\phi_0)^2+ \beta_1 \phi_0^2-\gamma 
          \end{pmatrix}.
   \end{aligned}
   \end{equation}
As symmetric matrices have only real eigenvalues, the eigenvalues of the Jacobian matrix (Eq. \ref{jacob4}) are real. For simplicity, we calculate the eigenvalues using the symbolic packages of the software Mathematica, which are given by: 

\begin{equation*}
   \begin{aligned}
      & \lambda_1= (\beta_0-\gamma)-3\beta_0\phi_0(1-\phi_0) \\
         &   \lambda_2=-\gamma+\frac{1}{2} \beta_0 \left(1+\zeta- \phi_0-2 \zeta \phi_0+\phi_0^2 +2 \zeta \phi_0^2+B \right)\\
         &   \lambda_3=-\gamma+\frac{1}{2} \beta_0 \left(1+\zeta- \phi_0-2 \zeta \phi_0+\phi_0^2 +2 \zeta \phi_0^2-B \right),
    \end{aligned}
\end{equation*}
where $ B:=\sqrt{(1-\phi_0+\phi_0^2)^2 +\zeta^2 \left(1-2\phi_0+2\phi_0^2 \right)^2+\zeta \left(-2+6\phi_0-2\phi_0^2-8\phi_0^3+4 \phi_0^4 \right)}.$ As the jacobian matrix $J_{3 \times 3}$  (Eq.\ref{jacob4}) is a real symmetric matrix, the eigenvalues are real. Since $\lambda_2$ and $\lambda_3$ are real, B must be nonnegative for $\phi_0 \in (0,1)$ and for any $\zeta$, $B$ is well-defined and positive real number. We want to find conditions for which all eigenvalues are negative (LAS DFE), or at least one eigenvalue is positive (unstable DFE). We verify that $\lambda_1$ is always negative for all $\phi_0$, since by hypothesis $\frac{\beta_0}{\gamma}<1$. Then we may check $\lambda_2$ and $\lambda_3$, which are more complicated. However, a sufficient condition for stability is the following: since $\lambda_2>\lambda_3$, then we do have stability if the minimum value of $\lambda_2$ is negative. It can be easily verified that $\frac{1}{2}$ is the critical point of the eigenvalues $\lambda_2$, and $\lambda_3$, since  $\frac{\partial \lambda_2}{\partial \phi_0} (\phi_0=\frac{1}{2},\beta_0, \zeta, \gamma)=0$, and $\frac{\partial \lambda_3}{\partial \phi_0} (\phi_0=\frac{1}{2},\beta_0, \zeta, \gamma)=0$.   

We find that 
\begin{equation*}
   \begin{aligned}
  &  \frac{\partial^2 \lambda_2}{\partial \phi_0^2} (\phi_0=\frac{1}{2},\beta_0, \zeta, \gamma) = 1+ 2\zeta + \frac{3-16 \zeta +4 \zeta^2}{\sqrt{9-4\zeta+4\zeta^2}}>0\\
& \frac{\partial^2 \lambda_3}{\partial \phi_0^2} (\phi_0=\frac{1}{2},\beta_0, \zeta, \gamma) = 1+ 2\zeta - \frac{3-16 \zeta +4 \zeta^2}{\sqrt{9-4\zeta+4\zeta^2}}>0\\
   \end{aligned}
\end{equation*}
and so $\lambda_2$, and $\lambda_3$ have reached to the minimum value when $\phi_0=\frac{1}{2}$. Moreover, 
\begin{equation*}
\begin{aligned}
      & \lambda_1 (\phi_0,\beta_0, \zeta, \gamma) <0 ~~\forall ~~ \phi_0 \in (0,1)\\
           & \lambda_2(\phi_0=\frac{1}{2},\beta_0, \zeta, \gamma)=\frac{1}{8} \left(3+2 \zeta +\sqrt{9-4\zeta+4\zeta^2}\right) \beta_0 -\gamma \\
           & \lambda_3(\phi_0=\frac{1}{2},\beta_0, \zeta, \gamma)=\frac{1}{8} \left(3+2 \zeta -\sqrt{9-4\zeta+4\zeta^2}\right) \beta_0 -\gamma. 
   \end{aligned}
\end{equation*}
Since $\lambda_2 > \lambda_3$, so it is enough to find the condition for which $\lambda_2(\phi_0=\frac{1}{2},\beta_0, \zeta, \gamma)=\frac{1}{8} \left(3+2 \zeta +\sqrt{9-4\zeta+4\zeta^2}\right) \beta_0 -\gamma<0$. This implies that  
\begin{equation*}
  \begin{aligned}
         &  \zeta < \frac{\gamma (3\beta_0-4 \gamma)}{\beta_0(\beta_0-2\gamma)}=\frac{1}{R_0^{\text{hom}}} \left(  \frac{3 R_0^{\text{hom}} -4}{R_0^{\text{hom}}-2}  \right):=\zeta^{\text{crit}}.\\
  \end{aligned}
\end{equation*}
In addition, the function $\phi_0 \mapsto \lambda_2 (\phi_0, \zeta, \beta_0, \gamma)$ has roots 
\begin{equation*}
   \begin{aligned}
       &\phi_{c1}= \frac{1}{2} +\frac{1}{2} \frac{ \sqrt{ -\beta_0(3\zeta\beta_0-\gamma-2\zeta\gamma) \left(\zeta\beta_0(\beta_0-2\gamma)+\gamma(-\beta_0+4\gamma)\right)} }{\beta_0 (-3\zeta\beta_0+\gamma+2\zeta\gamma
       )}\\
       &=\frac{1}{2} +\frac{1}{2} \sqrt{ \frac{ \left(\zeta  R_0^{\text{hom}}(R_0^{\text{hom}}-2)+(4-3R_0^{\text{hom}})\right)}{  R_0^{\text{hom}} \left(1+2\zeta-3\zeta R_0^{\text{hom}}
       \right)}}\\
    &\phi_{c2}= \frac{1}{2} -\frac{1}{2} \frac{ \sqrt{ -\beta_0(3\zeta\beta_0-\gamma-2\zeta\gamma) \left(\zeta\beta_0(\beta_0-2\gamma)+\gamma(-\beta_0+4\gamma)\right)} }{\beta_0 (-3\zeta\beta_0+\gamma+2\zeta\gamma
       )}\\
    &=\frac{1}{2} -\frac{1}{2} \sqrt{ \frac{ \left(\zeta  R_0^{\text{hom}}(R_0^{\text{hom}}-2)+(4-3R_0^{\text{hom}})\right)}{  R_0^{\text{hom}} \left(1+2\zeta-3\zeta R_0^{\text{hom}}
       \right)}}\\
    \end{aligned}
\end{equation*}
Notice that $\phi_{c1}$, and $\phi_{c2}$ both are well defined since $\left(1+2\zeta-3\zeta R_0^{\text{hom}}
       \right)$, and $\zeta R_0^{\text{hom}}(R_0^{\text{hom}}-2)+(4-3R_0^{\text{hom}})$ both are negative and $R_0^{\text{hom}}>1$.
Therefore, according to the restriction $\phi_0 \in \left(0,1\right)$, the interval where there will be no epidemic is given by $\mathcal{I}(\zeta,R_0^{\text{hom}})=\left( \max \left\{0, \frac{1}{2}-\Tilde{\phi} \right\}, \min \left\{1, \frac{1}{2}+\Tilde{\phi} \right\} \right)$, where 
\begin{equation*}
   \begin{aligned}
       \Tilde{\phi}= \frac{1}{2} \sqrt{ \frac{ \left(\zeta  R_0^{\text{hom}}(R_0^{\text{hom}}-2)+(4-3R_0^{\text{hom}})\right)}{  R_0^{\text{hom}} \left(1+2\zeta-3\zeta R_0^{\text{hom}}
       \right)}}. 
   \end{aligned}
\end{equation*}
Therefore, if $\phi_0 \in \mathcal{I}$, then the DFE is stable and, otherwise, it is unstable.
\end{proof}

\textbf{Proposition \ref{star_cycle_without_adj}}

%\begin{prop}  \label{star_cycle_without_adj}
Let $V=\left\{1,2,3,4,5 \right\}$ be a node of \emph{cycle-supports} network and let the geometry of its flux matrix be defined by (Eq. \ref{cycleBacknodes}), where $\phi_0 \in (0, \frac{1}{4}]$. The infection rate $\beta_i$ is defined in (Eq. \ref{BetaConditionstarshape}), and we also assume that all the outer nodes are stable ($R_0^{\text{hom}}=\frac{\beta_0}{\gamma}<1$) except the center node $M=5$. Then the critical value of $\zeta$ called $\zeta^{\text{crit}}$ is given by 
\begin{equation*} 
    \zeta^{\text{crit}}= 1+\left( \frac{1-R_0^{\text{hom}}}{R_0^{\text{hom}}} \right)5.
\end{equation*}
Alternatively, the infection threshold of $\beta_0$, $\beta_0^{\text{crit}}$, is given by 
\begin{equation} %\label{beta0_condition_M=5Cycle_support}
    \beta_0^{\text{crit}}= \gamma\left( \frac{5}{4+\zeta} \right).
\end{equation}
If $\frac{1}{R_0^{\text{hom}}}< \zeta< \zeta^{\text{crit}}$, then there is an interval around the point $\frac{1}{5}$ given by $\mathcal{I}= \left( \max \left\{0, \frac{1}{5}-\Tilde{\phi} \right\}, \min \left\{\frac{1}{4}, \frac{1}{5}+\Tilde{\phi} \right\} \right)$, where 
\begin{equation*}
    \begin{aligned}
        \Tilde{\phi}= \frac{1}{5} \sqrt{ \frac{\gamma (5\gamma-4\beta_0-\zeta\beta_0 )}{\beta_0(\gamma+4\zeta\gamma-5\zeta\beta_0)}}= \frac{1}{5} \sqrt{ \frac{ 5(1-R_0^{\text{hom}})-R_0^{\text{hom}}(\zeta-1) )}{R_0^{\text{hom}} \left(5 \zeta(1-R_0^{\text{hom}})-(\zeta-1) \right)}},
    \end{aligned}
\end{equation*}
such that if $\phi_0 \in \mathcal{I}$, the DFE is stable; and otherwise it is unstable.  
%\end{prop}

\begin{proof} 
To establish the stability criteria for the DFE in this case, we may perform a local stability analysis of the DFE. The Jacobian matrix at the DFE corresponding to the dynamical system of the $5 \times 5$ cycle support flux matrix (Eq. \ref{cycleBacknodes}) is symmetric. As symmetric matrices have only real eigenvalues, the eigenvalues of the above Jacobian matrix are real. For simplicity, we calculate the eigenvalues using the symbolic packages of the software Mathematica, which are given by 
\begin{equation*}
   \begin{aligned}
      & \lambda_1= \beta_0(1-3\phi_0)^2-\gamma \\
       & \lambda_2= \beta_0 -\gamma-6 \beta_0\phi_0+11\beta_0\phi_0^2-2\sqrt{2}\sqrt{\beta_0^2(1-3\phi_0)^2\phi_0^2} \\
    & \lambda_3= \beta_0 -\gamma-6 \beta_0\phi_0+11\beta_0\phi_0^2+2\sqrt{2}\sqrt{\beta_0^2 (1-3\phi_0)^2\phi_0^2} \\
         &   \lambda_4=-\gamma+\frac{1}{2} \beta_0 \left(1+\zeta-2 \phi_0 -8 \zeta \phi_0+5\phi_0^2 + 20 \zeta \phi_0^2 + B \right)\\
         &   \lambda_5=-\gamma+\frac{1}{2} \beta_0 \left(1+\zeta-2 \phi_0 -8 \zeta \phi_0+5\phi_0^2 + 20 \zeta \phi_0^2 - B \right)
   \end{aligned}
\end{equation*}
where $ B:=\sqrt{(1-2\phi_0+5\phi_0^2)^2 +\zeta^2 \left(1-8\phi_0+20\phi_0^2 \right)^2+2\zeta \left(-1+10\phi_0-9\phi_0^2-80\phi_0^3+ 100 \phi_0^4 \right)}.$ Since all eigenvalues are real, B must be nonnegative for $\phi_0 \in (0, \frac{1}{4}]$, and for any $\zeta$, $B$ is a positive real number. We want to find conditions for which all eigenvalues are negative (LAS DFE), or at least one eigenvalue is positive (unstable DFE). One can see that $\lambda_1$,  $\lambda_2$, and $\lambda_3$ are negative for all $\phi_0 \in \left(0,\frac{1}{4} \right]$, since by hypothesis $\frac{\beta_0}{\gamma}<1$. It is thus sufficient to check the critical points of $\lambda_4$ and $\lambda_5$. However, a sufficient condition for stability is the following: since $\lambda_4>\lambda_5$, it is thus sufficient to check the stability of $\lambda_4$. It can be easily verified that $\frac{1}{5}$ is the critical point of both eigenvalues $\lambda_4$, and $\lambda_5$, since  $\frac{\partial \lambda_4}{\partial \phi_0} (\phi_0=\frac{1}{5},\beta_0, \zeta, \gamma)=0$, and $\frac{\partial \lambda_5}{\partial \phi_0} (\phi_0=\frac{1}{5},\beta_0, \zeta, \gamma)=0$.    
We find that 
\begin{equation*}
   \begin{aligned}
     &  \frac{\partial^2 \lambda_4}{\partial \phi_0^2} (\phi_0=\frac{1}{5},\beta_0, \zeta, \gamma) =  \frac{40 \beta_0 (\zeta-1)^2}{4+\zeta}>0\\
& \frac{\partial^2 \lambda_5}{\partial \phi_0^2} (\phi_0=\frac{1}{5},\beta_0, \zeta, \gamma) = \frac{250 \zeta \beta_0}{4+\zeta}>0\\
   \end{aligned}
\end{equation*}
and so $\lambda_4$ and $\lambda_5$ have reached to the minimum value when $\phi_0=\frac{1}{5}$. Moreover, 
\begin{equation*}
\begin{aligned}
      & \lambda_i (\phi_0,\beta_0, \zeta, \gamma) <0 ~~\forall ~\text{for}~ \phi_0 \in (0,\frac{1}{4}], ~\text{where}, i=1,2,3\\
           & \lambda_4(\phi_0=\frac{1}{5},\beta_0, \zeta, \gamma)=\frac{1}{10} \left(4\beta_0+\zeta\beta_0+\sqrt{(4+\zeta)^2} \beta_0 -10\gamma \right)=\frac{1}{5}\beta_0(4+\zeta)-\gamma \\
           & \lambda_5(\phi_0=\frac{1}{5},\beta_0, \zeta, %\gamma)=\frac{1}{10} \left(4\beta_0+\zeta\beta_0-%\sqrt{(4+\zeta)^2} \beta_0 -10\gamma \right)=-\gamma <0. 
   \end{aligned}
\end{equation*}
However, if $\frac{1}{5}\beta_0(4+\zeta)-\gamma <0$ and that implies  
\begin{equation*}
  \begin{aligned}
  & \beta_0 < \gamma\left( \frac{5}{4+\zeta} \right):=\beta_0^{\text{crit}} ~~\text{and~alternatively}  \\
         &  \zeta < \frac{5 \gamma- 4 \beta_0}{\beta_0}=  \frac{5 - 4 \frac{\beta_0}{\gamma}}{\frac{\beta_0}{\gamma}}= \frac{5-4R_0^{\text{hom}}}{R_0^{\text{hom}}}=1+\left( \frac{1-R_0^{\text{hom}}}{R_0^{\text{hom}}} \right)5 :=\zeta^{\text{crit}}.\\
   \end{aligned}
\end{equation*}
In addition, the function $\phi_0 \mapsto \lambda_4 (\phi_0, \zeta, \beta_0, \gamma)$ has roots 
\begin{equation*}
   \begin{aligned}
       & \phi_{c1}= \frac{1}{5} +\frac{1}{5} \sqrt{ \frac{\gamma (5\gamma-4\beta_0-\zeta\beta_0 )}{\beta_0(\gamma+4\zeta\gamma-5\zeta\beta_0)}}=\frac{1}{5}+ \frac{1}{5} \sqrt{ \frac{ 5(1-R_0^{\text{hom}})-R_0^{\text{hom}}(\zeta-1) )}{R_0^{\text{hom}} \left(5 \zeta(1-R_0^{\text{hom}})-(\zeta-1) \right)}},\\
       & \phi_{c2}= \frac{1}{5}-\frac{1}{5} \sqrt{ \frac{\gamma (5\gamma-4\beta_0-\zeta\beta_0 )}{\beta_0(\gamma+4\zeta\gamma-5\zeta\beta_0)}}=\frac{1}{5}- \frac{1}{5} \sqrt{ \frac{ 5(1-R_0^{\text{hom}})-R_0^{\text{hom}}(\zeta-1) )}{R_0^{\text{hom}} \left(5 \zeta(1-R_0^{\text{hom}})-(\zeta-1) \right)}}.
   \end{aligned}
\end{equation*}
Therefore, taking into account the restriction $\phi_0 \in \left(0,\frac{1}{4}\right]$, the interval where there will be no epidemic is given by $\mathcal{I}(\zeta,R_0^{\text{hom}})=\left( \max \left\{0, \frac{1}{5}-\Tilde{\phi} \right\}, \min \left\{\frac{1}{4}, \frac{1}{5}+\Tilde{\phi} \right\} \right)$, where 

\begin{equation*}
   \begin{aligned}
       \Tilde{\phi}=\frac{1}{5} \sqrt{ \frac{ 5(1-R_0^{\text{hom}})-R_0^{\text{hom}}(\zeta-1) )}{R_0^{\text{hom}} \left(5 \zeta(1-R_0^{\text{hom}})-(\zeta-1) \right)}}. 
   \end{aligned}
\end{equation*}
Therefore, if $\phi_0 \in \mathcal{I}$, then the DFE is stable %and, otherwise, it is unstable.
\end{proof}

\newpage
\textbf{Proposition \ref{star_cycle_th}}

%\begin{prop} \label{star_cycle_th}
Let $V=\left\{1,2,3,4,5 \right\}$ be the nodes of a \emph{star-cycle} network, and the geometry of the corresponding network is defined by the flux (Eq. \ref{StarCycleReturn}), where $M=5, ~\text{and} ~\phi_0 \in (0, \frac{1}{4}]$. The infection rate $\beta_i$ is defined in (Eq. \ref{BetaConditionstarshape}), and we also assume that all the 
nodes are stable ($R_0^{\text{hom}}=\frac{\beta_0}{\gamma}<1$) except the node $M=5$. Then the critical value of $\zeta$ called $\zeta^{\text{crit}}$ is given by 
\begin{equation} %\label{zeta_condition_M=5Cycle_star}
    \zeta^{\text{crit}}= 1+\left( \frac{1-R_0^{\text{hom}}}{R_0^{\text{hom}}} \right)5.
\end{equation}
Alternatively, the infection threshold of $\beta_0$, $\beta_0^{\text{crit}}$ is given by 
\begin{equation} %\label{beta0_condition_M=5Cycle_star}
    \beta_0^{\text{crit}}= \gamma\left( \frac{5}{4+\zeta} \right).
\end{equation}
If $\frac{1}{R_0^{\text{hom}}}< \zeta< \zeta^{\text{crit}}$, there is an interval around the point $\frac{1}{5}$ given by $\mathcal{I}= \left( \max \left\{0, \frac{1}{5}-\Tilde{\phi} \right\}, \min \left\{\frac{1}{4}, \frac{1}{5}+\Tilde{\phi} \right\} \right)$, where 
\begin{equation*}
    \begin{aligned}
        \Tilde{\phi}= \frac{1}{5} \sqrt{ \frac{\gamma (5\gamma-4\beta_0-\zeta\beta_0 )}{\beta_0(\gamma+4\zeta\gamma-5\zeta\beta_0)}}= \frac{1}{5} \sqrt{ \frac{ 5(1-R_0^{\text{hom}})-R_0^{\text{hom}}(\zeta-1) )}{R_0^{\text{hom}} \left(5 \zeta(1-R_0^{\text{hom}})-(\zeta-1) \right)}},
    \end{aligned}
\end{equation*}
such that if $\phi_0 \in \mathcal{I}$, the DFE is stable, and otherwise it is unstable.      
%\end{prop}

\begin{proof}
We perform a local stability analysis to establish the stability criteria for DFE. The Jacobian matrix at DFE corresponding to the dynamical system of the $5 \times 5$ star-cycle flux matrix (Eq. \ref{StarCycleReturn}) is symmetric. Since symmetric matrices have only real eigenvalues, the eigenvalues of the Jacobian matrix are real. For simplicity, we calculate the eigenvalues using the symbolic packages of the software Mathematica, which are given by:

\begin{equation*}
   \begin{aligned}
      & \lambda_1= \beta_0(1-3\phi_0)^2-\gamma \text{~with~multiplicity~2}\\
       & \lambda_2= \beta_0(1-5\phi_0)^2-\gamma \\
         &   \lambda_3=-\gamma+\frac{1}{2} \beta_0 \left(1+\zeta-2 \phi_0 -8 \zeta \phi_0+5\phi_0^2 + 20 \zeta \phi_0^2 + B \right)\\
         &   \lambda_4=-\gamma+\frac{1}{2} \beta_0 \left(1+\zeta-2 \phi_0 -8 \zeta \phi_0+5\phi_0^2 + 20 \zeta \phi_0^2 - B \right)
   \end{aligned}
\end{equation*}
  where, $ B:=\sqrt{(1-2\phi_0+5\phi_0^2)^2 +\zeta^2 \left(1-8\phi_0+20\phi_0^2 \right)^2+2\zeta \left(-1+10\phi_0-9\phi_0^2-80\phi_0^3+ 100 \phi_0^4 \right)}$. 

Since all eigenvalues are real and hence B must be nonnegative for $\phi_0 \in (0, \frac{1}{4}]$ and any $\zeta$, $B$ is a well-defined and positive real number. We want to find conditions for which all eigenvalues are negative (LAS DFE), or at least one eigenvalue is positive (unstable DFE). One can see that both $\lambda_1$, and $\lambda_2$ are negative for all $\phi_0 \in \left(0,\frac{1}{4} \right]$, since by hypothesis $\frac{\beta_0}{\gamma}<1$. It is thus sufficient to check the critical points of the remaining eigenvalues. Then we may check $\lambda_3$ and $\lambda_4$, which are more complicated. However, a sufficient condition for stability is the following: since $\lambda_3>\lambda_4$, we have stability if the minimum value of $\lambda_3$ is negative. It can be easily verified that $\frac{1}{5}$ is the critical point of the eigenvalues $\lambda_3$, and $\lambda_4$, since  $\frac{\partial \lambda_3}{\partial \phi_0} (\phi_0=\frac{1}{5},\beta_0, \zeta, \gamma)=0$, and $\frac{\partial \lambda_4}{\partial \phi_0} (\phi_0=\frac{1}{5},\beta_0, \zeta, \gamma)=0$.    
We find that 
\begin{equation*}
   \begin{aligned}
     &  \frac{\partial^2 \lambda_3}{\partial \phi_0^2} (\phi_0=\frac{1}{5},\beta_0, \zeta, \gamma) =  \frac{40 \beta_0 (\zeta-1)^2}{4+\zeta}>0\\
& \frac{\partial^2 \lambda_4}{\partial \phi_0^2} (\phi_0=\frac{1}{5},\beta_0, \zeta, \gamma) = \frac{250 \zeta \beta_0}{4+\zeta}>0\\
   \end{aligned}
\end{equation*}
and so $\lambda_3$ and $\lambda_4$ have reached to the minimum value when $\phi_0=\frac{1}{5}$. Moreover, 
\begin{equation*}
\begin{aligned}
      & \lambda_1 (\phi_0,\beta_0, \zeta, \gamma) <0 ~~\forall ~~ \phi_0 \in (0,\frac{1}{4}]\\
          & \lambda_2 (\phi_0,\beta_0, \zeta, \gamma) <0 ~~\forall ~~ \phi_0 \in (0,\frac{1}{4}]\\
           & \lambda_3(\phi_0=\frac{1}{5},\beta_0, \zeta, \gamma)=\frac{1}{10} \left(4\beta_0+\zeta\beta_0+\sqrt{(4+\zeta)^2} \beta_0 -10\gamma \right)=\frac{1}{5}\beta_0(4+\zeta)-\gamma \\
           & \lambda_4(\phi_0=\frac{1}{5},\beta_0, \zeta, \gamma)=\frac{1}{10} \left(4\beta_0+\zeta\beta_0-\sqrt{(4+\zeta)^2} \beta_0 -10\gamma \right)=-\gamma <0 
    \end{aligned}
\end{equation*}
However, if $\frac{1}{5}\beta_0(4+\zeta)-\gamma <0$ and that implies  
\begin{equation*}
  \begin{aligned}
  & \beta_0 < \gamma\left( \frac{5}{4+\zeta} \right):=\beta_0^{\text{crit}} ~~\text{and~ alternatively~we ~have,}  \\
         &  \zeta < \frac{5 \gamma- 4 \beta_0}{\beta_0}=  \frac{5 - 4 \frac{\beta_0}{\gamma}}{\frac{\beta_0}{\gamma}}= \frac{5-4R_0^{\text{hom}}}{R_0^{\text{hom}}}=1+\left( \frac{1-R_0^{\text{hom}}}{R_0^{\text{hom}}} \right)5 :=\zeta^{\text{crit}}.\\
  \end{aligned}
\end{equation*}
In addition, the function $\phi_0 \mapsto \lambda_3 (\phi_0, \zeta, \beta_0, \gamma)$ has roots 
\begin{equation*}
   \begin{aligned}
       & \phi_{c1}= \frac{1}{5} +\frac{1}{5} \sqrt{ \frac{\gamma (5\gamma-4\beta_0-\zeta\beta_0 )}{\beta_0(\gamma+4\zeta\gamma-5\zeta\beta_0)}}=\frac{1}{5}+ \frac{1}{5} \sqrt{ \frac{ 5(1-R_0^{\text{hom}})-R_0^{\text{hom}}(\zeta-1) )}{R_0^{\text{hom}} \left(5 \zeta(1-R_0^{\text{hom}})-(\zeta-1) \right)}},\\
       & \phi_{c2}= \frac{1}{5}-\frac{1}{5} \sqrt{ \frac{\gamma (5\gamma-4\beta_0-\zeta\beta_0 )}{\beta_0(\gamma+4\zeta\gamma-5\zeta\beta_0)}}=\frac{1}{5}- \frac{1}{5} \sqrt{ \frac{ 5(1-R_0^{\text{hom}})-R_0^{\text{hom}}(\zeta-1) )}{R_0^{\text{hom}} \left(5 \zeta(1-R_0^{\text{hom}})-(\zeta-1) \right)}}.
   \end{aligned}
\end{equation*}
Therefore, taking into account the restriction $\phi_0 \in \left(0,\frac{1}{4}\right]$, the interval where there will be no epidemic is given by $\mathcal{I}(\zeta,R_0^{\text{hom}})=\left( \max \left\{0, \frac{1}{5}-\Tilde{\phi} \right\}, \min \left\{\frac{1}{4}, \frac{1}{5}+\Tilde{\phi} \right\} \right)$, where 

\begin{equation*}
   \begin{aligned}
       \Tilde{\phi}=\frac{1}{5} \sqrt{ \frac{ 5(1-R_0^{\text{hom}})-R_0^{\text{hom}}(\zeta-1) )}{R_0^{\text{hom}} \left(5 \zeta(1-R_0^{\text{hom}})-(\zeta-1) \right)}}. 
   \end{aligned}
\end{equation*}
Therefore, if $\phi_0 \in \mathcal{I}$, then the DFE is stable, and, otherwise, it is unstable.
\end{proof}

\textbf{Theorem \ref{theoremstartriangle}}

%\begin{thm} \label{theoremstartriangle}
Let $V=\left\{1,2,3, \hdots, M \right\}$ be the nodes of a star-triangle network defined by the flux matrix (\ref{startriangleflux}), where $\phi_0+\phi_1<1, 0<\phi_0 \leq \frac{1}{M-1}$, and $\phi_1>0$. We assume that the infection rate $\beta_i = \beta_0$ at node $i$ for all $i = 1,2,3, \cdots, M-1$ and $\beta_M = \zeta \beta_0$. We also suppose that all the 
nodes are stable $(R_0^{\text{hom}}=\frac{\beta_0}{\gamma}<1)$  except the center node $M$. Then we have a minimum  $\zeta^{\text{crit}}$ given by (Eq. \ref{zeta_critic} or \ref{zeta_critic_star}). If $\frac{1}{R_0^{\text{hom}}}<\zeta< \zeta^{\text{crit}}$, there exists an interval $\mathcal{I}= \left( \max \left\{0, \frac{1}{M}-\Tilde{\phi} \right\}, \min \left\{\frac{1}{M-1}, \frac{1}{M}+\Tilde{\phi} \right\} \right)$, where $\Tilde{\phi}$ is given by (Eq. \ref{interv_critic}),
such that if $\phi_0 \in \mathcal{I}$, then the DFE is stable, and otherwise it is unstable.     
%\end{thm}

\begin{proof}
We first find the next generation matrix (NGM) $\kappa$ to check the stability by finding the conditions on the parameters at the DFE. As the next generation matrix $\kappa= ( \kappa_{ij})_{M\times M}$, where $ \kappa_{ij}=\frac{1}{\gamma} \sum_{j=1}^M \beta_j\phi_{ij}\phi_{kj} \frac{N_i}{N_j^p}$ is given by (Eq. \ref{NGMFormulya}). Similarly to what we did in Section \ref{The SIR-network model}, we start from the premise that the populations of each node are equal. As part of heterogeneous network models, we consider a different infection rate in the central node with (Eq. \ref{BetaConditionstarshape}). Now for any node $j$ we get $N_j^p= \sum_{k=1}^M\phi_{kj}N_k=(\sum_{k=1}^M\phi_{kj})N=1.N=N$, since $\sum_{k=1}^M\phi_{kj}=1$, and therefore $\frac{N_j}{N_j^p}=\frac{N}{N}=1$.
Thus the NGM (Eq. \ref{NGMFormulya}) becomes (Eq. \ref{NGMFormulyaReduction}), and applying the flux matrix (Eq. \ref{startriangleflux}) we have   
\begin{equation}  \label{NGM3}
\begin{aligned}
&\kappa=\frac{\beta_0}{\gamma} \begin{pmatrix}
p+c_0  &  p+c_2 & \hdots & p & p & q  \\
p+c_2 & p+c_0   & \hdots & p & p & q  \\
 \vdots & \vdots &  \ddots & \vdots &  \vdots\\
                  p & p &  p+c_0 & p+c_2 & p & q\\
                          p & p &  p+c_2 & p+c_0 & p & q\\
   \vdots & \vdots & \vdots & \vdots & \ddots  & q\\
     q  &  q & q & q & q & p+c_1  \\
       \end{pmatrix}, 
      \end{aligned}
   \end{equation}
 
where $p=\zeta \phi_0^2, q=\phi_0(1-\phi_0+\zeta \left(1-(M-1)\phi_0 \right)) $, $c_0=\phi_1^2+(-1+\phi_0+\phi_1)^2$, and $c_1= (M-1)\phi_0^2-\zeta\phi_0^2+\zeta \left(1-(M-1)\phi_0\right)^2$, and $c_2=-2\phi_1 \left(-1+\phi_0+\phi_1 \right)$. 

Since the next generation matrix, $\kappa$ in (Eq. \ref{NGM3} ) is symmetric and positive definite, the eigenvalues of $\kappa$ are all real and positive. Using Collings (22) method of expansion, the characteristic polynomial of $\kappa$ is %\cite{collings}
\begin{equation}
 (c_0+c_2-\lambda)^{\frac{M-3}{2}} (c_0-c_2-\lambda)^{\frac{M-1}{2}} \left( \lambda^2-(c_0+c_1+pM)\lambda+\left(c_0c_1+c_1c_2+ c_0p+ (M-1)c_1 p+c_2p +(M-1)(p^2-q^2) \right) \right)=0.  
\end{equation}
For the square matrix $\kappa$, and for the nonzero vector $V$, we have 
\begin{equation}
 \kappa V =  R_0^{\text{hom}}\lambda V.
\end{equation}

Therefore, $R_0^{\text{hom}}\lambda$ is the eigenvalue of the next generation matrix $\kappa$. Therefore, eigenvalues of $\kappa$ are $\frac{\beta_0}{\gamma} (c_0+c_2)=R_0^{\text{hom}}(1-\phi_0)^2$ with multiplicity $\frac{M-3}{2}$, also the eigenvalue of $\kappa$ is $\frac{\beta_0}{\gamma} (c_0-c_2)=R_0^{\text{hom}}(-1+\phi_0+2\phi_1)^2$ with multiplicity $\frac{M-1}{2}$. Finally, the other two eigenvalues are determined from the quadratic term 
\begin{equation}\label{quardraticstartriangle}
   p(\lambda)=\lambda^2-(c_0+c_1+c_2+pM)\lambda+\left(c_1c_0+c_1c_2+c_0p+ p(M-1)c_1 +c_2p+(M-1)(p^2-q^2) \right).
\end{equation}
To determine the stability of the NGM $\kappa$ in  (Eq. \ref{NGM3}), we need to find the conditions on parameters such that all of its eigenvalues are within the unit circle, so that $R_0<1$. The eigenvalues automatically meet the condition 
\begin{equation} \label{eigenvcon1startriangle}
   R_0^{\text{hom}}(1-\phi_0)^2 <1 ~\text{and}~R_0^{\text{hom}}(-1+\phi_0+2\phi_1)^2<1
\end{equation}
since $R_0^{\text{hom}}<1$, and $1-(M-1)\phi_0>0$, and $1-\phi_0-(M-2)\phi_1 >0$ for all $M$. Even the condition (Eq. \ref {eigenvcon1startriangle}) holds when $R_0^{\text{hom}}>1$, if $\phi_0$ near enough to $1$.  

The quadratic (Eq. \ref{quardraticstartriangle}) gives the other two eigenvalues lying within the unit circle, which can be tested by applying the Jury conditions (see Murray \cite{murray2002mathematical}, page $507$). Using the Jury condition, a quadratic equation $P(\lambda)=\lambda^2+a_1\lambda+a_0=0$, these conditions are $P(1)=1+a_1+a_0>0$, $P(-1)=1-a_1+a_0>0$, and $P(0)=a_0<1$. 

As $R_0^{\text{hom}}\lambda:=\lambda'$ is the eigenvalue of $\kappa$,substituting $\lambda=\frac{\lambda'}{R_0^{\text{hom}}}$ in (Eq. \ref{quardraticstartriangle}) we have 
\begin{equation}\label{quardratic2startriangle}
   P(\lambda')=\frac{1}{(R_0^{\text{hom}})^2}[(\lambda')^2-a_1 R_0^{\text{hom}}\lambda'+a_0(R_0^{\text{hom}})^2,
\end{equation}
where $a_0:=\left(c_1c_0+ p(M-1)c_1 +pc_0+(M-1)(p^2-q^2) \right)$, and $a_1:=(c_0+c_1+pM)$.
Applying the Jury conditions to (Eq. \ref{quardratic2startriangle}), we obtain the following inequalities: 
\begin{equation}\label{cond1startriangle}
   \begin{aligned}
         \left\{R_{0}^{\text{hom}} [ \zeta \left( R_{0}^{\text{hom}} (M \phi_0-1)^2-(M-1)\phi_0(M\phi_0-2)-1  \right)  +\phi_0(2-M\phi_0)-1]+1  \right\}>0,
             \end{aligned}
\end{equation}

\begin{equation}
   \begin{aligned}
         \left\{R_{0}^{\text{hom}} [ \zeta \left( R_{0}^{\text{hom}} (M \phi_0-1)^2+(M-1)\phi_0(M\phi_0-2)+1  \right)  -\phi_0(2-M\phi_0)+1]+1  \right\}>0,
             \end{aligned}
\end{equation}

\begin{equation}\label{cond3startriangle}
   \begin{aligned}
        (R_{0}^{\text{hom}})^2 \zeta (M \phi_0-1)^2 <1.
             \end{aligned}
\end{equation}

The Jury conditions imply that (Eqs. \ref{cond1startriangle}-\ref{cond3startriangle}) are identical to (Eqs. A.2-A.4) in Stolerman et al. \cite{lucas_dengue_model}. Therefore the rest of the proof follows.    

\end{proof}

\textbf{Proposition \ref{star_triangle_gen_M_11}}

%\begin{prop} \label{star_triangle_gen_M_11}
Let $V=\left\{1,2,3,4, \hdots, 11 \right\}$ be the nodes of a star-triangle network defined by the symmetric flux matrix (Eq. \ref{genstartriangle}) and $M=11$ be an odd number. We assume that the infection rate $\beta_i = \beta_0$ at node $i$ for all $i = 1,2,3, \cdots, 10$ and $\beta_{11} = \zeta \beta_0$. We also suppose that all the nodes are stable $(R_0^{\text{hom}}=\frac{\beta_0}{\gamma}<1)$  except the center node $M$. Then the critical value of $\zeta$ called $\zeta^{\text{crit}}$ is given by 

\begin{equation} %\label{star_triangle_gen_M_11_zeta_cond}
    \zeta^{\text{crit}}= 1+\left( \frac{1-R_0^{\text{hom}}}{R_0^{\text{hom}}} \right)11. 
\end{equation}
Alternatively, the infection threshold of $\beta_0$, $\beta_0^{\text{crit}}$ is given by 
\begin{equation} %\label{beta0_condition_M=11star_triangle}
    \beta_0^{\text{crit}}= \gamma\left( \frac{11}{10+\zeta} \right).
\end{equation}
If $\frac{1}{R_0^{\text{hom}}}<\zeta< \zeta^{\text{crit}}$, there exists an interval $\mathcal{I}= \left( \max \left\{0, \frac{1}{11}-\Tilde{\phi} \right\}, \min \left\{\frac{1}{10}, \frac{1}{11}+\Tilde{\phi} \right\} \right)$, where $\Tilde{\phi}$ is given by (Eq. \ref{interv_critic}),
such that if $\phi_0 \in \mathcal{I}$, then the DFE is stable, and otherwise it is unstable. 
%\end{prop}

\begin{proof}
To establish the stability criteria for DFE in this case, we may perform a local stability analysis of the DFE. The Jacobian matrix at the DFE corresponding to the dynamical system of the $11 \times 11$ \emph{star-triangle} network and the flux matrix (Eq. \ref{genstartriangle}) is symmetric. Since symmetric matrices have only real eigenvalues, the eigenvalues of the above Jacobian matrix are real. For simplicity, we calculate the eigenvalues using the symbolic packages of the software Mathematica which are given by  

\begin{equation*}
   \begin{aligned}
      & \lambda_1= -\gamma+\beta_0(-1-\phi_6)^2 \text{~with~multiplicity~4}\\
       & \lambda_2=-\gamma+ \beta_0(-1+2\phi_1+\phi_6)^2 \\
           & \lambda_3=-\gamma+ \beta_0(-1+2\phi_2+\phi_6)^2 \\
               & \lambda_4=-\gamma+ \beta_0(-1+2\phi_3+\phi_6)^2 \\
                  & \lambda_5=-\gamma+ \beta_0(-1+2\phi_4+\phi_6)^2 \\
                     & \lambda_6=-\gamma+ \beta_0(-1+2\phi_5+\phi_6)^2 \\
         &   \lambda_7=-\gamma+\frac{1}{2} \beta_0 \left(1+\zeta-2 \phi_6 -20 \zeta \phi_6 + 11\phi_6^2 + 110 \zeta \phi_6^2 + B \right)\\
            &   \lambda_8=-\gamma+\frac{1}{2} \beta_0 \left(1+\zeta-2 \phi_6 -20 \zeta \phi_6 + 11\phi_6^2 + 110 \zeta \phi_6^2 - B \right)\\
   \end{aligned}
\end{equation*}
  where $ B:=\sqrt{(1-2\phi_6+11\phi_6^2)^2 +\zeta^2 \left(1-20\phi_6+110\phi_6^2 \right)^2+2\zeta \left(-1+22\phi_6-81\phi_6^2-440\phi_6^3+ 1210 \phi_6^4 \right)}$. 
  Since all eigenvalues are real and B must be nonnegative for $\phi_0 \in (0, \frac{1}{10}]$ and any $\zeta$, $B$ is a positive real number. We want to find conditions for which all eigenvalues are negative (LAS DFE), or at least one eigenvalue is positive (unstable DFE). Because $\phi_{ij}>0$, we can see that all $\lambda_i$'s are negative for all $ i=1,2, \hdots, 6$. It is thus sufficient to check the critical points of $\lambda_7$ and $\lambda_8$. However, a sufficient condition for their stability is the following: since $\lambda_7>\lambda_8$, it is thus sufficient to check the stability of $\lambda_7$. It can be easily verified that $\frac{1}{11}$ is the critical point of both eigenvalues $\lambda_7$ and $\lambda_8$, since  $\frac{\partial \lambda_7}{\partial \phi_6} (\phi_6=\frac{1}{11},\beta_6, \zeta, \gamma)=0$, and $\frac{\partial \lambda_8}{\partial \phi_6} (\phi_6=\frac{1}{11},\beta_0, \zeta, \gamma)=0$. We can find that 
\begin{equation*}
   \begin{aligned}
 &  \frac{\partial^2 \lambda_7}{\partial \phi_6^2} (\phi_6=\frac{1}{11},\beta_0, \zeta, \gamma) =  \frac{220 \beta_0 (\zeta-1)^2}{10+\zeta}>0\\
& \frac{\partial^2 \lambda_8}{\partial \phi_6^2} (\phi_6=\frac{1}{11},\beta_0, \zeta, \gamma) = \frac{2662 \zeta \beta_0}{10+\zeta}>0\\
   \end{aligned}
\end{equation*}
and so $\lambda_7$ and $\lambda_8$ have reached to their minimum value when $\phi_6=\frac{1}{11}$. Moreover, 
\begin{equation*}
\begin{aligned}
      & \lambda_i (\phi_6,\beta_0, \zeta, \gamma) <0 ~~\text{for ~the ~choice~of} ~~ \phi_i, \text{where} ~ i=1,2,\hdots, 6\\
      & \lambda_7(\phi_6=\frac{1}{11},\beta_0, \zeta, \gamma)=\frac{1}{22} \left(10\beta_0+\zeta\beta_0+\sqrt{(10+\zeta)^2} \beta_0 -22\gamma \right)=\frac{1}{11}\beta_0(10+\zeta)-\gamma \\
           & \lambda_8(\phi_6=\frac{1}{11},\beta_0, \zeta, \gamma)=\frac{1}{10} \left(10\beta_0+\zeta\beta_0-\sqrt{(4+\zeta)^2} \beta_0 -10\gamma \right)=-\gamma <0 
   \end{aligned}
\end{equation*}
However, if $\frac{1}{11}\beta_0(10+\zeta)-\gamma <0$ that implies  
\begin{equation*}
  \begin{aligned}
  & \beta_0 < \gamma\left( \frac{11}{10+\zeta} \right):=\beta_0^{\text{crit}} ~~\text{and}  \\
         &  \zeta < \frac{11 \gamma- 10 \beta_0}{\beta_0}=  \frac{11-10 \frac{\beta_0}{\gamma}}{\frac{\beta_0}{\gamma}}= \frac{11-10R_0^{\text{hom}}}{R_0^{\text{hom}}}=1+\left( \frac{1-R_0^{\text{hom}}}{R_0^{\text{hom}}} \right)11 :=\zeta^{\text{crit}}.\\
   \end{aligned}
\end{equation*}
In addition, the function $\phi_6 \mapsto \lambda_7 (\phi_6, \zeta, \beta_0, \gamma)$ has roots 
\begin{equation*}
   \begin{aligned}
       & \phi_{c1}= \frac{1}{11} +\frac{1}{11} \sqrt{ \frac{\gamma (11\gamma-10\beta_0-\zeta\beta_0 )}{\beta_0(\gamma+10\zeta\gamma-11\zeta\beta_0)}}=\frac{1}{11}+ \frac{1}{11} \sqrt{ \frac{ 11(1-R_0^{\text{hom}})-R_0^{\text{hom}}(\zeta-1) )}{R_0^{\text{hom}} \left(11 \zeta(1-R_0^{\text{hom}})-(\zeta-1) \right)}},\\
       & \phi_{c2}= \frac{1}{11} -\frac{1}{11} \sqrt{ \frac{\gamma (11\gamma-10\beta_0-\zeta\beta_0 )}{\beta_0(\gamma+10\zeta\gamma-11\zeta\beta_0)}}=\frac{1}{11}- \frac{1}{11} \sqrt{ \frac{ 11(1-R_0^{\text{hom}})-R_0^{\text{hom}}(\zeta-1) )}{R_0^{\text{hom}} \left(11 \zeta(1-R_0^{\text{hom}})-(\zeta-1) \right)}}.
   \end{aligned}
\end{equation*}

Therefore, according to the restriction $\phi_6 \in \left(0,\frac{1}{10}\right]$, the interval where there will be no epidemic is given by $\mathcal{I}(\zeta,R_0^{\text{hom}})=\left( \max \left\{0, \frac{1}{11}-\Tilde{\phi} \right\}, \min \left\{\frac{1}{10}, \frac{1}{11}+\Tilde{\phi} \right\} \right)$, where 

\begin{equation*}
   \begin{aligned}
       \Tilde{\phi}=\frac{1}{11} \sqrt{ \frac{ 11(1-R_0^{\text{hom}})-R_0^{\text{hom}}(\zeta-1) )}{R_0^{\text{hom}} \left(5 \zeta(1-R_0^{\text{hom}})-(\zeta-1) \right)}}. 
   \end{aligned}
\end{equation*}
Therefore, if $\phi_6 \in \mathcal{I}$, then the DFE is stable, and otherwise, it is unstable.
\end{proof}

\textbf{Theorem \ref{theoremstarbg}}
%\begin{thm} \label{theoremstarbg} %[Main Theorem] 
Let $V=\left\{1,2,3, \hdots, M \right\}$ be the nodes of a \emph{star-background} networks defined by the flux matrix (Eq. \ref{starshapebackgroundgn}), where $\phi_0+(M-2)\phi_1<1, 0<\phi_0 \leq \frac{1}{M-1}$, and $\phi_1>0$. We assume that the infection rate $\beta_i = \beta_0$ at node $i$ for all $i = 1,2,3, \cdots, M-1$ and $\beta_M = \zeta \beta_0$. We also suppose that all the 
nodes are stable ($R_0^{\text{hom}}=\frac{\beta_0}{\gamma}<1$)  except the center $M$. Then we have a minimum  $\zeta^{\text{crit}}$ given by (Eq. \ref{zeta_critic} or \ref{zeta_critic_star}). If $\frac{1}{R_0^{\text{hom}}}<\zeta< \zeta^{\text{crit}}$, there exists an interval $\mathcal{I}= \left( \max \left\{0, \frac{1}{M}-\Tilde{\phi} \right\}, \min \left\{\frac{1}{M-1}, \frac{1}{M}+\Tilde{\phi} \right\} \right)$, where $\Tilde{\phi}$ is given by (Eq. \ref{interv_critic}),
such that if $\phi_0 \in \mathcal{I}$, then the DFE is stable, and otherwise it is unstable. Finally, if  $\zeta<\frac{1}{R_0^{\text{hom}}}$, i.e., $\beta_M < \gamma$, then the DFE is always stable.     
%\end{thm}

\begin{proof}
We first find the next generation matrix (NGM) $\kappa$ to check the stability by finding the conditions on the parameters at the DFE. As the next generation matrix $\kappa= ( \kappa_{ij})_{M\times M}$, where $ \kappa_{ij}=\frac{1}{\gamma} \sum_{j=1}^M \beta_j\phi_{ij}\phi_{kj} \frac{N_i}{N_j^p}$ is given by (Eq. \ref{NGMFormulya}). Similar to Section \ref{The SIR-network model}, we start from the premise that the populations of each node are equal. i.e. $N_i=N>0$ for all $i \in V$. As part of heterogeneous network models, we consider a different infection rate in the central node with the (Eq. \ref{BetaConditionstarshape}). 

Now for any node $j$ we get $N_j^p= \sum_{k=1}^M\phi_{kj}N_k=(\sum_{k=1}^M\phi_{kj})N=1.N=N$, since $\sum_{k=1}^M\phi_{kj}=1$, and therefore $\frac{N_j}{N_j^p}=\frac{N}{N}=1$.
Thus the NGM (Eq. \ref{NGMFormulya}) becomes (Eq. \ref{NGMFormulyaReduction}),  and applying the flux matrix (Eq. \ref{starshapebackgroundgn}) we obtain   
\begin{equation} \label{NGM4} 
\begin{aligned}
&\kappa=\frac{\beta_0}{\gamma} \begin{pmatrix}
  p+c_0  &  p & p & \hdots & q  \\
 p & p+c_0   & p & \hdots & q  \\
                  p & p &  p+c_0 & \hdots & q\\
   \vdots & \vdots & \vdots & \ddots  & \vdots\\
     q  &  q & q & \hdots & p+c_1  \\
       \end{pmatrix} 
       =\frac{\beta_0}{\gamma} \begin{pmatrix}
  p  &  p & p & \hdots & q  \\
           p & p & p & \hdots & q\\
               p & p & p & \hdots & q\\
    \vdots & \vdots  & \vdots & \ddots  & \vdots\\
     q  &  q & q & \hdots & p  \\
       \end{pmatrix} + \frac{\beta_0}{\gamma} \begin{pmatrix}
  c_0  \\
 c_0   \\
 c_0 \\
      \vdots \\
     c_1  \\
       \end{pmatrix}
\end{aligned}
   \end{equation}
which is equal to 
\begin{equation}
           \kappa= \frac{\beta_0}{\gamma} (\mathcal{P} + \mathcal{D}) %\label{StarShapedBackgroundReturnNGM}
\end{equation}
where $p=\zeta \phi_0^2 +\phi_1 (2-2 \phi_0+\phi_1-M \phi_1)$, $q=\phi_0(1-\phi_0+\zeta \left(1-(M-1)\phi_0 \right)) $, $c_0=(1-\phi_0+(M-1)\phi_1)^2$, and $c_1= (m-1)\phi_0^2-\zeta\phi_0^2+\zeta \left(\phi_0(m-1)-1 \right)^2+\phi_1 \left(2\phi_0+(m-1)\phi_1-2 \right)$. We find the entries of the NGM in the following way: 
For $u \in {1,2,3,4,5,...,M-1}$, and $v \in {1,2,3,4,5,...,M-1}$, and $u=v$, the diagonal elements of the NGM $\kappa$ are  
\begin{equation}
\begin{aligned}
   &\kappa_{uu}=   \sum_{j=1}^M \beta_j\phi_{uj}\phi_{uj}= \sum_{j=1}^M \beta_j\phi_{uj}^2\\
       &= \beta_1 \phi_{u1}^2+\beta_2\phi_{u2}^2+ \beta_3 \phi_{u3}^2+ \hdots+\beta_u\phi_{uu}^2 +\hdots +\beta_M\phi_{uM}^2  \\
      & = \beta_0 .\phi_1^2+ \beta_0. \phi_1^2+\beta_0 . \phi_1^2+ \hdots+\beta_0(1-\phi_0-(M-2)\phi_1)^2 +\hdots+ \zeta \beta_0 \phi_0^2 \\ 
       & =\beta_0 \left( \zeta \phi_0^2 + \phi_1^2 (M-2) +(1-\phi_0-(M-2)\phi_1)^2  \right)\\
       & =\beta_0 \left( p+c_0 \right),
 \end{aligned}
\end{equation}
and, if $u=v=M$, 
\begin{equation*}
   \begin{aligned}
       &\kappa_{MM}= \sum_{j=1}^M \beta_j\phi_{Mj}^2\hdots\\
       & = \beta_1\phi_{M1}^2+\beta_2\phi_{M2}^2+ \beta_3 \phi_{M3}^2+...+\beta_M\phi_{MM}^2  \\
      & = \beta_0\phi_0^2+ \beta_0\phi_0^2+\beta_0\phi_0^2+\hdots+ \zeta \beta_0 (1-(M-1)\phi_0)^2  \\ 
   & = \zeta \beta_0 (1-(M-1)\phi_0)^2 +(M-1) \beta_0\phi_0^2 \\ 
   & =(M-1) \beta_0\phi_0^2+\zeta \beta_0 (1-(M-1)\phi_0)^2 \\
    & =\beta_0 \left( M \phi_0^2-\phi_0^2+ \zeta (1-2(M-1)\phi_0+(M-1)^2\phi_0^2 \right) \\ 
 & = \beta_0 \left( M \phi_0^2-\phi_0^2 + \zeta-2 \zeta M\phi_0+2\zeta\phi_0+\zeta M^2\phi_0^2-2\zeta M\phi_0^2+\zeta \phi_0^2 \right)\\ 
 & = \beta_0 \left(\zeta \phi_0^2 + M \phi_0^2-\phi_0^2 + \zeta -2 \zeta M\phi_0 + 2\zeta\phi_0+\zeta M^2\phi_0^2-2\zeta M\phi_0^2 \right)\\ 
  & = \beta_0 \left( \zeta \phi_0^2  +\zeta-2\zeta(M-1)\phi_0+\phi_0^2 \left( M-1-2M\zeta+M^2\zeta \right) \right)\\  
    & = \beta_0 \left( p+ c_1\right).
   \end{aligned}
\end{equation*}
Finally, for any non-diagonal elements of the NGM $\kappa$, when $ u \in {1,2,3,4,5,...,M-1}$, and $ v \in {1,2,3,4,5,...,M-1}$, and $u \not=v$ we have 
\begin{equation*}
\begin{aligned}
        &\kappa_{uv}=   \sum_{j=1}^M \beta_j\phi_{uj}\phi_{vj} \\
       &= \beta_1 \phi_{u1}\phi_{v1}+\beta_2 \phi_{u2}\phi_{v2}+\hdots+ \beta_u \phi_{uu}\phi_{vu}.+\beta_v \phi_{uv}\phi_{vv}...+\beta_M \phi_{uM} \phi_{vM}  \\
         &=\beta_0 \phi_1 (\phi_1) + \beta_0 .\phi_1\phi_1 + \hdots+ 2\beta_0 (1-\phi_0(M-2)\phi_1).\phi_1+\hdots+ \beta_0 \phi_1 \phi_1 + \zeta \beta_0 \phi_0\phi_0 \\
         &= \beta_0 (M-3) \phi_1^2 +2\beta_0 (1-\phi_0(M-2)\phi_1).\phi_1+ \zeta \phi_0^2 \\
                   & =\beta_0 p,
 \end{aligned}
\end{equation*}
and for $u=1,2,3,4, \hdots, M-1$, 
\begin{equation*}
  \begin{aligned}
         &\kappa_{uM}=   \sum_{j=1}^M \beta_j\phi_{uj}\phi_{Mj} \\
       &= \beta_1 \phi_{u1}\phi_{M1}+\beta_2 \phi_{u2}\phi_{M2}+ \beta_3 \phi_{u3}\phi_{M3}+\hdots+\beta_u \phi_{uu} \phi_{Mu} +\hdots+\beta_M \phi_{uM} \phi_{MM}   \\
         &= \beta_0 \phi_1. \phi_0+ \beta_0 \phi_1. \phi_0+\hdots+\beta_0 (1-\phi_0(M-2)\phi_1).\phi_0+...+\zeta \beta_0 \phi_0 (1-(M-1)\phi_0)  \\
         &= \beta_0[ (M-2)\phi_1\phi_0+\beta_0 (1-\phi_0(M-2)\phi_1).\phi_0+\zeta \left(\phi_0- (M-1)\phi_0^2 \right)]  \\
         &= \phi_0(1-\phi_0+\zeta \left(1-(M-1)\phi_0 \right)) \\
         & =\beta_0 q,
  \end{aligned}
\end{equation*}
and also for $v=1,2,3,4, \hdots, M-1$,
\begin{equation*}
   \begin{split}
       &\kappa_{Mv}=   \sum_{j=1}^M \beta_j\phi_{Mj}\phi_{vj} \\
       &= \beta_1 \phi_{M1}\phi_{v1}+\beta_2 \phi_{M2}\phi_{v2}+ \beta_3 \phi_{M3}\phi_{v3}+\hdots+\beta_v \phi_{Mv} \phi_{vv} +\hdots+\beta_M \phi_{MM} \phi_{vM}   \\
         &= \beta_0 \phi_0.0+ \beta_0  \phi_0 . 0+\beta_0  \phi_0 . 0+\hdots+\beta_0 \phi_0 (1-\phi_0)+...+\zeta \beta_0 \phi_0 (1-(M-1)\phi_0)  \\
         &= \beta_0[ \phi_0-\phi_0^2+\zeta \left(\phi_0- (M-1)\phi_0^2 \right)]  \\
         & =\beta_0 q.
   \end{split}
\end{equation*}
Since the next generation matrix $\kappa$ in (Eq. \ref{NGM4}) is symmetric and positive definite, the eigenvalues of $\kappa$ are all real. By doing elementary row or column operations, we see that the rank of the matrix $\mathcal{P}$ is $2$. So all submatrices bigger than $2 \times 2$ have the determinant zero (see, e. g. Friedberg (21)). As $\kappa$ in (Eq. \ref{NGM4}) is the sum of a low-rank matrix and the diagonal matrix $\mathcal{D}$; therefore, we write the characteristic polynomial of $\mathcal{P}+\mathcal{D}$ 

\begin{equation}
 (c_0-\lambda)^{M-2} [ \lambda^2-(c_0+c_1+pM)\lambda+\left(c_1c_0+ p(M-1)c_1 +pc_0+(M-1)(p^2-q^2) \right) ]=0.  
\end{equation}
using the expansion (see Collings \cite{collings} for the eigenvalues $\lambda$.  
For the square matrix, $\kappa=\mathcal{P}+\mathcal{D}$, an  eigenvector $V$ and eigenvalue $\lambda$ make the equation 
\begin{equation}
  \left(\mathcal{P}+\mathcal{D}\right)V= \lambda V
\end{equation}
true, and so for any scalar $R_0^{\text{hom}}$ we have 
\begin{equation}
  R_0^{\text{hom}} \kappa V = R_0^{\text{hom}}  \left(\mathcal{P}+\mathcal{D}\right)V= R_0^{\text{hom}}\lambda V.
\end{equation}
Therefore, $R_0^{\text{hom}}\lambda$ is the eigenvalue of the next generation matrix $\kappa$.  

Therefore, first eigenvalue of $\kappa$ is $\frac{\beta_0}{\gamma}c_0=R_0^{\text{hom}}c_0=R_0^{\text{hom}}(1-\phi_0)^2$ with multiplicity $M-2$, and the other two eigenvalues are determined from the quadratic term 
\begin{equation}\label{quardraticstarbackgr}
   p(\lambda)=\lambda^2-(c_0+c_1+pM)\lambda+\left(c_1c_0+ p(M-1)c_1 +pc_0+(M-1)(p^2-q^2) \right).
\end{equation}
To determine the stability of the NGM $\kappa$, we need to find the conditions on parameters such that all of its eigenvalues are within the unit circle, so that $R_0<1$.
The first eigenvalue 
\begin{equation} \label{eigenvcon1starbackgr}
   R_0^{\text{hom}}(1-\phi_0)^2 <1 
\end{equation}
that automatically meets the condition by hypothesis $R_0^{\text{hom}}<1$, and $\phi_0 \in  (0,1]$ for all $M$ . Even the condition (Eq. \ref{eigenvcon1starbackgr}) holds when $R_0^{\text{hom}}>1$, if $\phi_0$ near enough to $1$. 

The quadratic (Eq. \ref{quardraticstarbackgr}) gives the other two eigenvalues lying within the unit circle, which can be tested by applying the Jury conditions (see \cite{murray2002mathematical}, page $507$). For a quadratic equation $P(\lambda)=\lambda^2+a_1\lambda+a_0=0$, the Jury conditions tells us $P(1)=1+a_1+a_0>0$, $P(-1)=1-a_1+a_0>0$, and $P(0)=a_0<1$. 

As $R_0^{\text{hom}}\lambda:=\lambda'$ is the eigenvalue of $\kappa$, and so substituting $\lambda=\frac{\lambda'}{R_0^{\text{hom}}}$ in (Eq. \ref{quardraticstarbackgr}) we obtain 
\begin{equation}\label{quardratic2starbackgr}
   P(\lambda')=\frac{1}{(R_0^{\text{hom}})^2}[(\lambda')^2-a_1 R_0^{\text{hom}}\lambda'+a_0(R_0^{\text{hom}})^2,
\end{equation}
where $a_0:=\left(c_1c_0+ p(M-1)c_1 +pc_0+(M-1)(p^2-q^2) \right)$, and $a_1:=(c_0+c_1+pM)$. By applying the Jury conditions in (Eq. \ref{quardratic2starbackgr}) reduces to 

\begin{equation} \label{cond1starbackgr}
   \begin{aligned}
  \left\{R_{0}^{\text{hom}} [ \zeta \left( R_{0}^{\text{hom}} (M \phi_0-1)^2-(M-1)\phi_0(M\phi_0-2)-1  \right)  +\phi_0(2-M\phi_0)-1]+1  \right\}>0,
             \end{aligned}
\end{equation}

\begin{equation}
   \begin{aligned}
         \left\{R_{0}^{\text{hom}} [ \zeta \left( R_{0}^{\text{hom}} (M \phi_0-1)^2+(M-1)\phi_0(M\phi_0-2)+1  \right)  -\phi_0(2-M\phi_0)+1]+1  \right\}>0,
             \end{aligned}
\end{equation}

\begin{equation}\label{cond3starbackgr}
   \begin{aligned}
        (R_{0}^{\text{hom}})^2 \zeta (M \phi_0-1)^2 <1.
             \end{aligned}
\end{equation}

The Jury conditions imply (Eqs.\ref{cond1starbackgr}-\ref{cond3starbackgr}) are identical to (Eqs. A.2-A.4) from Stolerman et al. \cite{lucas_dengue_model}, and therefore the rest of the proof follows.

\end{proof}

\newpage

\section{ More flexible network structures of Section \ref{gen_net_models}}

This appendix gives another form of the more flexible network structures of Section \ref{gen_net_models} like the matrix of (Eq. \ref{generalfluxModd}) for a network of an even number of nodes $M$.  Then the flux matrix $\phi_{M \times M}$ is defined by

{
  \centering
  \begin{adjustwidth}{-.5in}{0in} 
\footnotesize
\setlength{\arraycolsep}{2.5pt} % default: 5pt
\medmuskip = 1mu % default: 4mu plus 2mu minus 4mu
\begin{equation}\label{generalfluxMeven}
   \phi_{M \times M}= \begin{pmatrix}
  \phi_d & \phi_1  & \phi_2 & \phi_3 & \hdots & \phi_{\frac{M-2}{2}-1}  & \phi_{\frac{M-2}{2}} & \phi_{\frac{M-2}{2}} & \phi_{\frac{M-2}{2}-1} & \hdots & \phi_4  & \phi_3 & \phi_2 & \phi_1 & \phi_0\\
   
   \phi_1 & \phi_d  & \phi_1 & \phi_2 & \hdots & \phi_{\frac{M-2}{2}-2} & \phi_{\frac{M-2}{2}-1} &  \phi_{\frac{M-2}{2}} & \phi_{\frac{M-2}{2}} & \hdots & \phi_5 & \phi_4 & \phi_3 & \phi_2 & \phi_0 \\
    
 \phi_2 & \phi_1 & \phi_d  & \phi_1 & \hdots & \phi_{\frac{M-2}{2}-3} & \phi_{\frac{M-2}{2}-2} & \phi_{\frac{M-2}{2}-1} &  \phi_{\frac{M-2}{2}}  & \hdots  & \phi_6 & \phi_5 & \phi_4 & \phi_3 & \phi_0 \\
 
 \phi_3 & \phi_2 & \phi_1 & \phi_d  & \hdots & \phi_{\frac{M-2}{2}-4} & \phi_{\frac{M-2}{2}-3} &  \phi_{\frac{M-2}{2}-2} & \phi_{\frac{M-2}{2}-1}  & \hdots & \phi_7 & \phi_6 & \phi_5 & \phi_4 & \phi_0 \\

\vdots & \vdots & \vdots & \vdots & \ddots & \vdots & \vdots & \vdots  & \vdots & \ddots &  \vdots & \vdots & \vdots &  \vdots  & \vdots \\   

\phi_{\frac{M-2}{2}-1} & \phi_{\frac{M-2}{2}-2} & \phi_{\frac{M-2}{2}-3} & \phi_{\frac{M-2}{2}-4}  & \hdots & \phi_{d} & \phi_{1} & \phi_{2} &  \phi_{3}  & \hdots &  \phi_{\frac{M-2}{2}-2} & \phi_{\frac{M-2}{2}-1} & \phi_{\frac{M-2}{2}} &  \phi_{\frac{M-2}{2}} & \phi_0 \\

\phi_{\frac{M-2}{2}} & \phi_{\frac{M-2}{2}-1} & \phi_{\frac{M-2}{2}-2} & \phi_{\frac{M-2}{2}-3}  & \hdots & \phi_{1} & \phi_{d} &  \phi_{1} & \phi_{2}  & \hdots & \phi_{\frac{M-2}{2}-3}  & \phi_{\frac{M-2}{2}-2} & \phi_{\frac{M-2}{2}-1} & \phi_{\frac{M-2}{2}} & \phi_0 \\

\phi_{\frac{M-2}{2}} & \phi_{\frac{M-2}{2}} & \phi_{\frac{M-2}{2}-1} & \phi_{\frac{M-2}{2}-2}  & \hdots & \phi_{2} & \phi_{1} &  \phi_{d} & \phi_{1}  & \hdots & \phi_{\frac{M-2}{2}-4}  & \phi_{\frac{M-2}{2}-3} & \phi_{\frac{M-2}{2}-2} & \phi_{\frac{M-2}{2}-1} & \phi_0 \\

\phi_{\frac{M-2}{2}-1} & \phi_{\frac{M-2}{2}} & \phi_{\frac{M-2}{2}} & \phi_{\frac{M-2}{2}-1}  & \hdots & \phi_{3} & \phi_{2} &  \phi_{1} & \phi_{d}  & \hdots & \phi_{\frac{M-2}{2}-5}  & \phi_{\frac{M-2}{2}-4} & \phi_{\frac{M-2}{2}-3}& \phi_{\frac{M-2}{2}-2} & \phi_0 \\

\vdots & \vdots & \vdots & \vdots & \ddots & \vdots & \vdots & \vdots  & \vdots & \ddots &  \vdots & \vdots & \vdots &  \vdots  & \vdots \\   

\phi_4 & \phi_5  & \phi_6 & \phi_7 & \hdots & \phi_{\frac{M-2}{2}-2} & \phi_{\frac{M-2}{2}-3} &  \phi_{\frac{M-2}{2}-4} & \phi_{\frac{M-2}{2}-5} & \hdots & \phi_d & \phi_1 & \phi_2 & \phi_3 & \phi_0 \\
 
\phi_3 & \phi_4  & \phi_5 & \phi_6 & \hdots & \phi_{\frac{M-2}{2}-1} & \phi_{\frac{M-2}{2}-2} &  \phi_{\frac{M-2}{2}-3} & \phi_{\frac{M-2}{2}-4} & \hdots & \phi_1 & \phi_d & \phi_1 & \phi_2 & \phi_0 \\

\phi_2 & \phi_3  & \phi_4 & \phi_5 & \hdots & \phi_{\frac{M-2}{2}} & \phi_{\frac{M-2}{2}-1} &  \phi_{\frac{M-2}{2}-2} & \phi_{\frac{M-2}{2}-3} & \hdots & \phi_2 & \phi_1 & \phi_d & \phi_1 & \phi_0 \\
  
\phi_1 & \phi_2  & \phi_3 & \phi_4 & \hdots & \phi_{\frac{M-2}{2}} & \phi_{\frac{M-2}{2}} &  \phi_{\frac{M-2}{2}-1} & \phi_{\frac{M-2}{2}-2} & \hdots & \phi_3 & \phi_2 & \phi_1 & \phi_d & \phi_0 \\

\phi_0 & \phi_0 & \phi_0 & \phi_0 & \hdots & \phi_0 & \phi_0 & \phi_0 & \phi_0 & \hdots & \phi_0 & \phi_0 & \phi_0 & \phi_0  & \phi_{MM}
\end{pmatrix},
\end{equation}
 \end{adjustwidth}
}

where $\phi_d:=1-2(\phi_0+\phi_1+ \hdots + \phi_{\frac{M-2}{2}-1}+\phi_{\frac{M-2}{2}})-\phi_0$ and $\phi_{MM}=1-(M-1)\phi_0$.

Since the general network is complicated, we first find the \emph{star-triangle} (Figure \ref{fig:diff_networks}) networks by extending the \emph{star-shaped} (Figure \ref{fig:diff_networks}) networks model prescribed in Section \ref{StarShapeSection}. Later, in the arbitrary \emph{star-shaped} (Figure \ref{fig:diff_networks}) networks, we connect the rest of the nodes by a different flux ($\phi_1$). This process results in the \emph{star-background} (Figure \ref{fig:diff_networks}) networks, a more general class of networks compared to the \emph{fully connected} (Figure \ref{fig:diff_networks}) heterogeneous networks (Theorem \ref{theoremfullcon}) and \emph{star-shaped} networks (Theorem \ref{theoremstar}).  Our results show that the \emph{star-triangle} (Figure \ref{fig:diff_networks}) and the \emph{star-background} (Figure \ref{fig:diff_networks}) networks belong to the family of networks in which the same critical value $\zeta$ (Eq. \ref{zeta_critic}) and interval $\mathcal{I}$ (Eq. \ref{interval}) emerge for the heterogeneous node $M$. The current section will provide a \emph{more general network} (Figure \ref{fig:StarToGeneral}) model for controlling epidemic outbreaks. In particular,  a \emph{star-background} (Figure \ref{fig:diff_networks}) network mimics a city where a central node receives a daily flux of people. Finally, we do a conjecture for the \emph{more general network} (Figure \ref{fig:StarToGeneral}) model and justify it for node $M=9$ in the proposition \ref{general_M9}. The proof is given in the following.

\textbf{Proposition \ref{general_M9}}

%\begin{prop} \label{general_M9}
Let $V=\left\{1,2, \hdots, 9\right\}$ be the nodes of a \emph{general networks} defined by the symmetric flux matrix (Eq. \ref{generalfluxModd}). We assume that the infection rate $\beta_i = \beta_0$ at node $i$ for all $i = 1,2,3, \hdots, 9$ and $\beta_9 = \zeta \beta_0$. We also suppose that all the nodes are stable i.e. $R_0^{\text{hom}}=\frac{\beta_0}{\gamma}<1$  except the center node $M$. Then the critical value of $\zeta$ called $\zeta^{\text{crit}}$ is given by 
\begin{equation*} 
    \zeta^{\text{crit}}= 1+\left( \frac{1-R_0^{\text{hom}}}{R_0^{\text{hom}}} \right)9.
\end{equation*}
Alternatively, the infection threshold of $\beta_0$, $\beta_0^{\text{crit}}$ is given by 
\begin{equation*} 
    \beta_0^{\text{crit}}= \gamma\left( \frac{9}{8+\zeta} \right).
\end{equation*}
If $\frac{1}{R_0^{\text{hom}}}<\zeta< \zeta^{\text{crit}}$, and taking into account the restriction $\phi_{ij}$, where $i, j = 1,2,3, \hdots, 9$, then the interval where there will be no epidemic is given by $\mathcal{I}(\zeta,R_0^{\text{hom}})=\left( \max \left\{0, \frac{1}{9}-\Tilde{\phi} \right\}, \min \left\{\frac{1}{8}, \frac{1}{9}+\Tilde{\phi} \right\} \right)$, where $\Tilde{\phi}$ is given by 

\begin{equation*}
    \begin{aligned}
        \Tilde{\phi}=\frac{1}{9} \sqrt{ \frac{ 9(1-R_0^{\text{hom}})-R_0^{\text{hom}}(\zeta-1) )}{R_0^{\text{hom}} \left(9 \zeta(1-R_0^{\text{hom}})-(\zeta-1) \right)}}
    \end{aligned}
\end{equation*}

such that if $\phi_6 \in \mathcal{I}$, then the DFE is stable, and, otherwise, it is unstable. 
%\end{prop}

\begin{proof}
To establish the stability criteria for DFE, we may perform a local stability analysis of DFE. The Jacobian matrix at DFE corresponding to the dynamical system of the general network of the flux matrix (Eq. \ref{generalfluxModd}) is symmetric.  As Symmetric matrices have only real eigenvalues, the eigenvalues of the above Jacobian matrix are real. For simplicity, we calculate the eigenvalues using the symbolic packages of the software Mathematica which are given by 

\begin{equation*}
   \begin{aligned}
      & \lambda_1= -\gamma+\beta_0(-1+\phi_0+2\phi_1+4\phi_2+2\phi_3)^2 \\
        & \lambda_2= -\gamma+\beta_0(-1+\phi_0+2\phi_1+4\phi_2+2\phi_3)^2 \\
        & \lambda_3= -\gamma+\beta_0(-1+\phi_0+4\phi_1+4\phi_3)^2 \\
   & \lambda_4=-\gamma-2 \sqrt{2} \sqrt{\beta_0^2(\phi_1-\phi_3)^2(-1+\phi_0+2\phi_1+2\phi_2+2\phi_3+2\phi_4)^2}+ \beta_0(1+\phi_0^2   +6 \phi_1^2-4\phi_2+4 \phi_2^2-4\phi_3+8 \phi_2 \phi_3\\
   & +6\phi_3^2-4\phi_4+8\phi_2\phi_4+8\phi_3\phi_4+4\phi_4^2+4\phi_1(-1+2\phi_2+\phi_3+2\phi_4)+\phi_0(-2+4\phi_1+4\phi_2+4 \phi_3+4\phi_4) \\
& \lambda_5=-\gamma+2 \sqrt{2} \sqrt{\beta_0^2(\phi_1-\phi_3)^2(-1+\phi_0+2\phi_1+2\phi_2+2\phi_3+2\phi_4)^2}+ \beta_0(1+\phi_0^2+6 \phi_1^2-4\phi_2+4 \phi_2^2-4\phi_3+8 \phi_2 \phi_3\\
   & +6\phi_3^2-4\phi_4+8\phi_2\phi_4+8\phi3\phi_4+4\phi_4^2+4\phi_1(-1+2\phi_2+\phi_3+2\phi_4)+\phi_0(-2+4\phi_1+4\phi_2+4 \phi_3+4\phi_4) \\
&   \lambda_6=-\gamma+\frac{1}{2} \beta_0 \left(1+\zeta-2 \phi_0 -16 \zeta \phi_0 + 9\phi_0^2 + 72 \zeta \phi_0^2 + B \right)\\
&   \lambda_7=-\gamma+\frac{1}{2} \beta_0 \left(1+\zeta-2 \phi_0 -16 \zeta \phi_0 + 9\phi_0^2 + 72 \zeta \phi_0^2 - B \right)\\
   \end{aligned}
\end{equation*}
  where $ B:=\sqrt{(1-2\phi_0+9\phi_0^2)^2 +\zeta^2 \left(1-16\phi_0+72\phi_0^2 \right)^2+2\zeta \left(-1+18\phi_0-49\phi_0^2-288\phi_0^3+ 648 \phi_0^4 \right)}$. Here $\lambda_4$ and $\lambda_5$ both have multiplicity 2. Furthermore, since all eigenvalues are real and hence B must be nonnegative for the choice of $\phi_{ij}$ and for any $\zeta$, $B$ is well defined and positive real number. We want to find conditions for which all eigenvalues are negative (LAS DFE) or at least one eigenvalue is positive (unstable DFE). We can see that all $\lambda_i$'s are negative for all $ i=1,2, \hdots,5$ because $\phi_{ij}>0$. It is thus sufficient to check the eigenvalues $\lambda_6$ and $\lambda_7$, which are more complicated. However, a sufficient condition for their stability is the following: since $\lambda_6>\lambda_7$, it is thus sufficient to check the stability of $\lambda_6$. It can be easily verified that $\frac{1}{9}$ is the critical point of both eigenvalues $\lambda_6$ and $\lambda_7$, since  $\frac{\partial \lambda_6}{\partial \phi_6} (\phi_0=\frac{1}{9},\beta_0, \zeta, \gamma)=0$, and $\frac{\partial \lambda_7}{\partial \phi_6} (\phi_0=\frac{1}{9},\beta_0, \zeta, \gamma)=0$. We can find that 
\begin{equation*}
   \begin{aligned}
 &  \frac{\partial^2 \lambda_6}{\partial \phi_0^2} (\phi_0=\frac{1}{9},\beta_0, \zeta, \gamma) =  \frac{144 \beta_0 (\zeta-1)^2}{8+\zeta}>0\\
& \frac{\partial^2 \lambda_7}{\partial \phi_0^2} (\phi_0=\frac{1}{9},\beta_0, \zeta, \gamma) = \frac{1458 \zeta \beta_0}{8+\zeta}>0\\
   \end{aligned}
\end{equation*}
and so $\lambda_6$, and $\lambda_7$ have reached to their minimum value when $\phi_0=\frac{1}{9}$. Moreover, 
\begin{equation*}
\begin{aligned}
      & \lambda_i (\phi_0,\beta_0, \zeta, \gamma) <0 ~~\text{for ~the ~choice~of} ~~ \phi_i, \text{where} ~ i=1,2,\hdots, 5\\
      & \lambda_6(\phi_0=\frac{1}{9},\beta_0, \zeta, \gamma)=\frac{1}{18} \left(8\beta_0+\zeta\beta_0+\sqrt{(8+\zeta)^2} \beta_0 -18\gamma \right)=\frac{1}{9}\beta_0(8+\zeta)-\gamma \\
           & \lambda_7(\phi_0=\frac{1}{9},\beta_0, \zeta, \gamma)=\frac{1}{6} \left(6\beta_0+\zeta\beta_0-\sqrt{(6+\zeta)^2} \beta_0 -14\gamma \right)=-\gamma <0 
    \end{aligned}
\end{equation*}
However, if $\frac{1}{9}\beta_0(8+\zeta)-\gamma <0$ that implies  $ \beta_0 < \gamma\left( \frac{9}{8+\zeta} \right):=\beta_0^{\text{crit}}$. Alternatively, we have $\zeta < \frac{9 \gamma- 8 \beta_0}{\beta_0}=  \frac{9-8 \frac{\beta_0}{\gamma}}{\frac{\beta_0}{\gamma}}= \frac{9-8R_0^{\text{hom}}}{R_0^{\text{hom}}}=1+\left( \frac{1-R_0^{\text{hom}}}{R_0^{\text{hom}}} \right)9 :=\zeta^{\text{crit}}.$
In addition, the function $\phi_0 \mapsto \lambda_6 (\phi_0, \zeta, \beta_0, \gamma)$ has roots 
\begin{equation*}
   \begin{aligned}
       & \phi_{c1}= \frac{1}{9} +\frac{1}{9} \sqrt{ \frac{\gamma (9\gamma-8\beta_0-\zeta\beta_0 )}{\beta_0(\gamma+8\zeta\gamma-9\zeta\beta_0)}}=\frac{1}{9}+ \frac{1}{9} \sqrt{ \frac{ 9(1-R_0^{\text{hom}})-R_0^{\text{hom}}(\zeta-1) )}{R_0^{\text{hom}} \left(9 \zeta(1-R_0^{\text{hom}})-(\zeta-1) \right)}},\\
       & \phi_{c2}= \frac{1}{9} -\frac{1}{9} \sqrt{ \frac{\gamma (9\gamma-8\beta_0-\zeta\beta_0 )}{\beta_0(\gamma+8\zeta\gamma-9\zeta\beta_0)}}=\frac{1}{9}- \frac{1}{9} \sqrt{ \frac{ 9(1-R_0^{\text{hom}})-R_0^{\text{hom}}(\zeta-1) )}{R_0^{\text{hom}} \left(9 \zeta(1-R_0^{\text{hom}})-(\zeta-1) \right)}}.
   \end{aligned}
\end{equation*}
Therefore, taking into account the restriction $\phi_0 \in \left(0,\frac{1}{8}\right]$, the interval where there will be no epidemic is given by $\mathcal{I}(\zeta,R_0^{\text{hom}})=\left( \max \left\{0, \frac{1}{9}-\Tilde{\phi} \right\}, \min \left\{\frac{1}{8}, \frac{1}{9}+\Tilde{\phi} \right\} \right)$, where 

\begin{equation*}
   \begin{aligned}
       \Tilde{\phi}=\frac{1}{9} \sqrt{ \frac{ 9(1-R_0^{\text{hom}})-R_0^{\text{hom}}(\zeta-1) )}{R_0^{\text{hom}} \left(9 \zeta(1-R_0^{\text{hom}})-(\zeta-1) \right)}}. 
   \end{aligned}
\end{equation*}
Therefore, if $\phi_6 \in \mathcal{I}$, then the DFE is stable and otherwise it is unstable.
\end{proof}

 Proposition \ref{general_M9} holds for other networks, for example, Theorems on arbitrary size \emph{Star-shaped}, \emph{Fully connected}, \emph{Star-background} networks, and other networks \emph{star-cycle}, \emph{star-support} (Figure \ref{fig:StarToGeneral}) networks also follow by the general flux matrix (Eq. \ref{generalfluxModd}) or (Eq. \ref{generalfluxMeven}). Therefore, we present a conjecture in the following, which contains our findings in Proposition \ref{general_M9}, and we will also justify it by simulations.  

% We are not presenting \subsection{Alternative general network}

%\textbf{Proposition \ref{alt_general_M7}}
\textbf{Proposition 7} %\label{alt_general_M7}
%\begin{prop} \label{alt_general_M7}
Let $V=\left\{1,2, \hdots, 7\right\}$ be the nodes of a \emph{general network} defined by the symmetric flux matrix (Eq. \ref{alt_generalflux}). We assume that the infection rate $\beta_i = \beta_0$ at node $i$ for all $i = 1,2,3, \hdots, 7$ and $\beta_7 = \zeta \beta_0$. We also suppose that all the nodes are stable ($R_0^{\text{hom}}=\frac{\beta_0}{\gamma}<1$)  except the center node $M$. Then the critical value of $\zeta$ called $\zeta^{\text{crit}}$ is given by 
\begin{equation} 
  \zeta^{\text{crit}}= 1+\left( \frac{1-R_0^{\text{hom}}}{R_0^{\text{hom}}} \right)7.
\end{equation}
Alternatively, the infection threshold of $\beta_0$, $\beta_0^{\text{crit}}$ is given by 
\begin{equation} 
  \beta_0^{\text{crit}}= \gamma\left( \frac{7}{6+\zeta} \right).
\end{equation}
If $\frac{1}{R_0^{\text{hom}}}<\zeta< \zeta^{\text{crit}}$, and taking into account the restriction $\phi_{ij}$, where $i, j = 1,\hdots, 7$, then the interval where there will be no epidemic is given by $\mathcal{I}(\zeta,R_0^{\text{hom}})=\left( \max \left\{0, \frac{1}{7}-\Tilde{\phi} \right\}, \min \left\{\frac{1}{6}, \frac{1}{7}+\Tilde{\phi} \right\} \right)$, where $\Tilde{\phi}$ is given by 

\begin{equation*}
  \begin{aligned}
      \Tilde{\phi}=\frac{1}{7} \sqrt{ \frac{ 7(1-R_0^{\text{hom}})-R_0^{\text{hom}}(\zeta-1) )}{R_0^{\text{hom}} \left(7 \zeta(1-R_0^{\text{hom}})-(\zeta-1) \right)}}
  \end{aligned}
\end{equation*}

such that if $\phi_6 \in \mathcal{I}$, then the DFE is stable, and otherwise it is unstable. 
%\end{prop}

\begin{proof}
To establish the stability criteria for the DFE in this particular case, we may perform a local stability analysis of the DFE. We see that the Jacobian matrix at the DFE corresponding to the dynamical system of this network's flux matrix (Eq. \ref{alt_generalflux}) is symmetric. Since symmetric matrices have only real eigenvalues, the eigenvalues of the above Jacobian matrix are real. For simplicity, we calculate the eigenvalues using the symbolic packages of the software Mathematica, which are given by 

\begin{equation*}
   \begin{aligned}
    & \lambda_1= -\gamma+\beta_0(-1+2\phi_1+\phi_2+\phi_3+\phi_4+\phi_5+\phi_6)^2 \\
   & \lambda_2=-\gamma+\beta_0(-1+3\phi_2+\phi_3+\phi_4+\phi_5+\phi_6)^2 \\
    & \lambda_3=-\gamma+\beta_0(-1+4\phi_3+\phi_4+\phi_5+\phi_6)^2 \\
       & \lambda_4=-\gamma+\beta_0(-1+5\phi_4+\phi_5+\phi_6)^2 \\
      & \lambda_5=-\gamma+\beta_0(-1+6\phi_5+\phi_6)^2 \\
         &   \lambda_6=-\gamma+\frac{1}{2} \beta_0 \left(1+\zeta-2 \phi_6 -12 \zeta \phi_6 + 7\phi_6^2 + 42 \zeta \phi_6^2 + B \right)\\
            &   \lambda_7=-\gamma+\frac{1}{2} \beta_0 \left(1+\zeta-2 \phi_6 -12 \zeta \phi_6 + 7\phi_6^2 + 42 \zeta \phi_6^2 -B \right)\\
   \end{aligned}
\end{equation*}
  where $ B:=\sqrt{(1-2\phi_6+7\phi_6^2)^2 +\zeta^2 \left(1-12\phi_6+42\phi_6^2 \right)^2+2\zeta \left(-1+14\phi_6-25\phi_6^2-168\phi_6^3+ 294 \phi_6^4 \right)}$. 
Since all eigenvalues are real, B must be nonnegative for the choice of $\phi_{ij}$, and for any $\zeta$, $B$ is a positive real number. We want to find conditions for which all eigenvalues are negative (LAS DFE), or at least one eigenvalue is positive (unstable DFE). We see that for all $\lambda_i$'s are negative for all $ i=1,2, \hdots, 5$ since all $\phi_{ij}>0$.  It is thus sufficient to check the critical points of $\lambda_6$ and $\lambda_7$, which are more complicated. However, a sufficient condition for stability is the following: since $\lambda_6>\lambda_7$, then we do have stability if the minimum value of $\lambda_6$ is negative. It can be easily verified that $\frac{1}{7}$ is the critical point of both eigenvalues $\lambda_6$, and $\lambda_7$, since  $\frac{\partial \lambda_6}{\partial \phi_6} (\phi_6=\frac{1}{7},\beta_0, \zeta, \gamma)=0$, and $\frac{\partial \lambda_7}{\partial \phi_6} (\phi_6=\frac{1}{7},\beta_0, \zeta, \gamma)=0$.    
We can find that 
\begin{equation*}
   \begin{aligned}
&  \frac{\partial^2 \lambda_6}{\partial \phi_6^2} (\phi_6=\frac{1}{7},\beta_0, \zeta, %\gamma) =  \frac{84 \beta_0 (\zeta-1)^2}{6+\zeta}>0\\
& \frac{\partial^2 \lambda_7}{\partial \phi_6^2} (\phi_6=\frac{1}{7},\beta_0, \zeta, \gamma) %= \frac{686 \zeta \beta_0}{6+\zeta}>0\\
   \end{aligned}
\end{equation*}
and so $\lambda_6$, and $\lambda_7$ have reached to the minimum value when $\phi_6=\frac{1}{7}$. Moreover, 
\begin{equation*}
\begin{aligned}
      & \lambda_i (\phi_6,\beta_0, \zeta, \gamma) <0 ~~\text{for ~the ~choice~of} ~~ \phi_i, \text{where} ~ i=1,2,\hdots, 5\\
      & \lambda_6(\phi_6=\frac{1}{7},\beta_0, \zeta, \gamma)=\frac{1}{14} \left(6\beta_0+\zeta\beta_0+\sqrt{(6+\zeta)^2} \beta_0 -14\gamma \right)=\frac{1}{7}\beta_0(6+\zeta)-\gamma \\
           & \lambda_7(\phi_6=\frac{1}{7},\beta_0, \zeta, \gamma)=\frac{1}{6} \left(6\beta_0+\zeta\beta_0-\sqrt{(6+\zeta)^2} \beta_0 -14\gamma \right)=-\gamma <0 
   \end{aligned}
\end{equation*}
However, if $\frac{1}{7}\beta_0(6+\zeta)-\gamma <0$ and this implies  $ \beta_0 < \gamma\left( \frac{7}{6+\zeta} \right):=\beta_0^{\text{crit}}$. Alternatively, we have $\zeta < \frac{7 \gamma- 6 \beta_0}{\beta_0}=  \frac{7-6 \frac{\beta_0}{\gamma}}{\frac{\beta_0}{\gamma}}= \frac{7-6R_0^{\text{hom}}}{R_0^{\text{hom}}}=1+\left( \frac{1-R_0^{\text{hom}}}{R_0^{\text{hom}}} \right)7 :=\zeta^{\text{crit}}.$
In addition, the function $\phi_6 \mapsto \lambda_6 (\phi_6, \zeta, \beta_0, \gamma)$ has roots 
\begin{equation*}
   \begin{aligned}
       & \phi_{c1}= \frac{1}{7} +\frac{1}{7} \sqrt{ \frac{\gamma (7\gamma-6\beta_0-\zeta\beta_0 )}{\beta_0(\gamma+6\zeta\gamma-7\zeta\beta_0)}}=\frac{1}{7}+ \frac{1}{7} \sqrt{ \frac{ 7(1-R_0^{\text{hom}})-R_0^{\text{hom}}(\zeta-1) )}{R_0^{\text{hom}} \left(7 \zeta(1-R_0^{\text{hom}})-(\zeta-1) \right)}},\\
       & \phi_{c2}= \frac{1}{7} -\frac{1}{7} \sqrt{ \frac{\gamma (7\gamma-6\beta_0-\zeta\beta_0 )}{\beta_0(\gamma+6\zeta\gamma-7\zeta\beta_0)}}=\frac{1}{7}- \frac{1}{7} \sqrt{ \frac{ 7(1-R_0^{\text{hom}})-R_0^{\text{hom}}(\zeta-1) )}{R_0^{\text{hom}} \left(7 \zeta(1-R_0^{\text{hom}})-(\zeta-1) \right)}}.
   \end{aligned}
\end{equation*}
Therefore, according to the restriction $\phi_6 \in \left(0,\frac{1}{6}\right]$, the interval where there will be no epidemic is given by $\mathcal{I}(\zeta,R_0^{\text{hom}})=\left( \max \left\{0, \frac{1}{7}-\Tilde{\phi} \right\}, \min \left\{\frac{1}{6}, \frac{1}{7}+\Tilde{\phi} \right\} \right)$, where 

\begin{equation*}
   \begin{aligned}
       \Tilde{\phi}=\frac{1}{7} \sqrt{ \frac{ 7(1-R_0^{\text{hom}})-R_0^{\text{hom}}(\zeta-1) )}{R_0^{\text{hom}} \left(7 \zeta(1-R_0^{\text{hom}})-(\zeta-1) \right)}}. 
   \end{aligned}
\end{equation*}
Therefore, if $\phi_6 \in \mathcal{I}$, then the DFE is stable, and otherwise, it is unstable.
\end{proof}

%Also if  $\zeta<\frac{1}{R_0^{\text{hom}}}$, i.e., $\beta_M < \gamma$, in (Proposition \ref{alt_general_M7}), DFE is always stable. Proposition \ref{alt_general_M7} also holds for $M=3,4,5,6,7$ and higher. 

\newpage
\section{More numerical simulations} 
This appendix shows the threshold convergence through colormaps for networks: general star-class, unidirectional, and bidirectional cycles as the network size $M$ increases.

\begin{figure}[!htbp] % [!htbp] fix the image in the same page  
\centering
  \begin{adjustwidth}{0.25in}{0in}  
   \includegraphics[width=.95\textwidth]{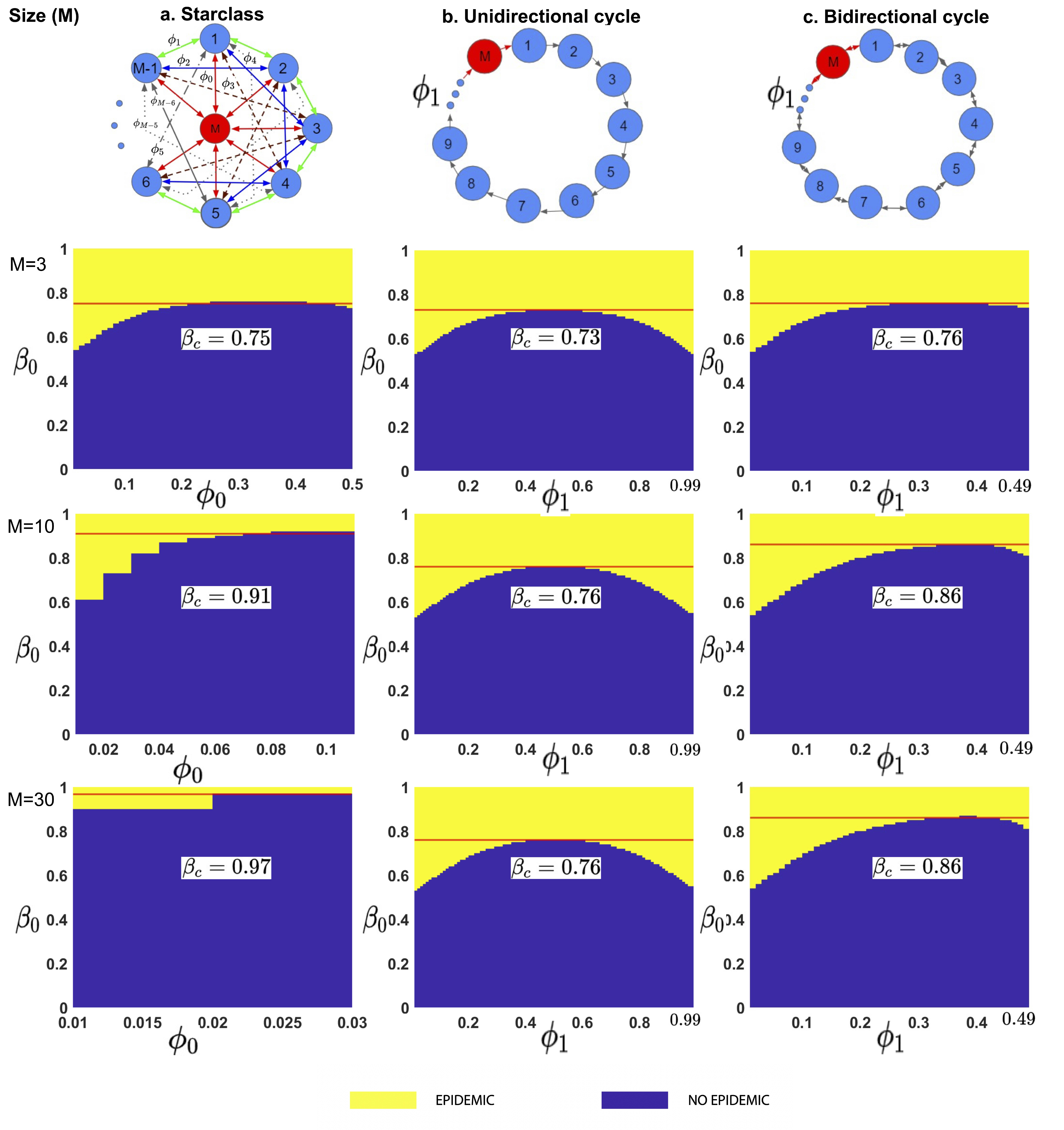}
 \end{adjustwidth}
   \caption{\textbf{Epidemic thresholds as network size $M$ increases:} We present colormaps for $M =3, 10$ and $30$, where $\zeta=2$ and the same data as in Fig \ref{fig:Pcolor_panel_1} for (a) the general star-class network, (b) unidirectional cycles, (c) bidirectional cycles. For each simulation, we assume that every node has the same infection rate $\beta_0$, except the center node $M$, where the infection rate is $\beta_M= \zeta \beta_0$.}
  \label{fig:Pcolor_thersholds_compare_with_M_increased} 
\end{figure}

\end{document}